\documentclass[a4paper]{article}

\usepackage{dsfont,amsmath,amsthm,amssymb}
\usepackage{subfigure}
\usepackage{tikz}
\usetikzlibrary{shapes.symbols,arrows}

\usepackage{fullpage}

\newtheorem{theorem}{Theorem}
\newtheorem{proposition}{Proposition}
\newtheorem{lemma}{Lemma}

\DeclareMathOperator{\Rem}{Rem}

\DeclareMathOperator{\Step}{Step}
\DeclareMathOperator{\Red}{Red}

\DeclareMathOperator{\Start}{Start}
\DeclareMathOperator{\End}{End}
\DeclareMathOperator{\Edges}{Edges}

\newcommand{\EP}{{ep}}

\newcommand{\nc}{{\mathrm{ nc}}}
\newcommand{\cc}{{\mathrm {c}}}

\newcommand{\IR}{\mathds{R}}
\newcommand{\IN}{\mathds{N}}

\newcommand{\Bcnc}{B_{\mathrm {c}}}		%
\newcommand{\Bunu}{\tilde{B}}    	%
\newcommand{\IRmax}{\overline{\IR}}

\newcommand{\ito}{{i\!\to}}
\newcommand{\Pa}{{\mathcal{W}}}

\newcommand{\realrem}[2]{\mathbf{N}_{\geqslant {#1}}^{({#2})}}

\renewcommand{\le}{\leqslant}
\renewcommand{\leq}{\leqslant}
\renewcommand{\ge}{\geqslant}
\renewcommand{\geq}{\geqslant}

\begin{document}

\title{New Transience Bounds for Long Walks}

\author{Bernadette Charron-Bost\textsuperscript{1} \and
Matthias F\"ugger\textsuperscript{2} \and
Thomas Nowak\textsuperscript{3}
}

\date{
\textsuperscript{1} CNRS, LIX, \'Ecole polytechnique, 91128 Palaiseau, France\\
\textsuperscript{2} ECS Group, TU Wien, 1040 Wien, Austria\\
\textsuperscript{3} LIX, \'Ecole polytechnique, 91128 Palaiseau, France
}

\maketitle

\abstract{%
Linear max-plus systems describe the behavior of a large variety of
     complex systems.
It is known that these systems show a periodic behavior after an
     initial transient phase.
Assessment of the length of this transient phase provides important
     information on complexity measures of such systems, and so is
     crucial in system design.
We identify relevant parameters in a graph representation of these
     systems and propose a modular strategy to derive new upper bounds
     on the length of the transient phase.
By that we are the first to give asymptotically tight and potentially
     subquadratic transience bounds.
We use our bounds to derive new complexity results, in particular in
     distributed computing.
}

\section{Introduction}

The behavior of many complex systems can be described by a sequence of
     $N$-dimensional vectors~$x(n)$ that satisfy a recurrence relation of the
     form 
	\begin{align}
		\forall n\geqslant 1\ \ \forall i\in \{1,\dots,N\}\ \,: \ \ \
x_i(n) = \max_{j\in {\bf N}_i}\big(x_j(n-1)+A_{i,j}\big) 
		\label{eq:sys_a}
	\end{align}
where the $A_{i,j}$ are real numbers, and the ${\bf N}_i$ are subsets of 
     $\{1,\dots,N\}$.
For instance,~$x_i(n)$ may represent the time of the $n$th occurence of a certain
event~$i$ and the~$A_{i,j}$ the required time lag between the $(n-1)$th
occurence of~$j$ and the $n$th occurence of~$i$.
Notable examples are transportation and automated manufacturing
     systems~\cite{GBO98,CDQV85,DS90}, network
     synchronizers~\cite{malka:rajs,even:rajs}, and cyclic
     scheduling~\cite{HM95}.
Recently, Charron-Bost et al.~\cite{fr:sirocco,pr:sirocco} have shown
     that it also  encompasses the behavior of an important class of
     distributed algorithms, namely {\em link
     reversal algorithms\/}~\cite{GB87}, which can be used to
     solve a variety of problems~\cite{walter:welch} like routing~\cite{GB87},
	 scheduling~\cite{BG89}, distributed queuing~\cite{TH06,AGM10},
     or resource allocation~\cite{CM}.

Interestingly, recurrences of the form~\eqref{eq:sys_a} are linear in 
	the {\em max-plus algebra} (e.g., \cite{workplus}).
The fundamental theorem in max-plus linear algebra---an analog of the Perron-Frobenius 
	theorem---states that the sequence of powers of an irreducible max-plus matrix becomes 
	periodic after a finite index called the {\em transient\/} of the matrix.    
As an immediate corollary, any linear max-plus system with irreducible system matrix 
	is periodic from some index, called the {\em transient\/} of the system, which clearly 
	depends on the system's initial vector and  is at most equal to the transient 
	of the matrix of the system.  
For all the above mentioned applications, the study of the transient
	 plays a key role in characterizing the system performances:
For example, in the case of link reversal routing, the system transient is equal to 
	the time complexity of the routing algorithm.
Besides that, understanding matrix and system transients is of
	interest on its own for the theory of max-plus algebra.
	
Hartmann and Arguelles~\cite{hartmann:arguelles} have shown that
	the transients of matrices and linear systems are computable in polynomial 
	time.
However, their algorithms provide no analysis of the transient phase, 
	and do not hint at the parameters that influence matrix and system transients.
Conversely, upper bounds involving these parameters help to
	predict the duration of the transient phase, and to define
	strategies to reduce transients during system design.
From both numerical and methodological viewpoints, it is therefore important to
	determine accurate transience bounds. 
	
In this paper, we present two upper bounds on the transients of linear max-plus systems.
Our approach is graph-theoretic in nature: The problem of bounding from above
	the transient can be reduced to the study of walks in a specific graph.
More precisely, for every max-plus matrix~$A$, one considers the weighted directed
	graph~$G$ whose adjacency matrix is~$A$, and its {\em critical
    subgraph\/} which consists of the {\em critical cycles\/}, namely those
    cycles with maximal average weight.
The entries of the max-plus matrix power~$A^{\otimes n}$ are equal to
	the maximum weights of walks in~$G$ of length~$n$ between two fixed nodes, and 
	when redefining the weights of walks in a way that respects initial vector~$v$, 
	the entries of~$A^{\otimes n} \otimes v$ are maximum weights of walks
	 of length~$n$ starting from a fixed node.
The periodicity of matrix powers and linear systems stems from the fact that eventually the weights of
	critical cycles dominate the maximum weight walks.
	
We present a general graph-based strategy  whose core idea is a walk reduction~$\Red_{d,k}$, 
	which removes cycles from a walk while assuring that its length remains in the 
	same residue class modulo~$d$, and that node~$k$ rests on the walk.
The key property of~$\Red_{d,k}$ is an upper bound on the length of
	the reduced walk that is linear both in~$d$ and the number of
	nodes in the graph.	
The following step in our strategy consists in completing reduced walks with
	critical cycles of appropriate lengths.
For that, we propose two methods, namely the {\em repetitive\/} method and the 
	{\em explorative\/} method.
In the first one, the visit of the critical subgraph is confined to repeatedly
     follow only one closed walk whereas the second one consists in exploring
	one whole strongly connected component of the  critical subgraph.
That leads us to give two upper bounds on the transients of linear systems,
 	namely the {\em repetitive bound} and the {\em explorative bound}, which
	are incomparable in general.
We show that in the case of integer matrices, for a given initial
	vector, both our transience bounds for a $A$-linear system are both in $O(\lVert A \rVert \cdot N^3)$, 
	where $\lVert A\rVert$ denotes the difference of the maximum and minimum finite entries of~$A$.
We also show  that this is asymptotically tight. 
		
Another contribution of this paper concerns the relationship between
	matrix and system transients:  We prove that the transient of an
	$N\times N$ matrix~$A$ coincides with the  transient of an~$A$-linear system 
	with an initial vector whose norm is at most quadratic in~$N$, provided the latter 
	transient is sufficiently large.
In addition to shedding new light on transients, this result provides
	two upper bounds on matrix transients.
		
The problem of bounding the transients has already been studied
	(e.g.,~\cite{hartmann:arguelles,BG00,Sot03}), and the 
	best previously known bound has been given by Hartmann and 
	Arguelles~\cite{hartmann:arguelles}.
Their bound on system transients is, in general, incomparable with our repetitive and explorative bounds.
The significant benefit of our two new bounds is that each of them turns out to be linear 
	in the size of the system in various classes of linear max-plus systems whereas 
	Hartmann and  Arguelles' bound is intrinsically at least quadratic.
This is mainly due to the introduction of new graph parameters that
	 enables a fine-grained  analysis of the transient phase.
In particular, we introduce the notion of the {\em exploration penalty\/} of a graph~$G$ 
	as the least integer~$k$ with the property that, for every  $n\geqslant k$ divisible by the
	cyclicity of~$G$ and every node~$i$ of~$G$, there is a closed
	path  starting and ending at~$i$ of length~$n$.
One key point is then an at most quadratic upper bound on the
	exploration penalty which we derive from  the number-theoretic
	Brauer's Theorem~\cite{Bra42}.

Finally, we demonstrate how our general transience bound enables the performance analysis
	of a large variety of distributed systems.
First, we apply our results to the class of	{\em earliest schedules} 
	in cyclic scheduling: we show that for a large family of sets of tasks, 
	earliest schedules correspond to linear max-plus systems with irreducible
	matrices.
Thus we prove the eventual periodicity of such earliest schedules, and give two upper bounds
	on their transient phases.
Then we derive two transience bounds for a large
 	class of synchronizers, and we quantify how both our synchronizer bounds are better 
	than that given by Even and Rajsbaum~\cite{even:rajs} 
 	in their specific case of integer delays.
In the process, we show that our transience bounds are asymptotically tight.
Our results also apply to the analysis of the performance of distributed routers and schedulers
	based on the link-reversal algorithms:
We obtain  $O(N^3)$ transience bounds,
	improving the $O(N^4)$ bound established by Malka and
	Rajsbaum~\cite{malka:rajs}, and  $O(N)$ bounds for 
	such routers and schedulers when running in trees.
For link-reversal routers, eventual periodicity actually corresponds to termination,
	and an $O(N^2)$ bound on time complexity~\cite{BT05} directly follows from our transience bounds.

The paper is organized as follows.
Section~\ref{sec:prelim} introduces basic notions of graph theory and max-plus algebra.
In Section~\ref{sec:bounds:outline}, we elaborate a graph-based strategy to prove 
	transience bounds.
We show an upper bound on lengths of maximum weight walks that do not visit the critical 
	subgraph in Section~\ref{sec:criticalbound}.
Section~\ref{sec:red} presents a walk reduction that constitutes the core of our 
	strategy.
In Section~\ref{sec:explorationpenalty}, we introduce the notion of {\em exploration penalty\/} 
	and improve a theorem by Denardo~\cite{denardo} on the existence of arbitrarily long walks 
	in strongly connected graphs.
We derive two transience bounds, namely the explorative and the repetitive bound, in 
	Section~\ref{sec:bounds}.
We show how to convert upper bounds on the transients of max-plus systems to upper bounds 
	on the transients of max-plus matrices in Section~\ref{sec:matrix}.
We discuss our results, by comparing them to previous work and by applying them to the analysis of 
	various complex systems, in Section~\ref{sec:discussion}.

\section{Preliminaries}\label{sec:prelim}  
This section introduces  definitions and classical results needed in the rest of the
paper.
We denote by~$\IN$ the set of nonnegative integers and by~$\IN^*$ the set of
positive integers.

\subsection{Graphs}

A {\em directed graph\/}~$G$ is a pair~$(V,E)$ where~$V$ is
     a nonempty finite set and $E\subseteq V\times V$.
The elements of~$V$ are the {\em nodes\/} of~$G$ and the elements
     of~$E$ the {\em edges\/} of~$G$.
In this paper, we refer to directed graphs simply as {\em graphs}.

A {\em walk\/}~$W$ in $G$ is a triple $W=(\Start,\Edges,\End)$
     where $\Start$ and $\End$ are nodes in $G$, $\Edges$ is a
     sequence $(e_1,e_2,\dots,e_n)$ of edges  $e_l=(i_l,j_l)$ such
     that $j_l=i_{l+1}$ if $1\leqslant l\leqslant n-1$,
     $i_1=\Start$ and $j_n=\End$ if the sequence $\Edges$ is nonempty, and $\Start=\End$ if
     the sequence $\Edges$ is empty. 
We define the operators $\Start$, $\Edges$, and $\End$ on the set of
     walks by setting $\Start(W)=\Start$, $\Edges(W)=\Edges$, and
     $\End(W)=\End$.
We call $\Start(W)$ the {\em start node\/} of~$W$ and $\End(W)$ the {\em end node\/}
of~$W$.
The {\em length\/}~$\ell(W)$ of~$W$ is defined as the length of the sequence
$\Edges(W)$. 
Walk~$W$ is {\em closed\/} if $\Start(W)=\End(W)$.
Walk~$W$ is {\em empty\/} if the sequence $\Edges(W)$ is empty.
A walk~$W$ is empty if and only
     if $\ell(W)=0$.

For two walks~$W$ and~$W'$, we say that~$W'$ is a {\em
     prefix\/} of~$W$ if $\Start(W)=\Start(W')$ and
     the sequence $\Edges(W')$ is a prefix of $\Edges(W)$.
We say that~$W'$ is a {\em postfix\/} of~$W$ if
     $\End(W)=\End(W')$ and the sequence $\Edges(W')$ is a postfix of
     $\Edges(W)$.
We call~$W'$ a {\em subwalk\/} of~$W$ if it is the postfix of some
     prefix of~$W$.
A subwalk~$W'$ of~$W$ is a {\em proper\/} subwalk of~$W$ if $W'\neq W$.
We say a node~$i$ is a {\em node of walk $W$\/} if there exists a
     prefix~$W'$ of~$W$ with $\End(W')=i$.
For two walks $W_1$ and $W_2$ with $\End(W_1) = \Start(W_2)$,
     we define the {\em concatenation\/} $W = W_1 \cdot W_2$ by setting
     $\Start(W) = \Start(W_1)$, $\End(W) = \End(W_2)$, and $\Edges(W)$ to be
     the juxtaposition of the sequences $\Edges(W_1)$ and $\Edges(W_2)$.
If $W=W_1\cdot C\cdot W_2$ where~$C$ is a closed walk, then~$W'=W_1\cdot W_2$
is also a walk with the same start and end nodes as~$W$.

A walk is a {\em path\/} if it is non-closed and does not contain a nonempty
closed walk as a subwalk.
A closed walk is a {\em cycle\/} if it does not contain a nonempty
closed walk as a proper subwalk.
As cycles can be empty, there is a cycle of length~$0$ at each node of~$G$.

If~$i$ and~$j$ are two nodes of~$G$, let $\Pa_G(i,j)$ denote the set of walks~$W$ in graph~$G$ with $\Start(W)=i$ and $\End(W)=j$,
and $\Pa_G(\ito)$ the set of walks~$W$ in~$G$ with $\Start(W)=i$. If~$n$ is a
nonnegative integer, we write $\Pa_G^n(i,j)$ (respectively $\Pa_G^n(\ito)$) for the
set of walks in $\Pa_G(i,j)$ (respectively $\Pa_G(\ito)$) of length~$n$.
When no confusion can arise, we will omit the subscript $G$.

A graph $G'=(V',E')$ is a {\em subgraph\/} of~$G$ if $V'\subseteq V$
     and $E' \subseteq E$.
For a nonempty subset~$E'$ of~$E$, let the {\em subgraph  of $G$ induced by edge set $E'$\/} be the
     graph~$(V',E')$ where $V' = \{i
     \in V \mid \exists j\in V: (i,j)\in E' \vee (j,i)\in E'\}$.
A graph~$G$ is {\em strongly connected\/} if, for all nodes~$i$ and~$j$ in~$G$,
there exists a walk from~$i$ to~$j$.
A subgraph $H$ of $G$ is a {\em strongly connected
     component\/} of $G$ if $H$ is maximal with respect to the
     subgraph relation such that $H$ is strongly connected.

The {\em girth\/}~$g(G)$ of a graph~$G$ is the minimum length of a nonempty
cycle in~$G$.
For a strongly connected graph~$G$, its {\em cyclicity\/}~$\gamma(G)$ is the greatest
common divisor of cycle lengths in~$G$.
If~$G$ is not strongly connected, then its cyclicity~$\gamma(G)$ is equal to the least common
multiple of the cyclicities of its strongly connected components.

\subsection{Linear max-plus systems}

Let $\IRmax = \IR \cup \{-\infty\}$.
In this paper, we follow the convention $\max\emptyset=-\infty$.

A matrix with entries in $\IRmax$ is called a {\em max-plus matrix}.
If~$A$ is an $M\times N$ max-plus matrix and~$B$ is an $N\times Q$ max-plus
matrix, then the {\em max-plus product\/}~$A\otimes B$ is an $M\times Q$
max-plus matrix defined by 
\[(A\otimes
     B)_{i,j} = \max_{1\leqslant k\leqslant N} \big( A_{i,k} + B_{k,j} \big)
     \enspace.\]
If~$A$ is an $N\times N$ max-plus matrix and~$n$ is a nonnegative integer, we
denote by~$A^{\otimes n}$ the $n$ times iterated matrix product of~$A$.
That is, $(A^{\otimes 0})_{i,i}=0$ and $(A^{\otimes 0})_{i,j}=-\infty$
if~$i\neq j$, and $A^{\otimes n} = A\otimes A^{\otimes (n-1)}$ if $n\geqslant1$.
Given a column vector $v \in \IRmax^N$, the corresponding {\em
     linear max-plus system\/} is the sequence of vectors~$x(n)$ defined by     
\begin{align}
x(n) = \begin{cases}
               v & \text{if } n = 0\\
               A \otimes x(n-1) & \text{if } n \geqslant1\enspace.
       \end{cases}\label{eq:x}
\end{align}
Clearly $x(n) = A^{\otimes n}\otimes v$.
Let  $x= \langle A,v\rangle$, i.e., $\langle A,v\rangle$ denotes 
	the $A$-linear system with the initial vector~$v$.

To an $N\times N$ max-plus matrix~$A$ naturally corresponds a graph~$G(A)$ 
     with set of nodes $\{1,\dots,N\}$ containing an edge
     $(i,j)$ if and only if $A_{i,j}$ is finite.
The matrix~$A$ is said to be {\em irreducible\/} if~$G(A)$ is strongly connected.

We refer to~$A_{i,j}$ as the {\em $A$-weight\/} of edge~$(i,j)$ in~$G(A)$.
If~$W$ is a walk in~$G(A)$, we abuse notation by writing~$A(W)$ for the weight
of walk~$W$, i.e., the sum of the weights of its edges. We follow the
convention that the value of the empty sum is zero, i.e., $A(W)=0$ if~$W$ is an
empty walk.
Given a column vector $v \in \IRmax^N$, we write~$A_v(W) = A(W) + v_j$
where $j=\End(W)$ for~$W$'s {\em $A_v$-weight}.
From these definitions, one can easily establish the following correspondence between 
	the matrix power $A^{\otimes n}$ (respectively the vector $A^{\otimes n}\otimes v$) 
	and the weights of some walks in $G(A)$.

 \begin{proposition}
Let~$i$ and $j$ be two nodes of~$G(A)$, and let~$n$ be a nonnegative integer. 
Then the following equations hold
\begin{align}
  (A^{\otimes n})_{i,j} &= \max\left\{ A(W) \mid W\in\Pa_{G(A)}^n(i,j)
\right\}\notag\\
  (A^{\otimes n}\otimes v)_i &= \max\left\{ A_v(W) \mid W\in\Pa_{G(A)}^n(\ito) \right\}\enspace.
\notag
\end{align}
\end{proposition}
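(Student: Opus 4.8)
The plan is to prove both identities by induction on $n$, exploiting the defining recurrences for matrix powers and for walks of a given length.

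\paragraph{Base case.} For $n=0$, the only walk in $\Pa^0(i,j)$ is the empty walk, which exists precisely when $i=j$; its $A$-weight is the empty sum, namely $0$. Hence $\max\{A(W)\mid W\in\Pa^0(i,j)\}$ equals $0$ if $i=j$ and $-\infty$ if $i\neq j$ (using $\max\emptyset=-\infty$), which is exactly $(A^{\otimes 0})_{i,j}$. For the second identity at $n=0$, the only walk in $\Pa^0(\ito)$ is the empty walk ending at $i$, whose $A_v$-weight is $0 + v_i = v_i = (A^{\otimes 0}\otimes v)_i$.

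\paragraph{Inductive step.} Assume both identities hold for $n-1$. First I would establish a bijection-style decomposition of walks of length $n$: every walk $W\in\Pa^n(i,j)$ with $n\ge 1$ factors uniquely as $W = (i,k)\cdot W'$ where $(i,k)$ is an edge of $G(A)$ (so $A_{i,k}$ is finite) and $W'\in\Pa^{n-1}(k,j)$; moreover $A(W) = A_{i,k} + A(W')$. Conversely, every choice of an edge $(i,k)$ and a walk $W'\in\Pa^{n-1}(k,j)$ yields such a $W$. Taking maxima over this decomposition,
\begin{align}
\max\{A(W)\mid W\in\Pa^n(i,j)\}
 &= \max_{k}\Big( A_{i,k} + \max\{A(W')\mid W'\in\Pa^{n-1}(k,j)\}\Big)\notag\\
 &= \max_{k}\big( A_{i,k} + (A^{\otimes(n-1)})_{k,j}\big)
 = (A\otimes A^{\otimes(n-1)})_{i,j} = (A^{\otimes n})_{i,j}\notag
\end{align}
where the second equality is the induction hypothesis and the last is the definition of the max-plus product together with $A^{\otimes n}=A\otimes A^{\otimes(n-1)}$. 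Note the case analysis is subsumed by the convention $\max\emptyset=-\infty$: if no edge $(i,k)$ exists, or if no walk of length $n-1$ reaches $j$, both sides are $-\infty$; addition in $\IRmax$ treats $-\infty$ absorbingly, so the displayed chain remains valid. The same decomposition handles the second identity: for $W = (i,k)\cdot W'$ with $\End(W)=\End(W')=j$, we have $A_v(W) = A_{i,k} + A(W') + v_j = A_{i,k} + A_v(W')$, and ranging over $W'\in\Pa^{n-1}(k\to)$ gives
\begin{align}
\max\{A_v(W)\mid W\in\Pa^n(\ito)\}
 &= \max_{k}\big( A_{i,k} + (A^{\otimes(n-1)}\otimes v)_k\big)
 = \big(A\otimes(A^{\otimes(n-1)}\otimes v)\big)_i
 = (A^{\otimes n}\otimes v)_i\notag
\end{align}
again by the induction hypothesis and associativity of max-plus multiplication.

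\paragraph{Main obstacle.} The only delicate point is bookkeeping the $-\infty$ entries and empty edge sets consistently, i.e., checking that the $\max/+$ manipulations in $\IRmax$ remain sound when some walk sets are empty; this is entirely handled by the conventions $\max\emptyset=-\infty$ and $-\infty$ acting as an absorbing element for $+$, so there is no real difficulty. The combinatorial content — the unique first-edge decomposition of a length-$n$ walk — is immediate from the definition of walks, since prescribing the first edge $(i,k)$ and a length-$(n-1)$ continuation from $k$ is exactly the data of a length-$n$ walk from $i$.
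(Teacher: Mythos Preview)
Your proof is correct; the induction on~$n$ via the first-edge decomposition of walks is exactly the standard argument, and your handling of the $-\infty$ conventions is sound. The paper itself does not give a proof of this proposition, merely remarking that it is easily established from the definitions, so there is nothing further to compare.
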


\subsection{The critical subgraph}

A nonempty closed walk~$C$ in~$G(A)$ is said to be {\em critical\/} if its average
	$A$-weight~$A(C)/\ell(C)$ is maximal, i.e., if it is equal to
\[ \lambda(A) = \max \big\{ A(C)/\ell(C) \mid C \text{ is a nonempty closed
walk in }G(A)  \big\} \enspace, \]
which is easily seen to be finite whenever there is at least one cycle in~$G(A)$.
A node of $G(A)$ is {\em critical\/} if it is a node of a critical closed walk
     in~$G(A)$, and an edge of~$G(A)$ is {\em critical\/} if it is an edge
     of a critical closed walk in~$G(A)$.
The {\em critical subgraph\/} of~$G(A)$, denoted by~$G_\cc(A)$, is
     the subgraph of~$G(A)$ induced by the set of critical edges
     of~$G(A)$.
We recall a useful property of closed walks in $G_\cc(A)$ (for instance
see~\cite[Lemma~2.6]{workplus} for a proof).

\begin{proposition}\label{prop:paths:in:crit:comps:are:critical}
Every nonempty closed walk in $G_\cc(A)$ is critical in $G(A)$.
\end{proposition}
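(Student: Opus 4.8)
The plan is to normalise away the maximal cycle mean and then, for an arbitrary nonempty closed walk $C$ lying in $G_\cc(A)$, to build out of the critical cycles a companion closed walk $W$ whose weight is the negative of that of $C$; the trivial sign constraint on weights of closed walks then forces $C$ to be critical.

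I would start by setting $\lambda=\lambda(A)$ and defining, for a walk $W$ of $G(A)$, the reduced weight $\bar A(W)=A(W)-\lambda\,\ell(W)$. (Here $\lambda$ is finite: a nonempty closed walk of $G_\cc(A)$ exists by hypothesis, so $G(A)$ has a cycle.) Two facts are immediate from the definition of $\lambda$ as the maximum of $A(C)/\ell(C)$ over nonempty closed walks: every closed walk $W$ of $G(A)$, empty or not, has $\bar A(W)\le 0$ --- it is zero for the empty walk and $\le 0$ for a nonempty one --- and a nonempty closed walk is critical exactly when its reduced weight vanishes. So the whole statement reduces to showing $\bar A(C)\ge 0$ for a nonempty closed walk $C$ in $G_\cc(A)$, since $\bar A(C)\le 0$ is automatic.

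Next I would write the edge sequence of $C$ as $(e_1,\dots,e_n)$ with $e_k=(i_k,i_{k+1})$ and the convention $i_{n+1}=i_1$. Each $e_k$ is an edge of $G_\cc(A)$, hence a critical edge, hence lies on some critical closed walk of $G(A)$; traversing that walk starting from the node $i_{k+1}$ reached right after $e_k$, I can write it as $P_k\cdot(e_k)$ with $P_k\in\Pa_{G(A)}(i_{k+1},i_k)$ (possibly empty). Criticality says its reduced weight is $0$, i.e.\ $\bar A(P_k)=-\bar A(e_k)$. For $m=2,\dots,n$ the end node of $P_m$ is $i_m$, which is the start node of $P_{m-1}$, so $W:=P_n\cdot P_{n-1}\cdots P_1$ is a genuine closed walk at $i_1$; its reduced weight is $\sum_{k}\bar A(P_k)=-\sum_{k}\bar A(e_k)=-\bar A(C)$. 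Since $W$ is a closed walk we get $-\bar A(C)=\bar A(W)\le 0$, hence $\bar A(C)\ge 0$, hence $\bar A(C)=0$, i.e.\ $A(C)/\ell(C)=\lambda(A)$ and $C$ is critical.

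I expect no real obstacle here; the proof is short. The two things to be careful about are purely organisational: that cyclically rotating a closed walk so that its traversal begins just after a chosen occurrence of $e_k$ yields a closed walk of the same weight and length (an instance of the cut-and-paste of closed subwalks recalled in the excerpt), and that the pieces $P_k$ concatenate in the decreasing order $P_n\cdot P_{n-1}\cdots P_1$ rather than the increasing one --- it is this indexing subtlety that makes getting the companion walk $W$ right the one genuinely inventive step.
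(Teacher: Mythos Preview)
Your argument is correct. The normalisation to reduced weights $\bar A(W)=A(W)-\lambda\,\ell(W)$, the construction of the ``companion'' closed walk $W=P_n\cdot P_{n-1}\cdots P_1$ from the critical closed walks carrying each edge~$e_k$, and the sign comparison $\bar A(W)=-\bar A(C)\le 0$ all go through exactly as you describe; the concatenation order is right and the edge case of an empty~$W$ is harmless since then $\bar A(W)=0$.

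As for comparison with the paper: the paper does not prove this proposition at all. It merely recalls the statement and refers the reader to \cite[Lemma~2.6]{workplus} for a proof. Your self-contained argument is therefore more than what the paper supplies, and it is in the spirit of the standard textbook proofs of this fact (normalise, then exploit that each critical edge sits on a zero-reduced-weight closed walk).
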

Let us denote $\gamma(A) = \gamma\big(G_\cc(A)\big)$.
	
\subsection{Eventually periodic sequences}\label{subsec:period}

Let~$I$ be an arbitrary nonempty set and $f:\IN\to\IRmax^I$.
Further let $\pi$ be a positive integer and $\varrho\in\IR$.
The sequence $f$ is {\em eventually periodic with period $\pi$ and ratio $\varrho$}
     if there exists a nonnegative integer $T$ such that
\begin{equation}\label{eq:def:periodic}
\forall i\in I\,:\ \forall n\geqslant T\,:\ f_i(n+\pi) = f_i(n)+ \pi\cdot \varrho \enspace.
\end{equation}%
We call such a~$T$ a {\em transient\/} of~$f$ with respect to~$\pi$
and~$\rho$.
The ratio is unique if not all component-wise sequences $\big(f_i(n)\big)_n$ are
eventually constantly equal to $-\infty$.
In all cases, the set of transients of~$f$ is independent of the ratio.

Obviously if~$\sigma$ is any multiple of~$\pi$, then~$f$ is also eventually periodic with
period~$\sigma$
	and ratio $\varrho$.
Hence,  there always exists a common period of two eventually periodic
	sequences.

For every period~$\pi$, there exists a unique minimal transient~$T_\pi$.
The next lemma shows that these minimal transients do, in
     fact, not depend on~$\pi$. We will henceforth call this common value {\em the\/}
	     transient of~$f$.

\begin{proposition}\label{prop:period}
Let~$\pi$ and~$\sigma$ be two periods of an eventually periodic sequence~$f$ with respective minimal
     transients~$T_\pi$ and~$T_\sigma$.
Then~$T_\pi=T_\sigma$.
\end{proposition}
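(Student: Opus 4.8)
The plan is to show that $T_\pi \le T_\sigma$ and $T_\sigma \le T_\pi$; by symmetry it suffices to establish one direction, say $T_\pi \le T_\sigma$, whenever $\pi \mid \sigma$, and then to handle general $\pi,\sigma$ via a common multiple. Indeed, once we know that for any two periods $p \mid q$ the minimal transients satisfy $T_p = T_q$, we can take $\tau = \operatorname{lcm}(\pi,\sigma)$ (a period of $f$, by the observation in the excerpt that any multiple of a period is a period) and conclude $T_\pi = T_\tau = T_\sigma$. So the crux is the divisible case.

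So assume $\pi \mid \sigma$, write $\sigma = m\pi$. First I would note the easy inequality $T_\sigma \le T_\pi$: if $T$ is a transient with respect to $\pi$ and $\varrho$, then iterating \eqref{eq:def:periodic} $m$ times gives $f_i(n+\sigma) = f_i(n) + \sigma\varrho$ for all $n \ge T$, so $T$ is also a transient with respect to $\sigma$ (with the same ratio $\varrho$, or an arbitrary ratio if all components are eventually $-\infty$, using the last sentence of the subsection that the set of transients does not depend on the ratio). Hence $T_\sigma \le T_\pi$.

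The substantive direction is $T_\pi \le T_\sigma$. Let $T = T_\sigma$ be the minimal transient with respect to $\sigma$, so $f_i(n+\sigma) = f_i(n) + \sigma\varrho$ for all $i$ and all $n \ge T$. I want to show $f_i(n+\pi) = f_i(n) + \pi\varrho$ already holds for all $n \ge T$, which would give $T_\pi \le T = T_\sigma$. Fix $i$ and consider the sequence $g(n) = f_i(n) - n\varrho$ for $n \ge T$; the $\sigma$-periodicity says $g(n+\sigma) = g(n)$, i.e. $g$ is genuinely periodic with period $\sigma$ on $\{T, T+1, \dots\}$. I need to argue $g(n+\pi) = g(n)$ for $n \ge T$. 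This does not follow from periodicity alone — a sequence periodic with period $\sigma$ need not be periodic with period $\pi$ even when $\pi \mid \sigma$ — so the key structural input must be that $f$ satisfies the recurrence, or more precisely some monotonicity/shift-compatibility coming from the max-plus dynamics. The honest statement is: the relation defining $x(n)$ propagates forward, so if $f_i(n+\pi) = f_i(n) + \pi\varrho$ fails at some $n_0 \ge T$ but $f_i(n+\sigma) = f_i(n)+\sigma\varrho$ holds from $T$ on, one derives a contradiction by "shifting" a period-$\sigma$ window. Concretely, I expect the argument to use that the map $x(\cdot) \mapsto x(\cdot + \pi)$ interacts with $A\otimes(-)$, together with the fact that the period-$\pi$ relation, once it holds on an interval of length $\sigma$, self-propagates. \textbf{This is the main obstacle}: pinning down exactly why period $\sigma$ forces period $\pi$ on the tail, which must exploit something beyond the abstract eventual-periodicity definition — presumably the cyclicity structure of $G_c(A)$ and Proposition~\ref{prop:paths:in:crit:comps:are:critical}, or an argument that the minimal transient with respect to $\pi$ cannot exceed the point from which the weaker period-$\sigma$ behavior stabilizes because the "defect" sequence $f_i(n+\pi) - f_i(n) - \pi\varrho$ is itself monotone or sign-definite and $\sigma$-periodic hence eventually zero. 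I would look for a lemma of the form: if $g$ is $\sigma$-periodic on a tail and $g(n+\pi) - g(n)$ has constant sign for all large $n$, then $g(n+\pi) = g(n)$ there; summing $g(n+\pi)-g(n)$ over one $\sigma$-block telescopes to zero, forcing each term to vanish. Establishing that the sign of $f_i(n+\pi) - f_i(n) - \pi\varrho$ is eventually constant — which is where the max-plus recurrence (subadditivity/superadditivity of walk weights) enters — is the real content, and I would isolate it as the one place the dynamics, not just the sequence's periodicity, is used.
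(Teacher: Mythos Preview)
Your proposal contains a genuine gap. You correctly isolate the nontrivial direction ($T_\pi\le T_\sigma$ when $\pi\mid\sigma$) but then convince yourself that it requires structural input from the max-plus recurrence (monotonicity, subadditivity, the critical subgraph). It does not. Proposition~\ref{prop:period} is stated for an \emph{arbitrary} eventually periodic sequence $f:\IN\to\IRmax^I$, and the paper proves it using nothing beyond the definition.

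The point you are missing is that both hypotheses must be used simultaneously. In your argument you pass to $g(n)=f_i(n)-n\varrho$, record that $g$ is $\sigma$-periodic from $T_\sigma$ onward, and then try to deduce $\pi$-periodicity of $g$ on that tail from $\sigma$-periodicity alone; as you yourself note, that is impossible. But you have more: $g$ is also $\pi$-periodic from $T_\pi$ onward, because $\pi$ is assumed to be a period. The paper exploits this by combining two shift identities. In your notation $\sigma=m\pi$, and by induction on $m$ (the case $m=1$ being trivial) one may assume $T_{(m-1)\pi}=T_\pi$. Then for $n\ge T_\sigma$ one has $f(n+m\pi)=f(n)+m\pi\varrho$, while for $n+\pi\ge T_{(m-1)\pi}=T_\pi$ one has $f(n+m\pi)=f(n+\pi)+(m-1)\pi\varrho$. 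Subtracting yields $f(n+\pi)=f(n)+\pi\varrho$ for all $n\ge\max\{T_\sigma,\,T_\pi-\pi\}$, hence $T_\pi\le\max\{T_\sigma,\,T_\pi-\pi\}$, and since $T_\pi>T_\pi-\pi$ this forces $T_\pi\le T_\sigma$. The paper runs exactly this induction, anchored at the minimal period $\pi_0$ rather than an arbitrary $\pi$, but the mechanism is identical and entirely elementary. Your proposed telescoping-plus-sign-definiteness route is unnecessary, and invoking the dynamics would in fact overshoot the generality of the statement.
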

\begin{proof}
Denote by ${\mathbf \Pi}_f$ the set of periods of~$f$ and by~$\varrho$ a ratio
of~$f$.
Clearly, ${\mathbf \Pi}_f$ is a nonempty subset of $\IN^*$ closed under
     addition.
Let  $\pi_0 = \min {\mathbf \Pi}_f$ be the minimal period of~$f$; hence
     $\pi_0\IN^* \subseteq {\mathbf \Pi}_f$.
Denote by~$T_0$ the minimal transient with respect to period $\pi_0\in {\mathbf
\Pi}_f$.
Let $\pi= a\pi_0 +b$ be the Euclidean division of~$\pi$ by~$\pi_0$.
For any integer $n\geqslant \max \{T_\pi ,T_0 -b\}$, 
	\[f(n+\pi)=f(n) + \pi\varrho = f(n + b) +a\pi_0\varrho\enspace.\]
It follows that either $b=0$ or $b$ is a period of $f$.
Since $b\leqslant \pi_0-1$ and $\pi_0$ is the smallest period of~$f$, we have $b=0$,
	i.e., $\pi_0$ divides $\pi$.
We have thus shown ${\mathbf \Pi}_f\subseteq \pi_0 \IN^*$ and thus ${\mathbf
\Pi}_f = \pi_0 \IN^*$.
Hence $\pi=a\pi_0$ for some positive integer~$a$.

Since for any $n \geqslant T_0$,
	$f(n + a \pi_0) = f(n) + a \pi_0 \varrho,$
	we have $T_\pi \leqslant T_0$.
We now prove that  $T_\pi = T_0$ by induction on $a$.
\begin{enumerate}
	\item The base case $a=1$ is trivial.
	\item Let~$a\geqslant2$.
	Denote by $T'$ the minimal transient with respect to period~$(a-1)\pi_0$.
	By the inductive hypothesis, $T' = T_0$.
	For any integer $n \geqslant T_\pi$, 
	\begin{equation}
	f(n + a \pi_0) = f(n) + a \pi_0 \varrho\enspace.
	\end{equation}
	Moreover, if $n + \pi_0 \geqslant T'$ then 
	\begin{equation}
	f(n + a  \pi_0) = f(n + \pi_0) + (a-1)\pi_0 \varrho\enspace.
	\end{equation}
	It follows that for any integer $n \geqslant \max \{ T' - \pi_0, T_\pi \}$, 
	\begin{equation}
	f(n + \pi_0) = f(n) + \pi_0\varrho\enspace.
	\end{equation}
	Hence $T_0 \leqslant \max \{ T' - \pi_0, T_\pi \}$, and by inductive assumption 
	$T_0 \leqslant \max \{ T_0 - \pi_0, T_\pi \}$.
	We derive $T_0 \leqslant T_\pi$, and so $T_0 = T_\pi$, which concludes the proof. 
\qedhere
\end{enumerate}	
\end{proof}

Cohen et al.\ proved eventual periodicity of  irreducible max-plus matrix powers in the
following analog of the Perron-Frobenius theorem in classical linear algebra.

\begin{theorem}[Cyclicity Theorem~{\cite{CDQV83}}]\label{thm:cyclicity}
If~$A$ is irreducible, then the sequence of matrix powers~$A^{\otimes n}$
	is eventually periodic with period~$\gamma(A)$ and 
	ratio~$\lambda(A)$.

Consequently, every linear max-plus system with an irreducible matrix~$A$ is
	eventually periodic with period~$\gamma(A)$ and
	ratio~$\lambda(A)$.
\end{theorem}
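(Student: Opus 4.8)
The first assertion is exactly the Cyclicity Theorem of Cohen et al.~\cite{CDQV83}, which we are allowed to quote, so the only thing left to argue is that eventual periodicity of the matrix powers propagates to an arbitrary $A$-linear system $x=\langle A,v\rangle$. The plan is to exploit the closed form $x(n)=A^{\otimes n}\otimes v$ already recorded in Section~\ref{sec:prelim} together with the elementary observation that adding a fixed real constant to every entry of a max-plus matrix commutes with max-plus multiplication by a vector. Concretely, I would first invoke the quoted statement to fix a transient $T\in\IN$ of the sequence $\big(A^{\otimes n}\big)_n$ with respect to period $\gamma(A)$ and ratio $\lambda(A)$, i.e.\ a $T$ such that $(A^{\otimes(n+\gamma(A))})_{i,j}=(A^{\otimes n})_{i,j}+\gamma(A)\,\lambda(A)$ for all $n\ge T$ and all nodes $i,j$ of $G(A)$.

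Then, for every node $i$ and every $n\ge T$, using $x(n)=A^{\otimes n}\otimes v$ and the definition of $\otimes$,
\begin{align*}
x_i(n+\gamma(A)) &= \max_{j}\big((A^{\otimes(n+\gamma(A))})_{i,j}+v_j\big)\\
&= \max_{j}\big((A^{\otimes n})_{i,j}+\gamma(A)\,\lambda(A)+v_j\big)\\
&= x_i(n)+\gamma(A)\,\lambda(A)\enspace,
\end{align*}
where the last equality pulls the finite constant $\gamma(A)\,\lambda(A)$ out of the maximum. This is legitimate even when some of the quantities $(A^{\otimes n})_{i,j}+v_j$ equal $-\infty$, because $-\infty+c=-\infty$ for finite $c$ and, by the convention $\max\emptyset=-\infty$ fixed at the start of Section~\ref{sec:prelim}, the maximum interacts correctly with $-\infty$. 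Hence $T$ is a transient of $x$ with respect to period $\gamma(A)$ and ratio $\lambda(A)$, which is precisely the claimed eventual periodicity of the system.

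The remaining points are pure bookkeeping. The ratio $\lambda(A)$ is a genuine real number because $G(A)$, being strongly connected, contains a nonempty closed walk (for $N\ge2$, concatenate a nonempty walk from one node to another with a nonempty walk back), hence a nonempty cycle, so the maximum defining $\lambda(A)$ is over a nonempty set of reals. In the degenerate case where all entries of $v$ are $-\infty$ the system is identically $-\infty$ and the displayed equation holds trivially for any ratio, consistently with the remark in Section~\ref{subsec:period} that the set of transients does not depend on the ratio. I do not expect any real obstacle here: once $x(n)=A^{\otimes n}\otimes v$ and the quoted matrix result are in hand, the argument is a one-line consequence of the ``linearity'' of $\otimes$, and the only place that demands a moment's care is the treatment of $-\infty$ entries, which is dispatched by the $\max\emptyset=-\infty$ convention.
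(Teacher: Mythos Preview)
Your proposal is correct and matches the paper's treatment: the paper does not prove this theorem at all but merely cites the matrix statement from~\cite{CDQV83} and records the system version as an immediate consequence (the word ``Consequently''), which is exactly the derivation you spell out. Your explicit computation using $x(n)=A^{\otimes n}\otimes v$ and the distributivity of a scalar shift over $\max$ is precisely what that ``Consequently'' is intended to convey.
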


We call the transient of the sequence of matrix powers~$A^{\otimes n}$ the 
	{\em transient of matrix~$A$}, and the transient of the sequence of vectors~$A^{\otimes n}\otimes v$ 
	the {\em transient of the system~$\langle A,v\rangle$}.

For any  $\mu \in \IR$, let $A + \mu$ denote the matrix obtained by 
	adding $\mu$ to each entry of~$A$.
Since $(A + \mu)^{\otimes n} = A^{\otimes n} + n\mu$, we easily check that 
	$ G_\cc(A + \mu) = G_\cc(A)$, $\lambda(A + \mu) = \lambda(A) + \mu$,
	and the matrix transients of~$A$ and $A + \mu$
	(resp.\ the system transients of~$\langle A,v\rangle$ and $\langle A +
\mu,v\rangle$) are equal.

\section{Strategy Outline}\label{sec:bounds:outline}

This section describes our graph-based strategy to prove upper bounds on the transient
	of the system~$\langle A,v\rangle$, given an irreducible $N\times N$ matrix~$A$ and a vector~$v\in\IR^N$. 
We also explain how a slight modification of this strategy provides upper bounds on the transient 
	of $A$. 

We start by defining for a set~${\mathbf N}$ of nonnegative integers 
	and a node~$i$, an ${\mathbf N}$-{\em realizer\/} for node~$i$ to be any walk of
	maximum $A_v$-weight in the set of walks in $\Pa(\ito)$
	with length in ${\mathbf N}$. 
As shown in the next proposition, of particular interest is the case of sets ${\mathbf N}$ of the form
	$$\realrem{B}{n,\pi} = 
	\{ m\in \IN \mid m \geqslant B\  \wedge \ m\equiv n\pmod \pi \}$$	
	where $B$, $n$, and $\pi$ are positive integers.
	
\begin{proposition}\label{prop:final:step}
Let~$B$ and~$\pi$ be positive integers.
If there exists, for every node~$i$ and every integer~$n\geqslant B$, an
	$\realrem{B}{n,\pi}$-realizer for~$i$ of length~$n$, then~$B$ is an 
	upper bound on the system transient.
\end{proposition}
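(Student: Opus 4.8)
The plan is to show that the hypothesis forces the system $x=\langle A,v\rangle$ to satisfy $x_i(n+\pi)=x_i(n)$ for every node $i$ and every integer $n\geqslant B$; that is, that $x$ is eventually periodic with period $\pi$ and ratio $0$ and that $B$ is a transient of it with respect to $\pi$. Once this is established, the statement follows from the machinery of Section~\ref{subsec:period}: $\pi$ is then a period of $x$, so by Proposition~\ref{prop:period} the minimal transient $T_\pi$ with respect to $\pi$ equals the transient of the system, and $T_\pi\leqslant B$ by construction. (Incidentally this forces $\lambda(A)=0$, consistently with Theorem~\ref{thm:cyclicity} and the finiteness of $v$, but we shall not need it.)

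The core of the argument is a direct reading of the definition of a realizer. Suppose $U$ is an $\realrem{B}{m,\pi}$-realizer for a node $i$ that has length $m$, where $m\geqslant B$. On one hand $U\in\Pa^m(\ito)$, so $A_v(U)\leqslant\max\{A_v(W)\mid W\in\Pa^m(\ito)\}=x_i(m)$ by the correspondence between powers of $A$ and walk weights recalled in Section~\ref{sec:prelim}. On the other hand $U$ has maximum $A_v$-weight among all walks in $\Pa(\ito)$ whose length lies in $\realrem{B}{m,\pi}$, and every walk of length $m'$ starting at $i$ with $m'\geqslant B$ and $m'\equiv m\pmod\pi$ is such a walk; hence $A_v(U)\geqslant x_i(m')$ for every such $m'$. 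Taking $m'=m$ in particular yields $A_v(U)=x_i(m)$.

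Now fix a node $i$ and an integer $n\geqslant B$. Applying the hypothesis at $n$, pick an $\realrem{B}{n,\pi}$-realizer $W$ for $i$ of length $n$; since $n+\pi\geqslant B$ and $n+\pi\equiv n\pmod\pi$, the previous paragraph gives $x_i(n)=A_v(W)\geqslant x_i(n+\pi)$. Applying the hypothesis at $n+\pi$ (which is $\geqslant B$) and observing that $\realrem{B}{n+\pi,\pi}=\realrem{B}{n,\pi}$, pick an $\realrem{B}{n+\pi,\pi}$-realizer $W'$ for $i$ of length $n+\pi$; the previous paragraph with $m'=n$ gives $x_i(n+\pi)=A_v(W')\geqslant x_i(n)$. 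Combining, $x_i(n+\pi)=x_i(n)$ for every node $i$ and every $n\geqslant B$, which is exactly what the plan needs.

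I do not expect a genuine obstacle: the work is essentially bookkeeping. The two points that require care are (i) the precise unpacking of ``$\realrem{B}{n,\pi}$-realizer of length $n$'' into the two inequalities of the middle paragraph, the key being that a walk of length $m'$ starting at $i$ is an admissible competitor for the maximum defining the realizer precisely when $m'\geqslant B$ and $m'\equiv n\pmod\pi$ --- which is why $n$ itself, and also $n+\pi$, qualify; and (ii) finishing through Proposition~\ref{prop:period} instead of attempting to relate $\pi$ to the cyclicity $\gamma(A)$ or to reconcile the ratio $0$ obtained here with the ratio $\lambda(A)$ of Theorem~\ref{thm:cyclicity}: the latter reconciliation is unnecessary, since, as recalled in Section~\ref{subsec:period}, the set of transients of an eventually periodic sequence is independent of its ratio.
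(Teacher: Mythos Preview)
Your proof is correct and follows essentially the same approach as the paper's own proof: both arguments show that for every node~$i$ and every $n\geqslant B$ the values $x_i(n)$ and $x_i(n+\pi)$ coincide with the maximum $A_v$-weight over walks starting at~$i$ with length in $\realrem{B}{n,\pi}=\realrem{B}{n+\pi,\pi}$, and then conclude via Proposition~\ref{prop:period}. The only cosmetic difference is that the paper names this common maximum $x(n)$ and argues via $X(n)=X(n+\pi)$, whereas you extract the two-sided inequality for realizers as a reusable observation and apply the hypothesis separately at~$n$ and at~$n+\pi$.
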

\begin{proof}
Let~$i$ be a node.
For each integer $n\geqslant B$, let $W_n$ be an $\realrem{B}{n,\pi}$-realizer 
		for~$i$ of length $n$.
Denote by~$X(n)$ the set of walks~$W$ in~$\Pa(\ito)$ with
	$\ell(W)\in \realrem{B}{n,\pi}$, and
	    let~$x(n)$ be the maximum of values~$A_v(W)$ where~$W\in
	    X(n)$.
It is $x(n) = A_v(W_n)$.

From $n+\pi \equiv n\pmod \pi$ follows $X(n+\pi) = X(n)$ and so $x(n+\pi)=x(n)$.
Moreover, we have~$\Pa^n(\ito) \subseteq X(n)$ and $\Pa^{n+\pi}(\ito)\subseteq X(n+\pi)$,
	    which implies $(A^{\otimes n}\otimes v)_i \leqslant x(n)$ and 
		$(A^{\otimes (n+\pi)}\otimes v)_i \leqslant x(n+\pi)$. 
Conversely because~$W_n\in \Pa^n(\ito)$, we 
	have~$(A^{\otimes n}\otimes v)_i \geqslant A(W_n)=x(n)$.
Similarly,~$(A^{\otimes (n+\pi)}\otimes v)_i \geqslant A(W_{n+\pi})=x(n+\pi)$. 
Since $x(n+\pi)=x(n)$, it follows that $(A^{\otimes n}\otimes v)_i = (A^{\otimes (n+\pi)}\otimes v)_i$.
Noting Proposition~\ref{prop:period} now concludes the proof.
\end{proof}

Based on Proposition~\ref{prop:final:step}, we now define a strategy for determining 
	upper bounds on system transients.
Let~$n$ be a nonnegative integer and~$i$ be a node.
Denote by~$\pi$ the least common multiple of cycle lengths in the critical
subgraph~$G_\cc$.
Note that~$\pi$ is a multiple of~$\gamma=\gamma(A)$.
The strategy includes an additional parameter~$B$ to be
chosen in step~4.

\medskip

\begin{enumerate}
\item {\em Normalized matrix.}	Because the transients of~$A$ and of~$\overline{A} = A-\lambda(A)$
	are equal, and~$\lambda(\overline{A})=0$, we can reduce the general
	case to the case~$\lambda(A)=0$.
The condition~$\lambda(A)=0$ guarantees the existence of realizers for every
	nonempty~$\mathbf{N}\subseteq\IN$ and yields that adding critical cycles to a walk does
	not change its $A$-weight.
The rest of the strategy hence considers an irreducible matrix~$A$ such that $\lambda(A) =0$.
Let $W$ be an $\realrem{B}{n,\pi}$-realizer for node~$i$.
\item {\em Critical bound.}	We show that for~$B$ large enough, i.e., $B$ greater or equal to 
	some {\em critical bound}~$B_\cc$, the realizer $W$ contains at least one critical node~$k$.
\item {\em Walk reduction.}	Next we show that for every divisor $d$ of $\pi$, by removing subcycles, 
	we can {\em reduce}~$W$ to a new walk~$\hat{W}$ which starts at node~$i$, contains the critical node~$k$,
	whose length $\ell(\hat{W})$ is in the same residue class modulo~$d$ as~$\ell(W)$, and 
	$\ell(\hat{W})$ is upper-bounded by a term linear in the number of nodes in the graph.
\item {\em Pumping in the critical graph.}	Since~$d$ divides~$\pi$,~$d$ divides~$n-\ell(\hat{W})$,
	and for two appropriate choices of $d$ and for~$n$ sufficiently large ($n \geq B_d$), we show 
	how to complete $\hat{W}$ by adding to it a critical closed walk starting from~$k$ in order to 
	obtain a new walk of length~$n$ starting at node~$i$.

For $B= \max\{ B_\cc, B_d\}$, this yields an $\realrem{B}{n,\pi}$-realizer of length~$n$, 
	because removing cycles at most increases the weight and adding a critical closed path 
	does not change the weight. 
Proposition~\ref{prop:final:step} then shows that~$B$ is a bound on the transient.
\end{enumerate}	

For the transient of the matrix~$A$, we can follow a similar strategy: 
	we consider~$\Pa(i,j)$ instead of~~$\Pa(\ito)$, and for a set~${\mathbf N}$ of 
	nonnegative integers we define an ${\mathbf N}$-{\em realizer\/} for the pair of
	nodes~$i,j$ to be any walk of maximum $A$-weight in the set of walks $\Pa(i,j)$
	with length in ${\mathbf N}$.
As for walks in $\Pa(\ito)$, we can show that any walk of maximum  $A$-weight in $\Pa(i,j)$  with 
	length in $\realrem{B}{n,\pi}$ contains at least one critical node
	if $B$ is greater or equal to some critical bound~$B'_\cc$.
Since the walk reduction described above actually preserves both the starting and ending nodes,
	then we can derive an upper-bound on the transient of~$A$.
In fact, we will not develop this parallel strategy for matrices, but 
	we rather propose a different method, which consists in computing
	a bound on the transient of matrix~$A$ from our bounds on transients of some specific
systems~$\langle A,v\rangle$.

\section{Critical Bound}\label{sec:criticalbound}

In this section, we carry out step~2 of our strategy.
More precisely, we prove that any walk of maximum $A_v$-weight in the set of walks $\Pa^n(\ito)$
	necessarily contains a critical node if $n$ is large enough.
	
Let $A$ be an $N\times N$ max-plus matrix, and assume $A$ is irreducible.
We write~$\lambda$ for $\lambda(A)$, $\lambda_\nc$ for the maximum average $A$-weight of closed walks
without critical nodes,~$\delta$ for the minimum $A$-weight,~$\Delta$ for the maximum $A$-weight,~$\Delta_\nc$ for the 
	maximum $A$-weight of edges between non-critical nodes, and~$\lVert v\rVert$ for the difference
	of the maximum and minimum entry of vector~$v$.
We assume~$\lVert v\rVert$ to be finite until Section~\ref{sec:matrix}, in
which we generalize our results to arbitrary~$v$.
By comparing the possible $A_v$-weights of walks that do and do not visit~$G_\cc$, we
	can derive an explicit critical bound~$B_\cc$, which holds for arbitrary~$\lambda$.
\begin{proposition}[Critical Bound]\label{prop:n:zero}
Each walk with maximum $A_v$-weight in $\Pa^n(\ito)$ contains a critical node if
	$n\ge B_\cc$ where 
	\begin{equation}
	B_\cc = \max\left\{ N\ ,\ \frac{\lVert v\rVert +
(\Delta_{\nc}-\delta)\,(N-1)}{\lambda-\lambda_\nc}\right\} \enspace. \notag
	\end{equation}
\end{proposition}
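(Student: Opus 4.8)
The plan is to compare the best possible $A_v$-weight of a length-$n$ walk from $i$ that does visit a critical node against the best possible $A_v$-weight of one that does not, and to show that for $n \geq B_\cc$ the former strictly dominates, forcing any maximum-weight walk to be critical. First I would fix a node $i$ and a walk $W \in \Pa^n(\ito)$ that does \emph{not} contain any critical node. Every cycle appearing inside $W$ is then a closed walk without critical nodes, so its average weight is at most $\lambda_\nc$; decomposing $W$ into a path $P$ from $i$ together with a collection of cycles hung off it (repeatedly peeling off closed subwalks via the identity $W = W_1\cdot C\cdot W_2 \leadsto W_1\cdot W_2$ recorded in the preliminaries), one gets $\ell(P)\leq N-1$, the cycle part has total length $n-\ell(P)$, and hence
\[
A(W) \;\leq\; A(P) + (n-\ell(P))\,\lambda_\nc \;\leq\; (N-1)\,\Delta_\nc + (n-\ell(P))\,\lambda_\nc,
\]
using that the edges of $P$ run between non-critical nodes so each has $A$-weight at most $\Delta_\nc$. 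Adding the endpoint contribution, $A_v(W) \leq (N-1)\Delta_\nc + (n-\ell(P))\lambda_\nc + \max_k v_k$. (One should double‑check the degenerate case $\lambda_\nc=-\infty$, i.e.\ no non‑critical closed walk exists, where the claim is immediate; and also the edge case where $W$ is itself a path, handled by $n \geq N$ in the first branch of the max.)

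Next I would lower-bound the weight achievable by a \emph{critical} walk of length $n$. Since $A$ is irreducible, pick a shortest path $Q$ from $i$ to some critical node $k$; then $\ell(Q) \leq N-1$ and $A(Q) \geq (N-1)\delta$ (crudely, since each of at most $N-1$ edges has weight at least $\delta$; more carefully $\ell(Q)\delta$, but bounding by $(N-1)\delta$ only helps if $\delta\leq 0$, so here I would instead keep $\ell(Q)\delta$ and be slightly careful, or simply note $\delta \le \Delta_\nc$ makes the bookkeeping uniform). We then need to extend $Q$ to length exactly $n$ by inserting closed walks through $k$ lying in $G_\cc$; by Proposition~\ref{prop:paths:in:crit:comps:are:critical} every such closed walk has average weight exactly $\lambda$, so the extension contributes $(n-\ell(Q))\lambda$ to the weight. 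There is a divisibility subtlety — closed walks at $k$ in $G_\cc$ only have lengths in a sublattice — but for the \emph{critical bound} it suffices to produce \emph{some} critical walk, e.g.\ by going around a single critical cycle through $k$ the appropriate number of times and padding; since we only need a one-sided comparison to detect criticality, I would simply take $n$ large enough that such a walk of length $n$ exists, or alternatively lower-bound $A(W')$ over the subfamily of critical walks of length $n$ whose non-critical prefix is $Q$, which is nonempty once $n \geq N$. The point is $A_v(W') \geq (n-\ell(Q))\lambda + \ell(Q)\delta + \min_k v_k$.

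Finally I would impose $A_v(W') > A_v(W)$ for every non-critical $W$, i.e.
\[
(n-\ell(Q))\lambda + \ell(Q)\delta + \min_k v_k \;>\; (N-1)\Delta_\nc + (n-\ell(P))\lambda_\nc + \max_k v_k.
\]
Using $\lambda > \lambda_\nc$ (strict, since a critical cycle is never among the non-critical closed walks and $\lambda$ is the maximum over \emph{all} closed walks — this strict inequality is exactly what makes the denominator positive and is the crux of the argument), collecting the terms linear in $n$ against the constants, and bounding $\ell(P),\ell(Q) \leq N-1$, one solves for $n$ and arrives at
\[
n \;\geq\; \frac{\lVert v\rVert + (\Delta_\nc - \delta)(N-1)}{\lambda - \lambda_\nc},
\]
which together with $n \geq N$ gives precisely $B_\cc$. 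The main obstacle I anticipate is not the inequality manipulation but handling the two exceptional configurations cleanly: (i) when there are no non-critical cycles at all ($\lambda_\nc = -\infty$), so the "bad" walks are short and $n\geq N$ alone suffices; and (ii) making sure a critical walk of length exactly $n$ genuinely exists for all $n\geq B_\cc$ without having to invoke the later exploration-penalty machinery — which is why the clean move is to argue the comparison only needs the \emph{existence} of one critical competitor, realized by a shortest path to $k$ followed by repetitions of a critical cycle through $k$, padded to length $n$ once $n$ exceeds a linear threshold subsumed by $B_\cc$.
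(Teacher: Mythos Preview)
Your overall strategy---upper-bound the $A_v$-weight of a non-critical walk via its path/cycle decomposition and lower-bound a critical competitor of the same length---is exactly the paper's. But there is a genuine gap in your lower bound for the competitor. You write $A_v(W') \geq (n-\ell(Q))\lambda + \ell(Q)\delta + \min_k v_k$, which presumes the entire tail of length $n-\ell(Q)$ consists of critical \emph{closed} walks and hence carries average weight~$\lambda$. That is precisely what the divisibility obstruction prevents: to hit length exactly~$n$ you must pad with a proper \emph{prefix} $W_1$ of the critical cycle~$C$, and such a prefix is not closed, so Proposition~\ref{prop:paths:in:crit:comps:are:critical} does not apply and its weight is only $\geq \delta\cdot\ell(W_1)$, not $\lambda\cdot\ell(W_1)$. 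The correct lower bound therefore loses an additional $(\lambda-\delta)\ell(W_1)$ with $\ell(W_1)$ as large as $\ell(C)-1$, and your final inequality no longer reduces to $B_\cc$.

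Even after repairing this, your construction (fresh shortest path $Q$ from $i$ to $G_\cc$, then $C^m\cdot W_1$) yields a numerator of order $(\Delta_\nc-\delta)\cdot 2(N-1)$ rather than $(\Delta_\nc-\delta)(N-1)$, because you now have \emph{two} non-critical waste pieces ($Q$ and $W_1$) each of length up to $N-1$, plus the $\ell(P)$ term that does not cancel. The paper recovers the exact constant by a sharper choice of competitor: it does \emph{not} take a fresh path from~$i$, but instead reuses a prefix $W_\cc$ of the acyclic reduction $W_0$ of the bad walk~$\hat{W}$ itself, choosing the breakpoint so that the distance from $\End(W_\cc)$ to $G_\cc$ is minimal. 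This makes $A(W_\cc)$ appear on both sides and cancel, and the minimality condition forces $\ell(W_2)+\ell(W_0)\leq N_\nc$ while $\ell(W_1)<N-N_\nc$ and $\ell(W_3)\leq N_\nc-1$, so the residual lengths combine to at most $N-1$ rather than $2(N-1)$. Without this reuse-and-minimize trick you will prove \emph{a} critical bound, but not the one stated.
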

\begin{proof}
We first reduce to the case $\lambda=0$.
Let~$\overline{A}$ be the normalized matrix $\overline{A} = A - \lambda$.
The parameters $\overline{\delta}$, $\overline{\Delta}_{nc}$, and $\overline{\lambda}_{nc}$ 
	for the matrix~$\overline{A}$ are obtained by subtracting $\lambda$ from the respective parameters 
	of~$A$.
Hence $\overline{\lambda}=0$, and a walk is of maximum $A_v$-weight in $G(A)$ if and only if
     it is a walk of maximum $\overline{A}_v$-weight in~$G(\overline{A})=G(A)$.
The term 	$\frac{\lVert v\rVert + (\Delta_{\nc}-\delta)\,(N-1)}{\lambda-\lambda_\nc}$ 
	should hence be substituted by 
	$\frac{\lVert v\rVert + (\overline{\Delta}_{\nc}-\overline{\delta})\,(N-1)}{-\overline{\lambda_\nc}}$ 
	when considering~$\overline{A}$ instead of~$A$, and we can assume $\lambda=0$ in the rest of the proof.

If $\lambda_\nc=-\infty$, then every nonempty cycle contains a critical node. 
Because every walk of length greater or equal to~$N$ necessarily contains a cycle 
	as a subwalk and because $B_\cc\geqslant N$, in particular every walk with maximum
	$A_v$-weight in $\Pa^n(\ito)$ contains a critical node if $n\geqslant B_\cc$ and
	$\lambda_\nc=-\infty$. 

We now consider the case $\lambda_\nc\neq-\infty$.
We proceed by contradiction: Suppose  that there exists an integer $n$ such that 
	$n \ge B_\cc$, a node $i$ and 
	a walk of maximum $A_v$-weight in $\Pa^n(\ito)$ with non-critical nodes only;
	let~$\hat{W}$ be such a walk.
Let~$W_0$ be the {\em acyclic part\/} of~$\hat{W}$, defined in the following
manner:
Starting at~$\hat{W}$, we repeatedly remove nonempty subcycles from the walk until
	we arrive at a path.
In general there are several possible choices of which subcycles to remove, but we
	fix some global choice function to make the construction of $W_0$ deterministic.

Next choose a critical node~$k$, and then a prefix~$W_\cc$ of~$W_0$, such
     that the distance between~$k$ and the end node of $W_\cc$ is minimal.
Let~$W_2$ be a path of minimal length from the end node of~$W_\cc$ to~$k$.
Let $W_3$ be the walk such that $W_0 = W_\cc \cdot
     W_3$.
Further let~$C$ be a critical cycle starting at $k$.

We distinguish two cases for~$n$, namely (a) $n \ge
     \ell(W_\cc)+\ell(W_2)$, and (b) $n < \ell(W_\cc)+\ell(W_2)$.

\medskip 

{\em Case a:} Let~$m \in \IN$ be the quotient in the Euclidean division of
     $n - \ell(W_\cc) - \ell(W_2)$ by $\ell(C)$, and choose~$W_1$ to be a prefix 
	of~$C$ of length $n - \big(\ell(W_\cc)+\ell(W_2)+m\cdot \ell(C)\big)$
	(cf. Figure~\ref{fig:thm:n:zero}).
Clearly $W_1$ starts at~$k$.
If we set $W = W_\cc\cdot W_2\cdot C^m\cdot W_1$, we get
     $\ell(W)=n$ and 
	\begin{equation}\label{eq:crit:lower:bound}
		A_v(W) \geqslant \min_{1\leqslant j\leqslant N}(v_j) + A(W_\cc)+A(W_2)+A(W_1)
	\end{equation}
	since we assume $\lambda=0$.  

\begin{figure}[hbt]
\centering
\begin{tikzpicture}[>=latex',scale=0.8]
	\node[shape=circle,draw] (i) at (-2,2) {$\scriptstyle i$};
	\node[shape=circle,draw] (k) at (0,0) {};
	\node[shape=circle,draw] (j) at (2,1) {$\scriptstyle k$};
	\node[shape=circle,draw] (end) at (-3,-.5) {};
	\draw[very thick,->] (i) .. controls (-1,1) and (-.5,2)  ..
node[midway,above]{${W}_\cc$}  (k);
	\draw[thick,->] (k) .. controls (0.5,0.2) and (1,-0.3)  .. node[near end,below=1mm]{${W}_2$}  (j);
	\draw[very thick,->] (j) --  (2.6,1.6) node[below=2mm] {$W_1$};
	\draw[thick,->] (2.6,1.6) .. controls +(2.1,2.1) and +(-2.5,2.5)  .. node[near start,below left]{$C$}  (j);
	\draw[very thick,->] (k) .. controls +(-1,0.2) and +(1,-0.3) .. node[above]{$W_3$} (end);
	\node[cloud, cloud puffs=24, draw,minimum width=3.5cm, minimum height=3.5cm] at (2.6,2.2) {};
	\node at (4,1.5) {$G_\cc$};
\end{tikzpicture}
\caption{Walk~$W$ in proof of Proposition~\ref{prop:n:zero}}
\label{fig:thm:n:zero}
\end{figure}

For the $A_v$-weight of $\hat{W}$, we have
	\begin{align}
		A_v(\hat{W}) \leqslant  A_v(W_0) +
        \lambda_{\nc} \cdot \bigr(\ell(\hat{W}) - \ell(W_0) \bigr)
 		\leqslant \max_{1\leqslant j\leqslant N}(v_j) + A(W_0)+
        \lambda_{\nc}\cdot \bigr(\ell(\hat{W}) - \ell(W_0) \bigr)\label{eq:upper:bound}
	\end{align}
By assumption $A_v(\hat{W}) \ge A_v(W)$, and from
     \eqref{eq:crit:lower:bound}, \eqref{eq:upper:bound}, and $\lambda_{\nc} < 0$ 
     we therefore obtain
\begin{align}
\ell(\hat{W}) \le \frac{\lVert v \rVert + A(W_3)-A(W_1)-A(W_2)}{-\lambda_{\nc}}+
                 \ell(W_0)
 \le \frac{\lVert v \rVert +
\Delta_{\nc}\,\ell(W_3)-\delta\,(\ell(W_1)+\ell(W_2))}{-\lambda_{\nc}}+
                 \ell(W_0)\label{eq:the:bound}
\end{align}%
Denote by $N_{\nc}$ the number of non-critical nodes.
The following three inequalities trivially hold:  $\ell(W_3) \le N_{\nc}-1$, 
	$\lambda_{\nc} \ge \delta$,  and $\ell(W_1) < N-N_{\nc}$.
Since there is at least one critical node, we have $\ell(W_3) < N-1$.
Moreover from the minimality constraint for the length of~$W_2$ follows that
 	$\ell(W_2)+\ell(W_0) \le N_{\nc}$.
Thereby
\begin{align}
\ell(\hat{W})  < \frac{\lVert v\rVert + (\Delta_{\nc}-
   \delta)\,(N-1)}{-\lambda_{\nc}}\enspace,\label{eq:bound:2}
\end{align}%
	a contradiction to $n \ge B_\cc$.
The lemma follows for case a.

\medskip

{\em Case b:} In this case $\ell(W_c) \le n < \ell(W_c) + \ell(W_2)$,
	and we set $W = W_\cc\cdot W'_2$, where
     $W'_2$ is a prefix of $W_2$, such that $\ell(W)=n$.
Hence,
	\begin{equation}\label{eq:crit:lower:bound_caseb}
		A_v(W) \geqslant \min_{1\leqslant j\leqslant N}(v_j) + A(W_\cc)+A(W'_2)\enspace.
	\end{equation}
We again obtain~\eqref{eq:upper:bound}.
By assumption $A_v(\hat{W}) \ge A_v(W)$, and by similar arguments as in case a 
	we derive
\begin{align}
\ell(\hat{W}) &\le \frac{\lVert v \rVert + A(W_3)-A(W'_2)}{-\lambda_{\nc}}+
                 \ell(W_0)\enspace\notag\\
\intertext{and since $W'_2$ is a prefix of $W_2$ with $\ell(W'_2) < \ell(W_2)$,}
\ell(\hat{W}) &< \frac{\lVert v \rVert +
\Delta_{\nc}\,\ell(W_3)-\delta\,\ell(W_2)}{-\lambda_{\nc}}+
                 \ell(W_0)\enspace,\notag
\end{align}%
	which is less or equal to the bound obtained in \eqref{eq:the:bound} of case~a.
By similar arguments as in case~a, the lemma follows in case~b	.
\end{proof}
	
In case~$A$ is an integer matrix, i.e., all finite entries of $A$ are
     integers, the term~$\lambda - \lambda_\nc$ cannot become
     arbitrarily small: This is obvious when $\lambda_\nc = -\infty$;
     otherwise, let $C_0$  be a critical cycle, and let $C_1$ be a
     cycle such that $\lambda_{\nc}= A(C_1)/\ell(C_1)$.
Then we have 
\[\lambda - \lambda_{\nc} = \frac{A(C_0)\ell(C_1) - A(C_1)\ell(C_0)}{\ell(C_0)\ell(C_1)} \enspace,\]
 	and so
	\begin{equation}
		\frac{1}{\lambda - \lambda_\nc} \leqslant  
		(N-N_\nc)\cdot N_\nc \le \frac{N^2}{4} \enspace,\label{eq:int_bound}
	\end{equation}%
where $N_{\nc}$ denotes the number of non-critical nodes.
It follows that, in case of integer matrices, the critical bound $B_\cc$ is
in $O(\lVert A \rVert \cdot N^3)$ for a given initial vector.

\section{Walk Reduction}\label{sec:red}

This section concerns step 3 of our strategy and constitutes its core. 
Given a walk~$W$, a positive integer~$d$, and a node~$k$ of~$W$, we define a
reduced walk,  denoted $\Red_{d,k}(W)$, such that
(a) it contains node~$k$ and has the same start and end nodes as~$W$,
(b) its length is in the same residue class modulo~$d$ as~$W$'s length, and 
(c) its length is bounded by $(d-1)+ 2d\,(N-1)$.

Properties (a) and (b) can be achieved by removing a collection of cycles from~$W$ whose
	combined length is divisible by~$d$ and whose removal retains connectivity
	to~$k$.
The key point of the reduction is that we can iterate this cycle removal until
	the resulting length is at most $(d-1)+ 2d\,(N-1)$.

We call a finite, possibly empty, sequence of nonempty subcycles~$\mathcal{S} = (C_1,C_2,\dots,C_n)$  
 	a {\em cycle pattern of a walk}~$W$ if there exist walks~$U_0,U_1,\dots,U_n$ such that 
\begin{equation}\label{eq:disjointness}
W = U_0\cdot C_1\cdot U_1 \cdot C_2 \cdots U_{n-1} \cdot C_n\cdot U_n\enspace.
\end{equation}
The choice of the~$U_m$'s in~\eqref{eq:disjointness} may be not unique,
	and we fix some global choice function to make it deterministic.
Then we define {\em the removal of}~$\mathcal{S}$ {\em from}~$W$ as
     $$ \Rem(W,\mathcal{S}) = U_0\cdot U_1\cdots U_n \enspace.$$ 
The walks~$W$ and~$\Rem(W,\mathcal{S})$ have the same start and end nodes.
Furthermore $\ell\big(\!\Rem(W,\mathcal{S})\big) = \ell(W) - \ell(\mathcal{S})$
      where $\ell(\mathcal{S}) = \sum_{C\in\mathcal{S}}\ell(C)$.
In particular,~$\Rem(W,\mathcal{S})=W$ if and only if~$\ell(\mathcal{S})=0$,
	i.e., $\mathcal{S}$ is the empty cycle pattern.

Given any node $k$ of a walk~$W$, let~${\mathbf S}_k(W)$ denote the set of
    cycle pattern~$\mathcal{S}$ of~$W$ whose removal does not impair connectivity to~$k$, 
	i.e.,~$k$ is a node of~$\Rem(W,\mathcal{S})$.
Further for any positive integer $d$, define ${\mathbf S}_{d,k}(W)$ as the subset of 
	cycle pattern  $\mathcal{S}\in{\mathbf S}_k(W)$ that, in addition, leave the length's 
	residue class modulo~$d$ intact, i.e., $\ell(\mathcal{S}) \equiv 0 \pmod d$.
The set~${\mathbf S}_{d,k}(W)$ is not empty, because~$k$ is a node of~$W$ and we can 
	hence choose~$\mathcal{S}$ to be the empty cycle pattern.

Choose~$\mathcal{S}\in {\mathbf S}_{d,k}(W)$ such that~$\ell(\mathcal{S})$ is maximal.
There may be several possible choices for~$\mathcal{S}$,  
	and we again fix some global choice function to make the choice deterministic;
	then set $$\Step_{d,k}(W)=\Rem(W,\mathcal{S})\enspace.$$
The limit 	$$\Red_{d,k}(W) = \lim_{t\to\infty} \Step_{d,k}^t(W)$$ exists because the 
	sequence of walks $(\Step_{d,k}^t(W))_{t\geq 0}$ is stationary after at
	most~$\ell(W)$ steps, and we call it {\em the $(d,k)$-reduction of}~$W$.
More specifically,~$\Red_{d,k}(W) = W$ if and only if $\mathbf{S}_{d,k}(W)$ is reduced
	to the sole empty cycle pattern.
The walks $W$ and $\Red_{d,k}(W)$ have the same start and end nodes.
Also,~$k$ is a node of~$\Red_{d,k}(W)$ and
$ \ell\big(\! \Red_{d,k}(W) \big) \equiv \ell(W) \pmod d$. 

Bounding the length of $\Red_{d,k}(W)$ relies on a simple arithmetic lemma
	which is an elementary application of the pigeonhole principle:

\begin{lemma}\label{lem:erdos}
Let $d$ be a positive integer and let $x_1,\dots,x_d\in\mathds{Z}$.
Then there exists a nonempty set $I\subseteq\{1,\dots,d\}$ such that  
$\displaystyle \sum_{i\in I} x_i \equiv 0 \pmod d$.
\end{lemma}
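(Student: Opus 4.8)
The plan is the standard pigeonhole argument on prefix sums. First I would introduce the $d+1$ partial sums
\begin{equation}
s_0 = 0, \qquad s_m = \sum_{i=1}^{m} x_i \quad\text{for } 1\le m\le d,\notag
\end{equation}
and reduce each of them modulo~$d$. Since there are $d+1$ values $s_0,s_1,\dots,s_d$ but only $d$ residue classes modulo~$d$, the pigeonhole principle yields two indices $0\le a<b\le d$ with $s_a\equiv s_b\pmod d$.

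Next I would set $I=\{a+1,a+2,\dots,b\}$. This set is nonempty because $a<b$, and it is a subset of $\{1,\dots,d\}$ because $b\le d$. By the telescoping identity $\sum_{i\in I}x_i = s_b - s_a$, and $s_b-s_a\equiv 0\pmod d$ by the choice of $a$ and $b$, which gives exactly the claimed congruence.

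There is essentially no obstacle here: the only thing to be careful about is that the argument genuinely needs $d+1$ sums (hence the inclusion of the empty prefix $s_0=0$), so that the nonemptiness of $I$ is guaranteed; a naive count of only the $d$ nonempty prefixes would not force a repeated residue. Everything else is a one-line verification, so in the write-up I would simply state the three sentences above without further elaboration.
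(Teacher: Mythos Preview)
Your argument is correct and is exactly the standard pigeonhole proof the paper has in mind: the paper does not spell out a proof but merely says the lemma ``is an elementary application of the pigeonhole principle,'' and the prefix-sum argument you give is precisely that application. Nothing needs to be changed.
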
%

\begin{theorem}\label{thm:red:upper:bound}
For each positive integer~$d$ and each node~$k$, 
	the length of the $(d,k)$-reduction of any walk~$W$ containing node~$k$
	is at most equal to $(d-1) + 2d \cdot (N-1)$:
	$$\ell\big(\!\Red_{d,k}(W) \big)  \leqslant (d-1) + 2d \cdot (N-1) \enspace.$$
\end{theorem}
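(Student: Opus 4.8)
The plan is to show that the reduction $\Red_{d,k}$ terminates in a walk $\hat W = \Red_{d,k}(W)$ that admits \emph{no} further nontrivial cycle removal respecting the two constraints, and then argue that any such ``irreducible'' walk must already be short. Concretely, $\hat W = \Step_{d,k}(\hat W)$, which by definition means that the only cycle pattern $\mathcal S \in \mathbf S_{d,k}(\hat W)$ is the empty one: every nonempty cycle pattern of $\hat W$ that keeps $k$ connected has combined length \emph{not} divisible by $d$. The heart of the argument is to derive a contradiction from the assumption that $\ell(\hat W) > (d-1) + 2d(N-1)$ by exhibiting such a nonempty $\mathcal S$ with $\ell(\mathcal S)\equiv 0\pmod d$.

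First I would fix a position of $k$ on $\hat W$, splitting $\hat W = P \cdot Q$ where $P$ ends at $k$ and $Q$ starts at $k$; removing any cycle pattern drawn entirely from the interior of $P$ or entirely from the interior of $Q$ automatically keeps $k$ connected, so $\mathbf S_k(\hat W)$ is generously large. Next, on each of $P$ and $Q$ separately I would harvest a long sequence of nonempty subcycles. The standard device: any walk of length $\geq N$ contains a nonempty cycle as a subwalk (by the pigeonhole principle on the $N+1$ nodes visited along any $N$ consecutive edges); peeling such a cycle off repeatedly, a walk of length $L$ yields at least $\lfloor L/N\rfloor$ — more carefully, at least $\lceil (L-(N-1))/N\rceil$ if we want them all of length $\le N$ — pairwise ``stacked'' nonempty subcycles $C_1,\dots,C_r$ forming a valid cycle pattern, each with $1\le \ell(C_m)\le N$. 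If $\ell(\hat W) > (d-1) + 2d(N-1)$, then after splitting we can guarantee (by choosing $k$'s position, or by pigeonhole, so that the longer of $P,Q$ has length $> (d-1)/2 + d(N-1)$, hence contains a cycle pattern consisting of at least $d$ nonempty subcycles — I would tune the constant $2d(N-1)$ precisely here) that at least one of $P$, $Q$ furnishes a cycle pattern $C_1,\dots,C_d$ of $d$ nonempty subcycles, lying in its interior so that removal preserves connectivity to $k$.

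Now I would invoke Lemma~\ref{lem:erdos} with $x_m = \ell(C_m)$: there is a nonempty $I\subseteq\{1,\dots,d\}$ with $\sum_{m\in I}\ell(C_m)\equiv 0\pmod d$. The subfamily $\mathcal S = (C_m)_{m\in I}$ is then a \emph{nonempty} cycle pattern of $\hat W$ with $\ell(\mathcal S)\equiv 0\pmod d$ whose removal leaves $k$ connected, i.e.\ $\mathcal S\in\mathbf S_{d,k}(\hat W)$ with $\ell(\mathcal S)>0$. This contradicts the defining property of $\hat W = \Red_{d,k}(W)$ (that $\mathbf S_{d,k}(\hat W)$ is reduced to the empty pattern), and the bound follows. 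The length bookkeeping I would handle as: $P$ and $Q$ each contribute a cycle pattern of length-$\le N$ cycles of total interior length, and to be sure that \emph{one side alone} contains $d$ such cycles one needs that side to have length at least $1 + (d-1) + (d-1)(N-1) = d + (d-1)(N-1)$ beyond what's unavoidably left over; doubling (for the two sides) and adding the $N-1$ slack per harvesting pass lands at $(d-1) + 2d(N-1)$.

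The main obstacle is this last constant-chasing step: making sure that the $\lfloor\cdot/N\rfloor$-style counting of extractable nonempty subcycles, split across the two sides of $k$ and accounting for the ``residual path'' of length $\le N-1$ that each extraction pass leaves behind, yields \emph{at least $d$} nonempty subcycles on a single side precisely when $\ell(\hat W)$ exceeds $(d-1)+2d(N-1)$, and no sooner. The conceptual content — pigeonhole for cycles plus Lemma~\ref{lem:erdos} for the modular selection plus the self-referential ``irreducibility'' of $\Red_{d,k}(W)$ — is straightforward; the care is entirely in the inequalities and in verifying that the chosen subcycles genuinely lie in the interior of one side so that connectivity to $k$ survives their removal.
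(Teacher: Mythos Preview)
Your overall plan---pass to the fixed point $\hat W=\Red_{d,k}(W)$, note that $\mathbf S_{d,k}(\hat W)$ contains only the empty pattern, harvest many nonempty subcycles forming a cycle pattern in $\mathbf S_k(\hat W)$, and apply Lemma~\ref{lem:erdos} to their lengths to manufacture a forbidden nonempty element of $\mathbf S_{d,k}(\hat W)$---is exactly the paper's. The gap is your insistence that \emph{one side alone} of the split $\hat W=P\cdot Q$ at $k$ must furnish $d$ disjoint subcycles. It need not: with $\ell(\hat W)=2dN-d$ the longer side may have length only about $dN-d/2$, whereas a walk decomposed into $d-1$ cycles of length $\le N$ separated by $d$ paths of length $\le N-1$ can have length up to $(d-1)N+d(N-1)=2dN-N-d$; so for $d\ge 2$ you cannot force $d$ cycles on one side from the stated hypothesis, and your one-sided bookkeeping does not close.

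The fix is simply to harvest from \emph{both} sides and combine. A left-to-right scan of $P$ (record the cycle between the two occurrences of the first repeated node, then restart the scan at its end---do \emph{not} peel and rescan the shortened walk, which can create cycles that are not subwalks of $P$) yields a genuine cycle pattern of $P$ whose intermediate $U$-pieces are paths; likewise for $Q$. Their concatenation is a cycle pattern $\mathcal S$ of $\hat W$ with $r=r_P+r_Q$ cycles, with $k$ in the joining $U$-piece, so every subsequence of $\mathcal S$ lies in $\mathbf S_k(\hat W)$. If $r\ge d$, Lemma~\ref{lem:erdos} gives a nonempty subsequence in $\mathbf S_{d,k}(\hat W)$, a contradiction. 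Hence $r\le d-1$, and $\hat W$ decomposes into at most $d-1$ cycles of length $\le N$ and at most $r+2\le d+1$ paths of length $\le N-1$, giving $\ell(\hat W)\le(d-1)N+(d+1)(N-1)=(d-1)+2d(N-1)$. The paper's own proof is the non-constructive variant of this: it picks $\mathcal S\in\mathbf S_k(\hat W)$ of \emph{maximal} total length, deduces it has at most $d-1$ cycles by the same Lemma~\ref{lem:erdos} argument, and then uses maximality to conclude that each $U_m$ not containing $k$ is already a path (else $\mathcal S$ could be enlarged), while the $U_r$ through $k$ splits at $k$ into two paths. Same count, same bound.
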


\begin{proof}
We denote~$\hat{W} = \Red_{d,k}(W)$.
By definition of the $(d,k)$-reduction, $\Red_{d,k}(\hat{W})=\hat{W}$. 
Let $\mathcal{S}$ be any cycle pattern of~$\hat{W}$ in~$\mathbf{S}_k(\hat{W})$,
	and let $n$ be the number of cycles of $\mathcal{S}$.
We first show that $n \leqslant d-1$.
Indeed, suppose for contradiction that $n\geqslant d$.
Then Lemma~\ref{lem:erdos} implies that
	there exists a nonempty subsequence of~$\mathcal{S}$ that is in ${\mathbf S}_{d,k}(\hat{W})$,
	which contradicts $\Red_{d,k}(\hat{W})=\hat{W}$.
	
Now let us choose $\mathcal{S}$ in~$\mathbf{S}_k(\hat{W})$ with maximal~$\ell(\mathcal{S})$.
If $\mathcal{S} = (C_1,C_2,\dots,C_n)$, then  there exist walks~$U_0,U_1,\dots,U_n$ such that 
	\begin{equation}
		\hat{W} = U_0\cdot C_1\cdot U_1 \cdot C_2 \cdots U_{n-1} \cdot C_n\cdot U_n\enspace.
		\notag
	\end{equation}

\begin{figure}[tbh]
\centering
\begin{tikzpicture}[>=latex',scale=1.0]
	\node[shape=circle,draw] (i) at (-6,0) {};
	\node[shape=circle,draw] (k) at (0,.8) {$\scriptstyle k$};
	\node[shape=circle,draw] (j) at (6,0) {};
	\node[shape=circle,draw] (n1) at (4,0) {};
	\node[shape=circle,draw] (n2) at (2,0) {};
	\node[shape=circle,draw] (n3) at (-2,0) {};
	\node[shape=circle,draw] (n4) at (-4,0) {};
	\node  at (-0,-0.3) {$U_r$};
	\draw[thick,->] (i) -- node[below] {$U_0$} (n4);
	\draw[thick,->,dotted] (n4) -- node[below] {} (n3);
	\draw[thick,->] (n3) -- node[above] {$W_1$} (k);
	\draw[thick,->] (k) -- node[above] {$W_2$} (n2);
	\draw[thick,->,dotted] (n2) -- node[below] {} (n1);
	\draw[thick,->] (n1) -- node[below] {$U_n$} (j);
	\draw[thick,->] (n4) .. controls +(1,1) and +(-1,1) .. node[above] {$C_1$} (n4);
	\draw[thick,->] (n3) .. controls +(1,1) and +(-1,1) .. node[above] {$C_r$} (n3);
	\draw[thick,->] (n2) .. controls +(1,1) and +(-1,1) .. node[above] {$C_{r+1}$} (n2);
	\draw[thick,->] (n1) .. controls +(1,1) and +(-1,1) .. node[above] {$C_n$} (n1);
\end{tikzpicture}
\caption{Structure of the reduced walk $\hat{W} = \Red_{d,k}(W)$}
\label{fig:lem:red:upper:bound}
\end{figure}
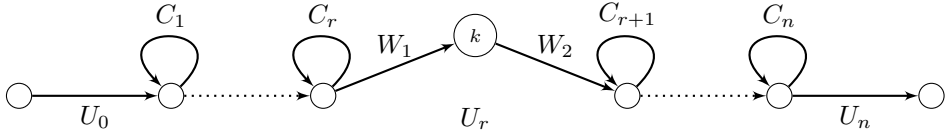
By definition of~$\mathbf{S}_k(\hat{W})$,~$k$ is a node of~$\Rem(\hat{W},\mathcal{S})$.
Hence there exists some index~$r$ such that~$k$ is a node of~$U_r$.
Each~$U_m$ with~$m\neq r$ is a (possibly empty) path, because otherwise we could
	add a nonempty subcycle of~$U_m$ to~$\mathcal{S}$, a contradiction to the maximality 
	of~$\ell(\mathcal{S})$.
Similarly, if~$U_r=W_1\cdot W_2$ such that~$k$ is the end node of~$W_1$, then
	both~$W_1$ and~$W_2$ are (possibly empty) paths.
Hence, apart from the at most $(d-1)$ cycles in~$\mathcal{S}$, the reduced
	walk~$\hat{W}$ consists of at most~$(d+1)$ subpaths.
Noting that each cycle has length at most~$N$ and each path has length at most~$(N-1)$
	concludes the proof.
\end{proof}

\section{Exploration Penalty}\label{sec:explorationpenalty}

One of the two pumping techniques that we develop in step~4 of our strategy for the
	construction of arbitrarily long closed walks in the critical graph
$G_\cc$
	consists in {\em exploring} one strongly connected component~$H$
of~$G_\cc$:
	The closed walks keep inside $H$, but may visit any node in~$H$.
For that, we first introduce for a strongly connected graph~$G$ the
	{\em exploration penalty\/} of~$G$,~$\EP(G)$, as the smallest integer~$e$
	such that for any node~$i$ and any integer~$n\geqslant e$ that is a 
	multiple of $G$'s cyclicity, there is a closed walk of length~$n$ 
	starting at~$i$.
The exploration penalty can be
	seen as the transient of diagonal entries in the sequence of Boolean matrix
	powers of the graph's adjacency matrix.
For us, it constitutes a threshold to pump walk
	lengths in multiples of the cyclicity.
We prove that $\EP(G)$ is finite, and from Brauer's Theorem~\cite{Bra42} we derive an 
		upper bound on $\EP(G)$ that is  quadratic in the number of nodes of~$G$.
This generalizes a theorem by Denardo~\cite{denardo} for 
		strongly connected graphs that are primitive, i.e., with cyclicity equal to 1.

\begin{theorem}\label{thm:EP}	
Let $G$ be a strongly connected graph with $N$ nodes, of girth~$g$ and
	cyclicity~$\gamma$.
The exploration penalty of $G$, denoted $\EP$, is finite and satisfies the inequality 
	$$\EP \leq \min \Big\{ N+(N-2)g\, , \, 2\frac{g}{\gamma}N -
			\frac{g}{\gamma}- 2g + \gamma\Big\}\enspace.$$
\end{theorem}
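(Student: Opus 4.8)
The plan is to prove the two upper bounds on $\EP$ separately, both building on the same core mechanism: Brauer's Theorem, which states that for coprime positive integers $a_1,\dots,a_r$ the largest integer not representable as a nonnegative integer combination (the Frobenius number) is bounded; more conveniently, once we pass to cycle \emph{lengths divided by $\gamma$}, the resulting set of integers is coprime, and every sufficiently large integer is a nonnegative combination of them. First I would reduce to the primitive case in spirit: fix a node $i$; by strong connectivity there is a short closed walk through $i$, and by considering cycle lengths along walks based at $i$ one sees that the set of lengths of closed walks at $i$ is a numerical semigroup (closed under addition, since closed walks at $i$ concatenate) containing only multiples of $\gamma$. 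Dividing by $\gamma$, Brauer's Theorem gives an explicit threshold beyond which every multiple of $\gamma$ is the length of a closed walk at $i$; finiteness of $\EP$ follows immediately, and taking the max over the $N$ nodes gives a finite bound. The quantitative work is to control that threshold uniformly in $i$.

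For the bound $N + (N-2)g$, the key step is to show that from any node one can reach, and return from, a shortest cycle (of length $g$) using a path of length at most something like $N-1$, and then that by pumping that $g$-cycle together with at most $N-2$ further adjustments one covers all residues needed; the Denardo-style argument for primitive graphs gives $N + (N-2)g$ when $\gamma = 1$, and I would check that the same counting survives verbatim when we replace ``every integer $\ge e$'' by ``every multiple of $\gamma$ that is $\ge e$'', since dividing through by $\gamma$ only shrinks the relevant quantities. Concretely: pick a cycle $C$ of length $g$ through some node $u$; for a target node $i$, take a shortest path $P$ from $i$ to $u$ and a shortest path $Q$ from $u$ to $i$, with $\ell(P)+\ell(Q) \le$ (something bounded by $N-1$ plus a girth term after accounting for the cycle we may traverse), and then close up by inserting copies of $C$ and of the cycle formed by $P\cdot Q$; a residue/pigeonhole argument over the at most $N-1$ distinct prefix-lengths shows every sufficiently large admissible length is hit. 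The second bound $2\frac{g}{\gamma}N - \frac{g}{\gamma} - 2g + \gamma$ comes from applying Brauer's Theorem more carefully: after dividing by $\gamma$, one has at most $\sim N/\gamma \cdot$(something) generators each of size $\le g/\gamma$, and Brauer's bound on the Frobenius number in terms of the two smallest generators (or the largest generator times the number of generators) yields the stated linear-in-$N$ expression once multiplied back by $\gamma$; I would track the constants so that the girth contributes the $-2g+\gamma$ correction and the leading term is $2\frac{g}{\gamma}N$.

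The main obstacle I expect is making the bound \emph{uniform over all nodes $i$} with the clean constants claimed: the naive approach (shortest path to a fixed short cycle and back) gives node-dependent detour lengths, and one must argue that the worst case over $i$ still fits inside $N + (N-2)g$ rather than something like $N + (N-1)g$ or worse. The trick is presumably to choose, for each $i$, a \emph{shortest} closed walk through $i$ as the ``base'' cycle (length $\le N$) rather than a globally fixed girth cycle, so that the detour is absorbed, and to note that the girth $g$ only enters through the \emph{generators} of the numerical semigroup, whose gcd is $\gamma$; then Brauer's Theorem applied to these generators — whose number is at most $N$ and whose values are at most $N$ but whose gcd structure is governed by $g$ — produces both bounds, the first by the ``largest generator times count'' form and the second by the ``two smallest generators'' form. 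I would also need to invoke finiteness of $\EP$ first (from Brauer applied in the raw form) before optimizing the constant, and to double-check the edge cases $g = \gamma$ (where the second bound should degrade gracefully) and $N = 1,2$.
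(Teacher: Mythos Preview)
Your overall approach---Brauer's Theorem applied to the numerical semigroup $\mathbf{N}_{i,i}$ of closed-walk lengths at each node~$i$, after dividing by~$\gamma$---is exactly the paper's strategy. But several of your details are off and one key lemma is missing.

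First, the paper does \emph{not} prove the bound $N+(N-2)g$ for general~$\gamma$. It observes that when $\gamma=1$ this bound is the smaller of the two and simply cites Denardo, while for $\gamma\geqslant 2$ the second expression is the smaller, so only \emph{that} bound needs a proof. Your plan to ``check that the same counting survives verbatim'' for general~$\gamma$ is unnecessary and likely does not go through cleanly.

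Second, your description of the generators is backwards: you write ``generators each of size $\leqslant g/\gamma$'', but $g$ is the \emph{minimum} cycle length, so the smallest generator (after dividing by~$\gamma$) is $g/\gamma$, not the largest. Taken literally this would force all generators equal to $g/\gamma$ and gcd~$>1$, a contradiction. What you need is the form of Brauer's Theorem with smallest and largest generator: if $\{a_1,\dots,a_k\}$ generates gcd~$d$ with $a_1\leqslant\cdots\leqslant a_k$, then every multiple of~$d$ at least $(a_1-d)(a_k-d)/d$ lies in the semigroup.

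Third, and most substantively, the step you are vague about is the heart of the argument: one must exhibit, for each node~$i$, a gcd-generator of $\mathbf{N}_{i,i}$ whose elements all lie in the interval $[g,\,2N-1]$. The paper does this by the following construction: for every cycle~$C_0$ in~$G$, take a shortest path~$W_1$ from~$i$ to~$C_0$ and a path~$W_2$ back, and record the lengths of both $W_1\cdot W_2$ and $W_1\cdot C_0\cdot W_2$. These lengths lie in $[g,2N-1]$, and their differences recover all cycle lengths, so their gcd is~$\gamma$. Your suggested ``trick'' of taking a shortest closed walk through~$i$ as the base does not by itself control the \emph{largest} generator, which is what drives the Brauer bound. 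Once the generator bound $[g,2N-1]$ is in hand, Brauer gives closed walks at a node~$j$ on a girth cycle of every admissible length at least $\gamma(g/\gamma-1)((2N-1)/\gamma-1)$; adding the detour $\ell(W_1)+\ell(W_2)\leqslant (N-g)+(N-1)$ yields exactly $2\frac{g}{\gamma}N-\frac{g}{\gamma}-2g+\gamma$.
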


After proving Theorem~\ref{thm:EP}, the authors learned that the problem of
bounding the exploration penalty has already been  studied by
several authors (e.g., see \cite{LS93} for a survey).  Two bounds that do not
include the girth~$g$ as a parameter were given by Wielandt~\cite{Wie50} for
primitive graphs and by Schwarz~\cite{Sch70} for the general case.  Wielandt's
bound on the exploration penalty of a primitive strongly connected graph
with~$N$ nodes is called the {\em Wielandt number\/} $W(N)=N^2-2N+2$.  Schwarz
generalized this result to arbitrary cyclicities~$\gamma$ and arrived at a
bound of $\gamma\cdot W\big(\lfloor N/\gamma\rfloor\big) + (N \bmod \gamma)$.
To the best of our knowledge, our new bound in Theorem~\ref{thm:EP} is the
first one for non-primitive graphs that includes the girth~$g$ as a parameter.
In general, it is incomparable with the bound of Schwarz and shows the effect
of the girth~$g$ on the exploration penalty as the leading term in Schwarz'
bound is $N^2/\gamma$ whereas ours is at most $2Ng/\gamma$.

\medskip

The rest of this section is devoted to the proof of Theorem~\ref{thm:EP}.
If~$\gamma=1$, then $N+(N-2)g \leqslant 2gN/\gamma - g/\gamma -2g + \gamma$,
	and the inequality~$\EP\leqslant N + (N-2)g$ is actually a result by
	Denardo~\cite[Corollary~1]{denardo}.
Otherwise~$\gamma\geqslant 2$, and we easily check that 
	$N+(N-2)g \geqslant 2gN/\gamma - g/\gamma -2g + \gamma$.
In this case, we thus have to prove the inequality~$\EP\leqslant 2gN/\gamma - g/\gamma -2g + \gamma$.

For any pair of nodes $i$ and $j$, let ${\mathbf N}_{i,j}$ be the set of integers defined by
	$${\mathbf N}_{i,j} = \{ n\in \mathds{N}^* \mid \Pa^n(i,j) \neq\emptyset\} \enspace.$$
Clearly each ${\mathbf N}_{i,i}$ is nonempty and closed under addition; 
	let $d_i = \gcd ({\mathbf N}_{i,i})$.
Since $G$ is strongly connected, 
	$$	\gamma = \gcd (\{ d_i \mid i \text{ is a node in }G \}) \enspace.$$

Let $\mathbf{N}$ be any nonempty set of positive integers.
We call a subset $\mathbf{A} \subseteq \mathbf{N}$  a {\em gcd-generator\/} of 
	$\mathbf{N}$ if $\gcd(\mathbf{A}) = \gcd(\mathbf{N})$.\footnote{%
As $\mathds{Z}$ is Noetherian, any nonempty set of positive integers admits a
	finite gcd-generator.}

\begin{lemma}\label{lem:semigroup}
A nonempty set\/ $\mathbf N$ of positive integers that is closed under addition contains all but a finite
		number of multiples of its greatest common divisor.
Moreover, if\/ $\{a_1,\dots, a_k \}$ is a finite gcd-generator of\/~$\mathbf N$ with $a_1 \leq \dots \leq a_k$,
	then any multiple $n$ of $d=\gcd({\mathbf N})$ such that $n\geq (a_1 -d)(a_k -d)/d$
	is in\/ $\mathbf N$.
\end{lemma}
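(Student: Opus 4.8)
The plan is to establish the statement in two parts, first the qualitative claim (cofiniteness) and then the quantitative Chicken-McNugget-style bound. For the first part, I would reduce to the normalized case by dividing out the gcd: if $d = \gcd(\mathbf N)$, then $\mathbf N / d = \{ m/d \mid m \in \mathbf N \}$ is again a set of positive integers closed under addition, now with gcd equal to $1$. So it suffices to show that a sub-semigroup of $\IN^*$ with gcd $1$ contains all sufficiently large integers; translating back by multiplying by $d$ gives the cofiniteness of multiples of $d$ in $\mathbf N$. This reduction also lets me work with the gcd-generator $\{a_1, \dots, a_k\}$ after dividing by $d$, so I may assume $\gcd(a_1, \dots, a_k) = 1$ and aim to show every integer $n \geq (a_1 - 1)(a_k - 1)$ lies in the semigroup generated (under addition) by $\{a_1, \dots, a_k\}$ — then rescaling by $d$ recovers the claimed threshold $(a_1 - d)(a_k - d)/d$.

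For the quantitative heart of the argument I would invoke the classical two-generator Frobenius/Sylvester result: if $\gcd(a, b) = 1$, every integer $n \geq (a-1)(b-1)$ is a nonnegative integer combination $n = \alpha a + \beta b$ with $\alpha, \beta \geq 0$. However, here $\gcd(a_1, a_k)$ need not be $1$ — only the gcd of the whole generating set is $1$ — so the two-generator result does not apply directly to $a_1, a_k$ alone. The key step is therefore to reduce the $k$-generator situation to a two-generator one. I would use Bézout: since $\gcd(a_1, \dots, a_k) = 1$, there are integers $c_1, \dots, c_k$ with $\sum_i c_i a_i = 1$; grouping the positive and negative coefficients, one obtains a representation $1 = P - Q$ where $P$ and $Q$ are nonnegative combinations of the $a_i$, and both $P, Q$ are themselves elements of the semigroup. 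Now $\gcd(P, Q) = 1$ (anything dividing both divides $1$), and both $P \leq $ something controlled and $Q$ controlled; applying the two-generator bound to $P$ and $Q$ shows every sufficiently large $n$ is in the semigroup. The bookkeeping challenge is to get the bound to come out as exactly $(a_1 - d)(a_k - d)/d$ rather than some weaker expression.

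To nail the precise constant I expect the cleaner route is a direct pigeonhole argument rather than the Bézout reduction: fix $n$ a multiple of $d$ with $n \geq (a_1 - d)(a_k - d)/d$, and after dividing by $d$ reduce to $n' \geq (a_1' - 1)(a_k' - 1)$ with $a_i' = a_i/d$ and $\gcd(a_1', a_k', \dots) = 1$. I would argue that among $n', n' - a_k', n' - 2a_k', \dots, n' - (a_1' - 1)a_k'$ — that is, $a_1'$ consecutive shifts by $a_k'$ — two are congruent modulo $a_1'$... but since $a_1'$ and $a_k'$ may share a factor this needs the full generating set. So instead: the set $R$ of residues mod $a_1'$ that are hit by semigroup elements is closed under addition of (the residues of) $a_2', \dots, a_k'$, hence is a subgroup of $\IZ / a_1' \IZ$, and since the $a_i'$ together with $a_1'$ generate gcd $1$, that subgroup is everything. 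So for each residue $\bar r$ there is a semigroup element $s_{\bar r} \equiv r \pmod{a_1'}$ with $s_{\bar r} \leq (a_1' - 1)(a_k' - 1) \leq n'$ (here is where I'd need to verify the bound $s_{\bar r}$ can be taken at most $(a_1' - 1) a_k'$, using that each $a_i' \leq a_k'$ and at most $a_1' - 1$ of them suffice); then $n' - s_{\bar r}$ is a nonnegative multiple of $a_1'$, so $n' = s_{\bar r} + t a_1' \in \mathbf N'$. The main obstacle is precisely this last estimate — showing that every residue class mod $a_1'$ is represented by a semigroup element of size at most $(a_1' - 1)(a_k' - 1)$ — which is where the interplay between "only the full set has gcd $1$" and "we want a bound phrased in terms of just $a_1$ and $a_k$" has to be handled carefully, likely by an induction on $k$ peeling off one generator at a time while tracking the partial gcds $\gcd(a_1, a_2), \gcd(a_1, a_2, a_3), \dots$
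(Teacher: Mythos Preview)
Your reduction---dividing every element by $d=\gcd(\mathbf N)$ to normalize to the case $\gcd=1$, then rescaling back---is exactly what the paper does. The paper, however, does not argue any further: it simply invokes Brauer's Theorem~\cite{Bra42}, which states directly that if $\gcd(a_1',\dots,a_k')=1$ with $a_1'\le\dots\le a_k'$, then every integer $m\ge(a_1'-1)(a_k'-1)$ is of the form $\sum_i x_i a_i'$ with $x_i\in\IN$. That is the whole proof.

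What you are sketching in your second and third paragraphs is, in effect, a proof of Brauer's Theorem itself. Your final plan---look at residue classes modulo $a_1'$, show each is represented by a semigroup element of bounded size, and do this by induction on~$k$ peeling off one generator at a time while tracking the partial gcds $\gcd(a_1',\dots,a_j')$---is precisely Brauer's original 1942 argument. The inductive step works because $\gcd\big(\gcd(a_1',\dots,a_{k-1}'),\,a_k'\big)=1$, so multiples of $a_k'$ hit every residue class modulo the partial gcd $e=\gcd(a_1',\dots,a_{k-1}')$; subtracting $j\cdot a_k'$ for some $0\le j\le e-1$ lands in a multiple of $e$ to which the inductive hypothesis applies, and the arithmetic then collapses to the bound $(a_1'-1)(a_k'-1)$. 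So your outline is correct but unfinished; rather than completing it, you could simply cite Brauer as the paper does. The detours through the two-generator Sylvester bound and the B\'ezout decomposition $1=P-Q$ are unnecessary and, as you noticed, do not deliver the sharp constant.
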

\begin{proof}
Consider the set $\mathbf M$ of all the elements in $\mathbf N$, divided by $d =\gcd ({\mathbf N})$.
By Brauer's Theorem~\cite{Bra42}, we know that every integer $m \geq (\frac{a_1}{d}-1)(\frac{a_k}{d}-1)$
	is of the form $$ m = \sum_{i=1}^k x_i  \frac{a_i}{d}$$
	where each $x_i$ is a nonnegative integer.
Since $\mathbf N$ is closed under addition, it follows that  every  multiple of $d$ that is greater or equal 
	to $(a_1-d)(a_k-d)/d$ is in $\mathbf N$.
In particular, 	all but a finite number of multiples of $d$ are in $\mathbf N$.
\end{proof}

\begin{lemma}\label{lem:Ni,j}
For any node $i$, $d_i = \gamma$.
Moreover, for any pair of nodes $i,j$, all the elements in ${\mathbf N}_{i,j}$ have the same
	residue modulo $\gamma$.
\end{lemma}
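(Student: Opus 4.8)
The plan is to move closed walks between nodes using strong connectivity. First I would show that $d_i=d_j$ for every pair of nodes. This is trivial when $i=j$, so suppose $i\ne j$ and, using strong connectivity, fix a walk $P\in\Pa(i,j)$ of length $p$ and a walk $Q\in\Pa(j,i)$ of length $q$; note $p\ge 1$. For any $c\in\mathbf{N}_{j,j}$, pick a closed walk $C$ at $j$ of length $c$. Then $P\cdot Q$ and $P\cdot C\cdot Q$ are closed walks at $i$ of lengths $p+q$ and $p+c+q$, both at least $1$, so both lie in $\mathbf{N}_{i,i}$; hence $d_i$ divides their difference $c$. As $c$ was an arbitrary element of $\mathbf{N}_{j,j}$, it follows that $d_i\mid\gcd(\mathbf{N}_{j,j})=d_j$, and by the symmetric argument $d_j\mid d_i$, so $d_i=d_j$. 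Since all the $d_i$ coincide, $\gamma=\gcd\{d_i\mid i\text{ a node of }G\}$ equals this common value, which proves $d_i=\gamma$ for every node $i$.

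For the second claim, fix nodes $i$ and $j$ and two elements $m,m'\in\mathbf{N}_{i,j}$. When $i=j$ there is nothing to do, since every element of $\mathbf{N}_{i,i}$ is a multiple of $d_i=\gamma$. When $i\ne j$, let $W,W'\in\Pa(i,j)$ have lengths $m$ and $m'$, and fix any $Q\in\Pa(j,i)$ of length $q$, again by strong connectivity. The walks $W\cdot Q$ and $W'\cdot Q$ are closed at $i$, of lengths $m+q\ge 1$ and $m'+q\ge 1$, so $d_i$ divides both $m+q$ and $m'+q$, hence $d_i\mid m-m'$. Since $d_i=\gamma$ by the first part, $m\equiv m'\pmod\gamma$.

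There is no serious obstacle here: the whole argument is a direct use of strong connectivity together with the fact that $\mathbf{N}_{i,i}$ is closed under addition, so that $d_i=\gcd(\mathbf{N}_{i,i})$ divides each of its elements. The only points needing a bit of care are checking that the concatenations are genuine walks (the end node of each factor is the start node of the next, which holds by construction) and that the relevant sets of lengths are nonempty, which was already observed before the statement.
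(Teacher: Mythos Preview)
Your proof is correct and follows the same overall strategy as the paper: transport closed walks between nodes via concatenation to show that all the $d_i$ coincide, and then handle the second claim by the same concatenation trick. Your execution is in fact slightly more elementary than the paper's. To prove $d_i\mid d_j$, the paper invokes Lemma~\ref{lem:semigroup} to produce an element $kd_j\in\mathbf{N}_{j,j}$ with $k$ coprime to~$d_i$, concludes that $d_i\mid kd_j$, and then uses the coprimality of~$k$ and~$d_i$; you instead show directly that $d_i\mid c$ for \emph{every} $c\in\mathbf{N}_{j,j}$, which immediately yields $d_i\mid\gcd(\mathbf{N}_{j,j})=d_j$ with no appeal to the semigroup lemma. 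The treatments of the second claim are essentially identical.
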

\begin{proof}
Let $i,j$ be any pair of nodes, and let $a \in {\mathbf N}_{i,j}$ and $b \in {\mathbf N}_{j,i}$.
The concatenation of a walk from~$i$ to~$j$ with a walk from~$j$ to~$i$ 
	is a closed walk starting at~$i$.
Hence $a + b \in {\mathbf N}_{i,i}$. 
From Lemma~\ref{lem:semigroup}, we know that ${\mathbf N}_{j,j}$ contains all the multiples of $d_j$ 
	greater than some integer.
Consider any such multiple $k d_j$ with $k$ and $d_i$ relatively prime integers.
By inserting one corresponding closed walk at node $j$ into the closed walk at $i$
	with length $a+b$, we obtain a new closed walk starting at $i$, i.e.,
	$a + k d_j + b \in {\mathbf N}_{i,i}$.
It follows that $d_i$ divides both $a+b$ and $a + k d_j + b$, and so~$d_i$ divides~$d_j$.
Similarly, we prove that~$d_j$ divides~$d_i$, and so $d_i = d_j$.
Because~$\gamma$ is the gcd of the~$d_i$'s, the common value of the $d_i$'s  is actually equal to
$\gamma$.

Let $a$ and $a'$ be two  integers in ${\mathbf N}_{i,j}$.
The above argument shows that both $a + b$ and $a' +b$ are in ${\mathbf N}_{i,i}$.
Hence $\gamma$ divides $a + b $ and $a'+ b$, and so also $a-a'$.
\end{proof}

\begin{lemma}\label{lem:Ni,i:generator}
For any node $i$, the set\/  ${\mathbf N}_{i,i}$ admits a gcd-generator 
	that contains the lengths of all the cycles starting at $i$, 
	and whose all elements $n$ satisfy the  inequality 
	$ g \leq n \leq 2N -1$.
\end{lemma}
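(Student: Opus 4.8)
Fix a node $i$ of the strongly connected graph $G$. We must produce a finite set $\mathbf{A}\subseteq\mathbf{N}_{i,i}$ that (i) contains the length of every nonempty cycle starting at $i$, (ii) has every element in the interval $[g,2N-1]$, and (iii) is a gcd-generator, i.e.\ $\gcd(\mathbf{A})=\gcd(\mathbf{N}_{i,i})$, which equals $\gamma$ by Lemma~\ref{lem:Ni,j}. The lower bound in (ii) is automatic: a nonempty closed walk at $i$ has a shortest nonempty closed subwalk, and that subwalk is necessarily a cycle of $G$ (otherwise it would properly contain a shorter nonempty closed subwalk, hence so would $W$, contradicting minimality), so its length---and therefore the length of the whole walk---is at least $g$. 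Thus \emph{any} subset of $\mathbf{N}_{i,i}$ meets the lower bound of (ii), and the real task is to exhibit a generator all of whose elements are \emph{also} at most $2N-1$, while still generating gcd $\gamma$ and still listing every cycle length at $i$.

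The construction manufactures short closed walks at $i$ by detouring through a single edge. Fix, for every node $j$, a shortest path $\pi_j$ from $i$ to $j$ and a shortest path $\rho_j$ from $j$ to $i$; these exist because $G$ is strongly connected, and each has length at most $N-1$. I take $\mathbf{A}$ to be the union of three finite families: the lengths $\ell(C)$ of all nonempty cycles $C$ through $i$ (each at most $N$); the numbers $\ell(\pi_u)+1+\ell(\rho_v)$ for every edge $(u,v)$ of $G$, each being the length of the closed walk $\pi_u\cdot(u,v)\cdot\rho_v$ at $i$, hence at most $(N-1)+1+(N-1)=2N-1$; and the numbers $\ell(\pi_j)+\ell(\rho_j)$ for every node $j\neq i$, each being the length of the closed walk $\pi_j\cdot\rho_j$ at $i$, hence at most $2N-2$. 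Every listed number is the length of a nonempty closed walk at $i$, so $\mathbf{A}\subseteq\mathbf{N}_{i,i}$ and, by the first paragraph, $\mathbf{A}\subseteq[g,2N-1]$; moreover $\mathbf{A}$ contains every cycle length at $i$ by construction and is finite. It remains only to prove (iii).

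Since $\mathbf{A}\subseteq\mathbf{N}_{i,i}$ and $\gamma=\gcd(\mathbf{N}_{i,i})$ by Lemma~\ref{lem:Ni,j}, $\gamma$ divides every element of $\mathbf{A}$, hence $\gamma\mid\gcd(\mathbf{A})$. For the converse it suffices to show that $\gcd(\mathbf{A})$ divides the length of every cycle of $G$, as $\gamma$ is the greatest common divisor of all cycle lengths. Let $C=(w_0,w_1,\dots,w_L=w_0)$ be a cycle of length $L$ in $G$. Summing the edge contributions around $C$,
\[
\sum_{l=1}^{L}\bigl(\ell(\pi_{w_{l-1}})+1+\ell(\rho_{w_l})\bigr)
 = L+\sum_{l=0}^{L-1}\bigl(\ell(\pi_{w_l})+\ell(\rho_{w_l})\bigr),
\]
because $w_L=w_0$ makes both the $\pi$-sum and the $\rho$-sum range over $w_0,\dots,w_{L-1}$. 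Each summand on the left is an element of $\mathbf{A}$, and each bracket on the right is either an element of $\mathbf{A}$ (when $w_l\neq i$) or $0$ (when $w_l=i$, since then $\pi_i$ and $\rho_i$ are empty). Hence $L$ is a $\mathds{Z}$-linear combination of elements of $\mathbf{A}$, so $\gcd(\mathbf{A})\mid L$. Therefore $\gcd(\mathbf{A})\mid\gamma$, and together with $\gamma\mid\gcd(\mathbf{A})$ this gives $\gcd(\mathbf{A})=\gamma=\gcd(\mathbf{N}_{i,i})$, so $\mathbf{A}$ is a gcd-generator with the required properties.

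The delicate point---and the reason a naive choice such as ``all elements of $\mathbf{N}_{i,i}$ that are at most $2N-1$'' is hard to analyse---is that a long closed walk at $i$ need not revisit $i$ internally, so it cannot in general be cut into shorter closed walks at $i$; the cycles one peels from it are anchored at other nodes. The detour construction circumvents this by recording, for each edge and each node, a \emph{canonical} short closed walk at $i$ through it, chosen precisely so that the ``go out and come back'' portions cancel modulo the subgroup generated by $\mathbf{A}$ when one sums around an arbitrary cycle of $G$. Establishing that cancellation, i.e.\ the displayed identity, is the only genuine step; everything else is bookkeeping on lengths.
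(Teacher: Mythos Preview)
Your proof is correct and complete. The construction is clean, the bound $2N-1$ is met by each family, and the telescoping identity does exactly what you claim: every cycle length of $G$ is a $\mathds{Z}$-linear combination of elements of~$\mathbf{A}$, so $\gcd(\mathbf{A})\mid\gamma$, while the reverse divisibility is immediate from $\mathbf{A}\subseteq\mathbf{N}_{i,i}$ and Lemma~\ref{lem:Ni,j}.

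Your approach, however, differs genuinely from the paper's. The paper indexes its generator by the \emph{cycles} of~$G$: for each cycle~$C_0$ it takes a shortest path~$W_1$ from~$i$ to~$C_0$ (ending at some node~$j$ of~$C_0$) and a path~$W_2$ back to~$i$, and records the lengths of both $W=W_1\cdot W_2$ and $W'=W_1\cdot C_0\cdot W_2$. The key arithmetic fact is then simply $\ell(W')-\ell(W)=\ell(C_0)$, so $\gcd$ of the generator divides every cycle length immediately. To reach the bound $2N-1$ the paper needs the refinement $\ell(W_1)\leqslant N-\ell(C_0)$, which holds because a shortest path from~$i$ to~$C_0$ cannot touch~$C_0$ before its endpoint. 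By contrast, you index by the \emph{edges} and \emph{nodes} of~$G$ using fixed shortest-path detours $\pi_j,\rho_j$, and obtain divisibility via your telescoping sum rather than a single difference. Your argument avoids the $N-\ell(C_0)$ refinement entirely (plain shortest-path bounds suffice) and produces a generator of polynomial size $O(\lvert E\rvert+N)$ plus the required cycles at~$i$, whereas the paper's generator has up to two entries per cycle of~$G$ and may be exponential; neither size matters for the lemma, but it distinguishes the two constructions.
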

\begin{proof}
Let $i$ be any node of $G$, and let $C_0$ be any cycle.
Let $W_1$ be one of the shortest paths from $i$ to $C_0$,  and set $j= \End(W_1)$.
Without loss of generality, $\Start(C_0)=j$.
By definition, $\ell(W_1) \leq N-\ell(C_0)$.
Then consider a path $W_2$ from $j$ to $i$, and the two closed walks
	$$W = W_1 \cdot W_2 \mbox{ and } W' = W_1 \cdot C_0 \cdot W_2\enspace.$$ 
Note that $$\ell(W) \leq \ell(W') \leq 2N-1 \enspace.$$
Moreover if the walk $W$ is nonempty, then 
	$$\ell(W)\geqslant g \enspace,$$
	because~$W$ is closed.
In the particular case $i$ is a node of $C_0$,  $W$ is the empty walk starting
	at $i$, $W'$ reduces to $C_0$,
	and $\ell(W')$ is the length of the cycle $C_0$. 

Let $\mathbf{N}_i$ be the set of the lengths of the nonempty closed walks~$W$ and~$W'$ when 
	considering all the cycles~$C_0$ in~$G$.
Then, $\mathbf{N}_i$ contains the length of all the cycles starting at $i$.
Let $\delta_i =\gcd(\mathbf{N}_i)$.
Since $\mathbf{N}_i \subseteq \mathbf{N}_{i,i}$,  $d_i$ divides $\delta_i$.
Conversely, let $C_0$ be any cycle, and let $W$ and $W'$
	be the two closed walks starting at node $i$ defined above;~$\delta_i$ divides both
	$\ell(W)$ and $\ell(W')$, and so divides $\ell(W') - \ell(W) = \ell(C_0)$.
Hence,~$\delta_i$ divides the length of any cycle, i.e., $\delta_i$ divides $\gamma$.
By Lemma~\ref{lem:Ni,j}, it follows that~$\delta_i$ divides~$d_i$.
Consequently, $\delta_i = d_i$, i.e., $\mathbf{N}_i$ is a gcd-generator of $\mathbf{N}_{i,i}$.
\end{proof}

\begin{lemma}\label{lem:bernadette:denardo}
For any node $i$ and any integer 
	$n$ such that $n$ is a multiple of $\gamma$  and 
	$n \geqslant 2Ng/\gamma- g/\gamma - 2g + \gamma$,
	there exists a closed walk of length~$n$ starting at~$i$. 
\end{lemma}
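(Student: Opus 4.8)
The plan is to reduce an arbitrary node $i$ to a node lying on a cycle of minimum length $g$. For such a node, Lemma~\ref{lem:Ni,i:generator} supplies a gcd-generator of its semigroup of return lengths that contains the \emph{small} number $g$, and this is exactly what makes Lemma~\ref{lem:semigroup} output a bound linear, rather than quadratic, in $g$. Having handled that node, I would route from $i$ to it and back, and check that the routing overhead is small enough to be swallowed by the claimed bound.

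Concretely, first fix a cycle $C^{*}$ of length $g$. Given $i$, take a shortest path $W_{1}$ from $i$ to the set of nodes of $C^{*}$, call its end node $j^{*}$ (a node of $C^{*}$), and assume $\Start(C^{*})=j^{*}$. Minimality of $W_{1}$ forces it to touch $C^{*}$ only at $j^{*}$, so the $\ell(W_{1})+1$ nodes of $W_{1}$ and the $g$ nodes of $C^{*}$ share exactly one node, giving $\ell(W_{1})\le N-g$. Take also a shortest path $W_{2}$ from $j^{*}$ to $i$, so $\ell(W_{2})\le N-1$; then $W_{1}\cdot W_{2}$ is a closed walk at $i$ of length $p:=\ell(W_{1})+\ell(W_{2})\le 2N-1-g$, and $p$ is a multiple of $\gamma$ (if $W_{1}\cdot W_{2}$ is nonempty then $p\in\mathbf{N}_{i,i}$ and Lemma~\ref{lem:Ni,j} applies; otherwise $p=0$).

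Next I bound the exploration at $j^{*}$. By Lemma~\ref{lem:Ni,i:generator}, $\mathbf{N}_{j^{*},j^{*}}$ has a gcd-generator all of whose elements $n$ satisfy $g\le n\le 2N-1$ and which contains the lengths of all cycles starting at $j^{*}$; as $j^{*}$ lies on $C^{*}$, it contains $g$. Retaining $g$ together with finitely many further elements whose gcd is still $\gamma$ (possible because $\mathds{Z}$ is Noetherian, and the gcd equals $\gamma=d_{j^{*}}$ by Lemmas~\ref{lem:Ni,j} and~\ref{lem:Ni,i:generator}), I get a finite gcd-generator $\{a_{1},\dots,a_{k}\}$ of $\mathbf{N}_{j^{*},j^{*}}$ with $a_{1}=g\le\dots\le a_{k}\le 2N-1$. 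Lemma~\ref{lem:semigroup} then yields that every multiple of $\gamma$ that is at least $(a_{1}-\gamma)(a_{k}-\gamma)/\gamma$, hence every multiple of $\gamma$ that is at least $(g-\gamma)(2N-1-\gamma)/\gamma$, belongs to $\mathbf{N}_{j^{*},j^{*}}$.

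Finally, given a multiple $n$ of $\gamma$ with $n\ge 2Ng/\gamma-g/\gamma-2g+\gamma$, set $n':=n-p$, again a multiple of $\gamma$, and use the identity $(g-\gamma)(2N-1-\gamma)/\gamma=\bigl(2Ng/\gamma-g/\gamma-2g+\gamma\bigr)-(2N-1-g)$ together with $p\le 2N-1-g$ to obtain $n'\ge(g-\gamma)(2N-1-\gamma)/\gamma\ge0$. If $n'>0$, the previous paragraph gives a closed walk $Z$ of length $n'$ at $j^{*}$, so $W_{1}\cdot Z\cdot W_{2}$ is a closed walk of length $p+n'=n$ at $i$; if $n'=0$ then $n=p$ and $W_{1}\cdot W_{2}$ already works. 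The only delicate points are the estimate $\ell(W_{1})\le N-g$ from minimality of $W_{1}$, noticing that $g$ sits in the generator of $\mathbf{N}_{j^{*},j^{*}}$, and verifying the displayed arithmetic identity; none of these is a real obstacle, since the substantive work is already packaged in Lemmas~\ref{lem:semigroup}, \ref{lem:Ni,j}, and~\ref{lem:Ni,i:generator}.
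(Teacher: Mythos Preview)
Your proof is correct and follows essentially the same approach as the paper: route from $i$ to a node $j^{*}$ on a girth cycle via shortest paths (giving $\ell(W_1)\le N-g$ and $\ell(W_2)\le N-1$), invoke Lemma~\ref{lem:Ni,i:generator} to get a gcd-generator of $\mathbf{N}_{j^{*},j^{*}}$ containing $g$ and bounded above by $2N-1$, apply Lemma~\ref{lem:semigroup}, and check the arithmetic identity. Your treatment is slightly more careful in two places---explicitly handling $n'=0$ and noting that a finite subgenerator is needed for Lemma~\ref{lem:semigroup}---but the argument is the same.
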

\begin{proof}
Let $i$ be any node, and let $C_0$ be any cycle
	such that $\ell(C_0) = g$.
Let~$W_1$ be one of the shortest walks from $i$ to $C_0$,  and set $j= \End(W_1)$.
Without loss of generality, $\Start(C_0)=j$.
By definition, $\ell(W_1) \leq N- g$.
Then consider a path~$W_2$ from $j$ to $i$; we have $\ell(W_2) \leq N-1$.
The walk $W_1 \cdot W_2$ is closed at node $i$, and so $\gamma$ divides 
	$\ell(W_1) + \ell(W_2)$.
Hence, if~$\gamma$ divides some integer $n$, then $\gamma$ also divides $n - \ell(W_1) - \ell(W_2)$.
It is $g\in \mathbf{N}_{j,j}$.
By Lemma~\ref{lem:Ni,i:generator}, there exists a gcd-generator~$\mathbf{N}_j$ of~$\mathbf{N}_{j,j}$ such that $g\in\mathbf{N}_j$ and $g\leqslant n\leqslant 2N-1$ for all $n\in\mathbf{N}_j$.

By Lemma~\ref{lem:semigroup}, for any~$n$ such that
	$n' = n - \ell(W_1) - \ell(W_2)$ is a multiple of~$\gamma$ and
	$$ n' \geq \gamma\,\left(\frac{g}{\gamma} -1\right)
\left(\frac{2N-1}{\gamma} -1\right)\enspace,$$
	there exists a closed walk $C$ starting at node $j$ of length $\ell(C) = n'$.
Note that
\[
\gamma\left(\frac{g}{\gamma}-1\right)\left(\frac{2N-1}{\gamma}-1\right)+(N-g)+(N-1)
= 2\frac{g}{\gamma}N - \frac{g}{\gamma} - 2g + \gamma \enspace.\]
In this way, for any integer $n \geqslant 2Ng/\gamma - g/\gamma -2g+ \gamma$ 
 	that is a multiple of $\gamma$, we construct 
	$W = W_1 \cdot C \cdot W_2$ that is a closed walk at node $i$ of length $n$.
\end{proof}

Theorem~\ref{thm:EP} immediately follows from Lemma~\ref{lem:bernadette:denardo}.

\section{Repetitive and Explorative Transience Bounds}\label{sec:bounds}

We now follow the strategy laid out in
     Section~\ref{sec:bounds:outline}  to prove two new bounds on
     system transients.
They mainly differ in step~4 of the strategy, namely, in the way one
     completes the reduced walk~$\Red_{d,k}(W)$ with critical closed
     walks to reach the desired length~$n$.
Naturally this has implications on the appropriate choices for the
     walk reduction parameters~$d$ and~$k$ used in step~3.
	
Let~$A$ be an irreducible $N\times N$ max-plus matrix with $\lambda(A) = 0$, 
	and let $v$ be a vector in $\IR^N$.
Recall that~$\pi$ is chosen to be the least common multiple of cycle lengths in 
	the critical subgraph~$G_\cc$.
Let~$i$ be any node, and let~$B$ and $n$ be two positive integers
	such that $n \geqslant B \geqslant B_\cc$.
Since $\lambda(A) = 0$, there exists a walk $W$ that is an $\realrem{B}{n,\pi}$-realizer 
	for~node $i$.
By definition of~$\realrem{B}{n,\pi}$, $\ell(W) \geqslant B$, and
	walk~$W$ is a $\{\ell(W)\}$-realizer for~node~$i$. 
Proposition~\ref{prop:n:zero} shows that $W$ contains at least one critical node~$k$.
Let~$H$ denote the strongly connected component of~$G_\cc$ containing~$k$.

We consider~$d$ to be any divisor of $\pi$.
By construction,  $\hat{W}= \Red_{d,k}(W)$ is obtained  by removing a collection 
	of cycles from~$W$, and starts at the same node~$i$ as~$W$.
Since $\lambda(A)=0$, this implies 
	\begin{equation}\label{eq:weight}
		A_v (\hat{W}) \geqslant A_v(W) \enspace.
	\end{equation}
Moreover, walk~$\hat{W}$ contains the critical node~$k$, and
	its length $\ell(\hat{W})$ is in the same residue class modulo~$d$ as~$\ell(W)$.
By Theorem~\ref{thm:red:upper:bound}, we have
\begin{equation}\label{eq:length}
	\ell(\hat{W}) \leqslant (d-1) + 2d \cdot (N-1) \enspace.
	\end{equation}
	
For the {\em repetitive\/} bound, we use a single critical cycle~$C$ to
	complete~$\hat{W}$; see Figure~\ref{fig:lem:realizer:non1}.
Let~$C$ be a cycle with length equal to the girth~$g(H)$.
We can assume that~$k$ is a node of~$C$: In case~$k$ is not a node of $C$,
        we modify $W$ by inserting~$\pi$ copies of a critical closed walk in $H$
        that connects~$W$ to~$C$.
Indeed, the addition of this critical closed walk changes neither the residue class 
	modulo~$\pi$ nor the $A_v$-weight since $\lambda(A) =0$.
We now choose
\[d=g(H) \enspace.\]
From~(\ref{eq:length}), we derive that  
	$n \geqslant \ell(\hat{W})$ when  $B \geqslant (g(H)-1) + 2g(H) \cdot (N-1)$,
	and we complete the reduced walk~$\hat{W}$ to
	length~$n$ by adding copies of~$C$.

For the {\em explorative\/} bound, we choose
\[d=\gamma(H)\]
 and use the
	definition of the exploration penalty~$\EP(H)$.
From~(\ref{eq:length}), we derive that $n \geq \ell(\hat{W})+ \EP(H)$ when
	$B \geqslant (\gamma(H)-1) + 2\gamma(H) \cdot (N-1) + \EP(H)$.
By definition of $\EP(H)$ and since $n-\ell(\hat{W}) \ge \EP(H)$, we can 
        complete~$\hat{W}$ to
	length~$n$ by a critical closed walk in~$H$; see Figure~\ref{fig:lem:realizer}.
	
In each of the two completions, the resulting walk is of length~$n$, starts at node~$i$,  
	and ends at the same node as $\hat{W}$.
With~(\ref{eq:weight}), we deduce that its $A_v$-weight is at least $A_v(W)$.
Thereby, it is an $\realrem{B}{n,\pi}$-realizer for node~$i$ of length~$n$.
By Proposition~\ref{prop:final:step}, the repetitive and explorative completions 
	finally give the following upper bounds on system transients.

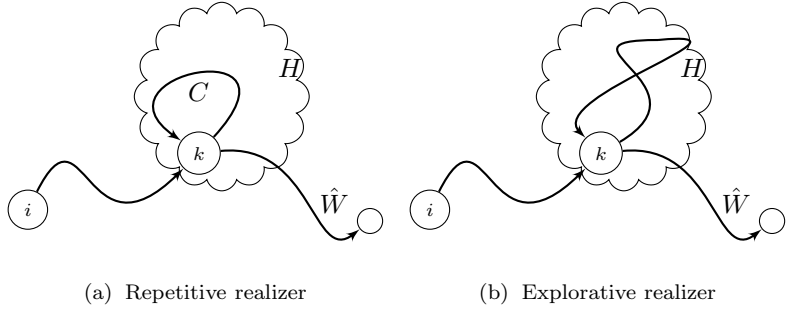
\begin{figure}[bht]
\centering
\subfigure[$\!$ Repetitive realizer]{
\begin{tikzpicture}[>=latex',scale=0.75]
        \node[shape=circle,draw] (i) at (-3,-1) {$\scriptstyle i$};
        \node[shape=circle,draw] (k) at (0,0) {$\scriptstyle k$};
        \node[shape=circle,draw] (j) at (3,-1.2) {};
        \draw[thick,->] (i) .. controls (-2,1) and (-2,-2)
..node[midway,right]{}  (k);
        \draw[thick,->] (k) .. controls (2,0.2) and (2,-2)  ..
node[midway,below,right=1pt]{$\hat{W}$}  (j);
        \draw[thick,->] (k) .. controls +(2,2.2) and +(-2,1.4)  .. node[below]
{$C$}  (k);
        \node[cloud, cloud puffs=16, draw,minimum width=2.3cm, minimum
height=2.5cm] at (0.4,1) {};
        \node at (1.6,1.5) {$H$};
\end{tikzpicture}
\label{fig:lem:realizer:non1}
}
\subfigure[$\!$ Explorative realizer]{
\begin{tikzpicture}[>=latex',scale=0.75]
        \node[shape=circle,draw] (i) at (-3,-1) {$\scriptstyle i$};
        \node[shape=circle,draw] (k) at (0,0) {$\scriptstyle k$};
        \node[shape=circle,draw] (j) at (3,-1.2) {};
        \draw[thick,->] (i) .. controls (-2,1) and (-2,-2)  ..
node[midway,above]{}  (k);
        \draw[thick,->] (k) .. controls (2,0.2) and (2,-2)  ..
node[midway,below,right=1pt]{$\hat{W}$}  (j);
        \draw[thick] (k) .. controls (2,1.2) and (-1,2)  .. node[near
start,left]{}  (1,2);
        \draw[thick,->] (1,2) .. controls (3,2.2) and (-1,1)  ..  (k);
        \node[cloud, cloud puffs=16, draw,minimum width=2.3cm, minimum
height=2.5cm] at (0.4,1) {};
        \node at (1.6,1.5) {$H$};
\end{tikzpicture}
\label{fig:lem:realizer}
}
\caption{Repetitive and explorative realizers}
\end{figure}

\begin{theorem}[Repetitive Bound]\label{thm:nonexplorative}
Denoting by~$\hat{g}$ the maximum girth of strongly connected components
	of~$G_\cc$, the transient of the linear max-plus system $\langle A,v\rangle$ is at most
\[ \max \left\{ \frac{\lVert v \rVert + \big(\Delta_\nc - \delta\big)
\cdot (N-1) }{ \lambda - \lambda_\nc}  \ ,\  (\hat{g}-1) + 2\,\hat{g}\cdot(N-1) \right\} \enspace.\]
\end{theorem}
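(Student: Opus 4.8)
The plan is to instantiate the general strategy of Section~\ref{sec:bounds:outline} with the repetitive completion, and then invoke Proposition~\ref{prop:final:step}. First I would reduce to the normalized case $\lambda(A)=0$, which is legitimate since the system transients of $\langle A,v\rangle$ and $\langle A-\lambda(A),v\rangle$ coincide and $\Delta_\nc-\delta$ and $\lambda-\lambda_\nc$ are invariant under this shift. Set $B$ to be the maximum appearing in the statement; note $B\ge B_\cc$ because the first term of the max is at least the corresponding term in $B_\cc$ and $(\hat g-1)+2\hat g(N-1)\ge N$. I would then fix an arbitrary node $i$ and integer $n\ge B$, and take an $\realrem{B}{n,\pi}$-realizer $W$ for $i$, which exists since $\lambda(A)=0$ guarantees realizers for every nonempty $\mathbf N\subseteq\IN$.

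The core construction proceeds exactly as in the text preceding the theorem. Since $\ell(W)\ge B\ge B_\cc$, Proposition~\ref{prop:n:zero} gives a critical node $k$ on $W$; let $H$ be the strongly connected component of $G_\cc$ containing $k$. Pick a cycle $C$ in $H$ of length $g(H)$; by inserting $\pi$ copies of a critical closed walk in $H$ joining $k$ to $C$ (this changes neither the residue class modulo $\pi$ nor, since $\lambda(A)=0$, the $A_v$-weight), I may assume $k$ is a node of $C$. Now set $d=g(H)$, a divisor of $\pi$, and form $\hat W=\Red_{d,k}(W)$. By Theorem~\ref{thm:red:upper:bound}, $\ell(\hat W)\le(g(H)-1)+2g(H)(N-1)\le(\hat g-1)+2\hat g(N-1)\le B\le n$, where the monotonicity in $g(H)\le\hat g$ is the elementary fact that $g\mapsto(g-1)+2g(N-1)$ is increasing. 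Moreover $\ell(\hat W)\equiv\ell(W)\equiv n\pmod{g(H)}$ and $\ell(\hat W)\equiv\ell(W)\pmod\pi$, so $g(H)$ divides $n-\ell(\hat W)$; appending $(n-\ell(\hat W))/g(H)$ copies of $C$ at $k$ yields a walk $W'$ of length exactly $n$, starting at $i$ and ending where $\hat W$ ends. Since removing cycles only increases $A_v$-weight and appending critical cycles leaves it unchanged (both using $\lambda(A)=0$), $A_v(W')\ge A_v(\hat W)\ge A_v(W)$, so $W'$ is itself an $\realrem{B}{n,\pi}$-realizer for $i$ of length $n$. Proposition~\ref{prop:final:step} then delivers that $B$ bounds the system transient.

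The only genuine subtlety is the adjustment step that forces $k$ onto $C$: one must check that the modified walk is still an $\realrem{B}{n,\pi}$-realizer (weight unchanged, length residue modulo $\pi$ unchanged, length still $\ge B$) and that $k$ remains a critical node of it, so that the subsequent reduction and pumping apply verbatim. I would also double-check that $d=g(H)$ divides $\pi$: $\pi=\lcm$ of cycle lengths in $G_\cc$, and $g(H)$ is the length of a cycle in $H\subseteq G_\cc$, hence divides $\pi$. Everything else is bookkeeping of the congruences and of the inequality chain $(\hat g-1)+2\hat g(N-1)\le B\le n$, together with the invariance of the relevant parameters under the normalization $\lambda(A)=0$.
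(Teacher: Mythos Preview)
Your proposal is correct and follows essentially the same approach as the paper's own argument in Section~\ref{sec:bounds}: normalize to $\lambda=0$, pick an $\realrem{B}{n,\pi}$-realizer, locate a critical node via Proposition~\ref{prop:n:zero}, adjust so that a girth cycle of its component passes through it, apply the $(g(H),k)$-reduction of Theorem~\ref{thm:red:upper:bound}, pump with copies of that cycle, and invoke Proposition~\ref{prop:final:step}. One small slip: the claim $\ell(\hat W)\equiv\ell(W)\pmod\pi$ is not guaranteed by $\Red_{d,k}$ (only congruence modulo $d=g(H)$ is), but you never use it---the conclusion $g(H)\mid n-\ell(\hat W)$ already follows from your preceding $\pmod{g(H)}$ chain, so the argument stands.
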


\begin{theorem}[Explorative Bound]\label{thm:explorative:nonprimitive}
Denoting by~$\hat{\gamma}$ and~$\hat{\EP}$ the maximum cyclicity and
maximum exploration penalty of strongly connected components
of~$G_\cc$, respectively, the transient of the  linear max-plus system $\langle A,v\rangle$ is at
most
\[ \max \left\{ \frac{\lVert v \rVert + \big(\Delta_\nc - \delta\big)
\cdot (N-1) }{ \lambda - \lambda_\nc}  \ ,\  (\hat{\gamma}-1) +
2\,\hat{\gamma}\cdot(N-1) + \hat{\EP} \right\} \enspace.\]
\end{theorem}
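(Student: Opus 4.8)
The plan is to instantiate the general strategy of Section~\ref{sec:bounds:outline}, most of which has in fact already been carried out in the discussion preceding the statement, and then to check that the displayed quantity is a legitimate value of the parameter~$B$ in Proposition~\ref{prop:final:step}. Concretely, I would argue that for this~$B$ and for $\pi=\lcm$ of the cycle lengths in $G_\cc$, every node~$i$ and every integer $n\geqslant B$ admit an $\realrem{B}{n,\pi}$-realizer of length exactly~$n$; Proposition~\ref{prop:final:step} then finishes the job.

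First I would normalize. Since the transients of~$A$ and of $\overline A=A-\lambda(A)$ coincide, while the quantities $\Delta_\nc-\delta$ and $\lambda-\lambda_\nc$ are invariant under this shift and $\lambda(\overline A)=0$, it suffices to treat the case $\lambda(A)=0$. Note $\pi$ is a multiple of $\gamma(A)$. Fix a node~$i$ and $n\geqslant B$. Because $\lambda(A)=0$ there is an $\realrem{B}{n,\pi}$-realizer~$W$ for~$i$, and since $\ell(W)\geqslant B\geqslant B_\cc$ while $W$ is in particular of maximum $A_v$-weight among walks of length exactly $\ell(W)$, Proposition~\ref{prop:n:zero} produces a critical node~$k$ on~$W$. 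Let~$H$ be the strongly connected component of~$G_\cc$ through~$k$.

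Next I would run the walk reduction with $d=\gamma(H)$, which divides~$\pi$, and set $\hat W=\Red_{d,k}(W)$. By construction $\hat W$ is obtained from~$W$ by deleting cycles, so (using $\lambda(A)=0$) $A_v(\hat W)\geqslant A_v(W)$; moreover $\hat W$ starts at~$i$, still contains~$k$, ends at the same node as~$W$, and $\ell(\hat W)\equiv\ell(W)\pmod d$. Theorem~\ref{thm:red:upper:bound} gives $\ell(\hat W)\leqslant(d-1)+2d(N-1)\leqslant(\hat\gamma-1)+2\hat\gamma(N-1)$. Hence $n-\ell(\hat W)\geqslant B-\ell(\hat W)\geqslant\hat{\EP}\geqslant\EP(H)$. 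Also, from $\ell(W)\equiv n\pmod\pi$ and $d\mid\pi$ we get $\pi\mid n-\ell(W)$, and with $\ell(\hat W)\equiv\ell(W)\pmod d$ this yields $d\mid n-\ell(\hat W)$; so $n-\ell(\hat W)$ is a nonnegative multiple of $\gamma(H)$ that is at least $\EP(H)$. By the definition of the exploration penalty (finite by Theorem~\ref{thm:EP}) there is a closed walk~$C$ in~$H$ starting and ending at~$k$ of length exactly $n-\ell(\hat W)$.

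Finally I would splice~$C$ into~$\hat W$ at an occurrence of~$k$, writing $\hat W=W_a\cdot W_b$ with $\End(W_a)=k=\Start(W_b)$ and $W'=W_a\cdot C\cdot W_b$. Then $\ell(W')=n$, $\Start(W')=i$, and $\End(W')=\End(W)$. Since~$C$ lies in~$G_\cc$ it is critical by Proposition~\ref{prop:paths:in:crit:comps:are:critical}, so $A(C)=0$ and $A_v(W')=A_v(\hat W)\geqslant A_v(W)$; as $W'$ lies in $\Pa^n(\ito)$ and hence in the set of walks with length in $\realrem{B}{n,\pi}$, it is itself an $\realrem{B}{n,\pi}$-realizer for~$i$, of length~$n$. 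Since $i$ and $n\geqslant B$ were arbitrary, Proposition~\ref{prop:final:step} gives that~$B$ bounds the transient of $\langle A,v\rangle$, which is the claim; the Repetitive Bound follows from the identical template with $d=g(H)$ and with $\hat W$ completed by repeated copies of a girth cycle of~$H$ through~$k$ (inserting, if necessary, $\pi$ copies of a critical closed walk of~$H$ so that~$k$ lies on that cycle). The only genuinely delicate point is the arithmetic bookkeeping in the last two paragraphs: one must guarantee that $n-\ell(\hat W)$ is simultaneously a multiple of $\gamma(H)$ and at least $\EP(H)$, which is exactly what the choice $B\geqslant(\hat\gamma-1)+2\hat\gamma(N-1)+\hat{\EP}$ together with $d\mid\pi$ secures; the rest is assembly of results already proved.
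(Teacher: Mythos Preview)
Your proposal is correct and follows essentially the same route as the paper: normalize to $\lambda=0$, pick an $\realrem{B}{n,\pi}$-realizer~$W$, use the critical bound to locate a critical node~$k$ in some component~$H$, reduce with $d=\gamma(H)$ via Theorem~\ref{thm:red:upper:bound}, and complete the reduced walk by a closed walk in~$H$ of the required length using the exploration penalty, then invoke Proposition~\ref{prop:final:step}. Your arithmetic check that $n-\ell(\hat W)$ is a multiple of~$\gamma(H)$ of size at least~$\EP(H)$ is exactly the paper's argument, spelled out a bit more explicitly.
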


Because~$\hat{g}$ is greater or equal to~$\hat{\gamma}$, the two
     bounds represent a tradeoff between choosing a larger
     multiplicative term versus the addition of the term~$\hat{\EP}$.
It depends on the critical subgraph~$G_\cc$ which of the two bounds is
     better, and our two bounds are  thus incomparable in general: As
     an example for which the explorative bound is lower than the
     repetitive bound, consider the family of graphs $E_k$ depicted in
     Figure~\ref{fig:E}: $E_k$ consists of two joint cycles of length
     $k$ and $k+1$, respectively.
All edges have zero weight.
Independent of the initial vector $v$, the critical bound is $N$,
     since $\lambda_{\nc} = -\infty$.
With $N=2k$, $\hat{g}=k$, and $\hat{\gamma}=1$, the repetitive bound
     is $4k^2-k-1$, and the explorative bound is at most $2k^2+4k-2$.
For $k\ge 3$ the explorative bound is strictly lower than the
     repetitive bound.
Conversely, the repetitive bound is lower than the explorative bound,
     if we add a self-loop at the node that is shared by the
     two cycles in the above example.

\begin{figure}[bt]
\centering
\begin{tikzpicture}[>=latex',scale=1]

\def \k {6}
\def \kp {7}
\def \radius {1cm}
\def \margin {15} %

\foreach \s in {1,...,\k}
{
  \node[draw, circle] at ({360/\k * (\s - 1)}:\radius) {};
  \draw[->, >=latex] ({360/\k * (\s - 1)+\margin}:\radius) 
      arc ({360/\k * (\s - 1)+\margin}:{360/\k * (\s)-\margin}:\radius);
}

\begin{scope}[shift={(-2cm,0)}]
\foreach \s in {1,...,\kp}
{
  \node[draw, circle] at ({360/\kp * (\s - 1)}:\radius) {};
  \draw[->, >=latex] ({360/\kp * (\s - 1)+\margin}:\radius) 
    arc ({360/\kp * (\s - 1)+\margin}:{360/\kp *
      (\s)-\margin}:\radius);

}
\end{scope}

\node at (0,0) {$k$};
\node at (-2cm,0) {$k+1$};

\end{tikzpicture}
\caption{Graphs $E_k$}
\label{fig:E}
\end{figure}
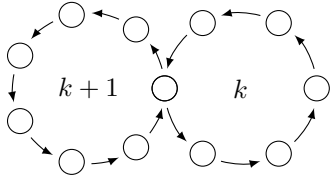

Interestingly, the two terms in our transience bounds
	that are due to the repetitive and explorative
	completions are both at most quadratic: this is obvious for 
	the repetitive term, %
	and is an immediate corollary of Theorem~\ref{thm:EP} for the explorative term.
In the case of integer matrices, for a given initial vector, both the repetitive and the
	explorative bounds are in $O(\lVert A \rVert \cdot N^3)$ since the critical bound
	itself is in $O(\lVert A \rVert \cdot N^3)$ in this case
        (see Equation~(\ref{eq:int_bound})).

Hartmann and Arguelles~\cite{hartmann:arguelles} established the best
previously known bound on system transients.
Their approach includes passing to the max-balancing~\cite{SS91} of~$G$ and
considering an increasing sequence of threshold graphs which all include the
critical subgraph. 
Their technique to increase the length of maximum weight walks is comparable
to our repetitive pumping technique.
They proved that the
 transient of system $\langle A,v\rangle$ is upper-bounded by $\max\big( (\lVert v\rVert + \lVert
A\rVert \cdot N ) / (\lambda - \lambda^0) \,,\,2N^2 \big)$
where~$\lambda^0$
 is defined in terms of the max-balancing of~$G$.
The first term in their bound is
in general incomparable with our critical bound,
whereas the second term, namely $2N^2$, is always strictly
larger than the second term in each of our two bounds and makes their bound at
least quadratic in~$N$.
Trivially, the minimum of our two bounds, and of  Hartmann and
Arguelles' bound, yields the best currently known bound.

\section{Matrix vs.\ System Transients}\label{sec:matrix}

As explained in Section~\ref{sec:bounds:outline},  we can follow the
     same strategy as for system transients to bound matrix
     transients.
For an $N \times N$ max-plus matrix~$A$, this leads to an upper bound that is 
     in $O\big(\lVert A\rVert \cdot N^2/(\lambda-\lambda_\nc)$, but
     gives no hint on the relationships between the transient of
     max-plus matrix~$A$, and the transients of the max-plus
     systems~$\langle A,v\rangle$.

In this section, we show that  the  transient of matrix~$A$ is
     actually equal to the transient of a specific system~$\langle
     A,v\rangle$ where  $\lVert v \rVert $ is in
     $O\big(\lVert A\rVert\cdot N^2\big)$, provided the system
     transient is sufficiently large, namely at most equal to some
     term quadratic in~$N$.
Combined with our upper bounds on the system transient established in
     Theorems~\ref{thm:nonexplorative}
     and~\ref{thm:explorative:nonprimitive}, this gives two upper bounds
     on the matrix transient which are also in $O\big(\lVert A\rVert
     \cdot N^2/(\lambda-\lambda_\nc)\big)$, and so in
     $O(\lVert A\rVert \cdot N^4)$ for integer matrices.

Let $n_A$ and $n_{A,v}$ denote the transient of matrix~$A$ and the  transient of 
	system~$\langle A,v\rangle$, respectively.
Obviously, $n_A$ is an upper bound on the $n_{A,v}$'s.
Conversely, the equalities 
	$A_{i,j}^{\otimes n} = \big( A^{\otimes n} \otimes e^{j} \big)_i$,
	where the $e^j$'s are the unit vectors defined by
	$e^{j}_i = 0$ if~$i=j$ and~$e^{j}_i=-\infty$ otherwise, show that
	$\max \big\{ n_{A,e^{j}} | j \in \{1,\cdots,N\} \big\} \geqslant n_A$.
	Hence, 
		\begin{equation}\label{eq:nA}
			\sup\big\{ n_{A,v} | v \in \IRmax^N \big\} = n_A \enspace.\notag
			\end{equation}
We now seek a similar expression of $n_A$, but with finite initial vectors~$v$, 
	i.e., with $v\in \IR^N$.
Reusing the notation~$\hat{\gamma}$ and~$\hat{\EP}$ from
	Theorem~\ref{thm:explorative:nonprimitive}, we define:
\begin{align}
 \Bunu &= 2(N - 1) + \hat{\EP} + (\EP(G) + \hat{\gamma}-1),  \notag\\
 \mu &= \sup \left\{ A_{i,h}^{\otimes n} - A_{i,j}^{\otimes n} \mid h,i,j \mbox{ nodes of } G\ ,\ 
  n\geqslant \Bunu\ ,\ A_{i,j}^{\otimes n} \neq -\infty \right\} \notag
\end{align}
Clearly $\mu$ is finite, i.e., $\mu\in \IR$.
Then we consider the {\em $\mu$-truncated\/} unit vectors obtained by replacing the infinite 
	entries of the $e^j$'s by~$-\mu$.
	
In Proposition~\ref{prop:mu} below, we show that if~$B\geqslant \Bunu$ and~$B$
	is a bound on the system transients for all $\mu$-truncated unit vectors,
	then~$B$ is also a bound on the matrix transient.  
A technical difficulty in the proof lies in the fact that, contrary to the sets $\Pa^n(\ito)$
	which occur in the expression of the $i$-th  component of linear systems,  the sets $\Pa^n(i,j)$
	that we consider for matrix powers may be empty. 
The next two lemmas deal with this technicality.

\begin{lemma}\label{lem:walkpenality}%
For any pair of nodes $i,j$ of~$G$ and any integer  $n \geqslant \EP(G) +
\gamma(G) + N-2$,
	there exists a walk~$W$ from~$i$ to~$j$ such that $n-\ell(W) \in \{0,
\dots, \gamma(G)-1\}$.
\end{lemma}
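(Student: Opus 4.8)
The plan is to reduce everything to the exploration penalty of the strongly connected graph $G=G(A)$. First I would fix a \emph{shortest} walk $P$ from $i$ to $j$; such a walk exists since $G$ is strongly connected, and being shortest it has length $\ell(P)\leqslant N-1$ (it is a path when $i\neq j$, and the empty walk when $i=j$). This gives a ``base'' walk whose length is already under control, and the remaining task is to lengthen it in a controlled way without changing its endpoints.

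Next I would pin down the exact target length. By Lemma~\ref{lem:Ni,j}, all walks from $i$ to $j$ share one common residue $r$ modulo $\gamma:=\gamma(G)$; in particular $\ell(P)\equiv r\pmod{\gamma}$. Among the $\gamma$ consecutive integers $n-\gamma+1,\dots,n$ there is exactly one, call it $m$, with $m\equiv r\pmod{\gamma}$, and then automatically $n-m\in\{0,\dots,\gamma-1\}$. So it suffices to produce a walk from $i$ to $j$ of length exactly $m$.

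To build such a walk I would prepend a closed walk at $i$ to $P$. The length to be made up is $m-\ell(P)$, which is a nonnegative multiple of $\gamma$ since both $m$ and $\ell(P)$ are congruent to $r$ modulo $\gamma$. The one arithmetic check is
$$ m-\ell(P)\ \geqslant\ (n-\gamma+1)-(N-1)\ =\ n-\gamma-N+2\ \geqslant\ \EP(G), $$
which uses precisely the hypothesis $n\geqslant \EP(G)+\gamma(G)+N-2$. Since $\gcd(\mathbf{N}_{i,i})=\gamma$ by Lemma~\ref{lem:Ni,j}, the definition of the exploration penalty provides a closed walk $C$ at node $i$ with $\ell(C)=m-\ell(P)$; setting $W=C\cdot P$ yields a walk from $i$ to $j$ with $\ell(W)=m$, hence $n-\ell(W)=n-m\in\{0,\dots,\gamma-1\}$, as required.

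I do not expect a genuine obstacle here: the two facts that need care are that all $i$--$j$ walks have a fixed residue modulo $\gamma$ and that the exploration penalty — defined via closed walks whose lengths are multiples of the cyclicity — applies at the node $i$, and both are already furnished by Lemma~\ref{lem:Ni,j} (which gives $d_i=\gamma$). The degenerate case $i=j$ is absorbed uniformly by letting $P$ be the empty walk.
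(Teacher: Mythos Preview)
Your proof is correct and follows essentially the same approach as the paper's: take a path from $i$ to $j$ of length at most $N-1$, then extend it by a closed walk of length divisible by $\gamma(G)$ supplied by the exploration penalty, with the same arithmetic check $n-(N-1)-(\gamma(G)-1)\geqslant\EP(G)$. The only cosmetic differences are that the paper appends the closed walk at the end node $j$ rather than prepending at $i$, and that the paper computes the residue $r$ of $n-\ell(W_0)$ directly (so $n-\ell(W_0)-r$ is automatically a nonnegative multiple of $\gamma(G)$) instead of invoking Lemma~\ref{lem:Ni,j} to identify the target residue class; neither difference is material.
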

\begin{proof}
Let~$i, j$ be any two nodes, and let $W_0$ be a path from $i$ to $j$.
For any integer $n$, consider the residue $r$ of $n-\ell (W_0)$ modulo $\gamma(G)$.
By definition of $\EP(G)$, if $n-\ell (W_0) - r \geqslant \EP(G)$, then there
	exists a closed walk $C$ starting at node $j$ with length equal to 
	$n-\ell (W_0) - r$.
Then, $W_0 \cdot C$ is a walk from $i$ to $j$ with length $n-r$, where
	$r \in \{0, \dots, \gamma(G)-1\}$.
The lemma follows since $n-\ell (W_0) - r \geqslant \EP(G)$ as soon as 
	$n \geqslant \EP(G) + (N-1) + \gamma(G) - 1$.
\end{proof}

\begin{lemma}\label{lem:ep:g}
Let $n$ be any integer such that $n\geqslant \EP(G)+ \gamma(G) + N -2$.
Then $A_{i,j}^{\otimes (n+\gamma(G))}=-\infty$ if and only if $A_{i,j}^{\otimes n} = -\infty$.
\end{lemma}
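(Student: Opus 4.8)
The plan is to reduce the claim to a residue-class characterization of finiteness. Since $G$ is strongly connected, for every pair of nodes the set $\mathbf{N}_{i,j}=\{m\in\IN^*\mid \Pa^m(i,j)\neq\emptyset\}$ is nonempty, and Lemma~\ref{lem:Ni,j} tells us that all its elements share one common residue, say $r_{i,j}$, modulo $\gamma(G)$. Recalling that $A_{i,j}^{\otimes m}\neq-\infty$ is equivalent to $m\in\mathbf{N}_{i,j}$ for $m\geq 1$, it suffices to prove the following: for every integer $m\geq \EP(G)+\gamma(G)+N-2$ one has
\[
A_{i,j}^{\otimes m}\neq-\infty \quad\Longleftrightarrow\quad m\equiv r_{i,j}\pmod{\gamma(G)}.
\]
Granting this, the lemma is immediate: both $n$ and $n+\gamma(G)$ are at least $\EP(G)+\gamma(G)+N-2$, and they are mutually congruent modulo $\gamma(G)$.

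For the characterization itself, the implication from left to right is just the definition of $r_{i,j}$. For the converse I would invoke Lemma~\ref{lem:walkpenality}: because $m$ satisfies its hypothesis, there is a walk $W$ from $i$ to $j$ with $m-\ell(W)\in\{0,\dots,\gamma(G)-1\}$. As $m$ is large, $W$ is nonempty, hence $\ell(W)\in\mathbf{N}_{i,j}$ and so $\ell(W)\equiv r_{i,j}\pmod{\gamma(G)}$. Together with the assumption $m\equiv r_{i,j}\pmod{\gamma(G)}$ this forces $m-\ell(W)\equiv 0\pmod{\gamma(G)}$, and since $0\le m-\ell(W)<\gamma(G)$ we conclude $\ell(W)=m$, i.e., $A_{i,j}^{\otimes m}\neq-\infty$.

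The one point that requires care is that Lemma~\ref{lem:walkpenality} only guarantees a walk whose length lies within $\gamma(G)-1$ of $m$, not one of length exactly $m$; it is the rigidity of residues modulo $\gamma(G)$ supplied by Lemma~\ref{lem:Ni,j} that eliminates this slack precisely when $m$ itself already belongs to the correct residue class. The remaining verifications---that $\mathbf{N}_{i,j}$ is nonempty, that the walk $W$ produced above is nonempty (which holds as $\ell(W)\ge m-\gamma(G)+1$ and $m\ge\gamma(G)$), and that the threshold $\EP(G)+\gamma(G)+N-2$ is met by both $n$ and $n+\gamma(G)$---are routine and present no obstacle.
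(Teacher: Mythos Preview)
Your proof is correct and follows essentially the same approach as the paper's: both combine Lemma~\ref{lem:walkpenality} (to produce a walk from~$i$ to~$j$ whose length is within $\gamma(G)-1$ of the target) with the residue rigidity of Lemma~\ref{lem:Ni,j} (to force that slack to vanish). The only difference is cosmetic---you first package the argument as an explicit residue-class characterization of finiteness for all large~$m$ and then read off the biconditional, whereas the paper proves the two implications of the biconditional directly; the underlying computation is identical.
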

\begin{proof}
It is equivalent to claim that $\Pa^{n+\gamma(G)}(i,j)=\emptyset$ if and
     only if $\Pa^n(i,j)=\emptyset$ for any integer~$n\geqslant \EP(G)+
\gamma(G) + N -2$.

Suppose $\Pa^{n+\gamma(G)}(i,j) \neq \emptyset$, and let  $W_0\in\Pa^{n+\gamma(G)}(i,j)$. 
By Lemma~\ref{lem:walkpenality}, there exists a walk $W\in\Pa(i,j)$ such that 
	$n=\ell(W)+r$ with $r\in\{0,1,\dots,\gamma(G)-1\}$.
Lemma~\ref{lem:Ni,j} implies that~$\gamma(G)$ divides $\ell(W_0)-\ell(W) =
(n+\gamma(G)) - (n-r) = \gamma(G)+r$; hence~$\gamma(G)$ divides~$r$.
Therefore,~$r=0$, i.e., $\ell(W) =n$ and thus~$\Pa^n(i,j)\neq\emptyset$.

The converse implication is proved similarly.
\end{proof}

\begin{proposition}\label{prop:mu}
If~$n\geqslant \Bunu$ and $A^{\otimes (n+\gamma)}\otimes v=A^{\otimes n}\otimes v$
for all $\mu$-truncated unit vectors~$v$, then $A^{\otimes (n+\gamma)}=A^{\otimes n}$.
\end{proposition}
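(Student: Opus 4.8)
The plan is to unpack the hypothesis $A^{\otimes(n+\gamma)}\otimes v = A^{\otimes n}\otimes v$ for each $\mu$-truncated unit vector $v = e^j_\mu$ (the vector with a $0$ in position~$j$ and $-\mu$ elsewhere), and to translate it, component by component, into the matrix equality $A_{i,j}^{\otimes(n+\gamma)} = A_{i,j}^{\otimes n}$ for each pair of nodes $i,j$. Fix $i$ and $j$. First I would dispose of the degenerate cases: by Lemma~\ref{lem:ep:g}, and since $n\geqslant \Bunu\geqslant \EP(G)+\gamma(G)+N-2$, the entry $A_{i,j}^{\otimes n}$ is $-\infty$ if and only if $A_{i,j}^{\otimes(n+\gamma)}$ is; so both sides are simultaneously infinite or simultaneously finite, and in the infinite case there is nothing to prove. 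Assume therefore $\Pa^n(i,j)\neq\emptyset$, so that there is a walk from $i$ to $j$ of length $n$ and also (by the same lemma) one of length $n+\gamma$.

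Next I would compute $\big(A^{\otimes n}\otimes e^j_\mu\big)_i$. By Proposition~\ref{prop:final:step}'s underlying correspondence, this equals $\max\{ A(W) + (e^j_\mu)_{\End(W)} \mid W\in\Pa^n(\ito)\}$. Splitting according to whether $\End(W)=j$ or not, this is
\[
\max\Big\{\ \max_{W\in\Pa^n(i,j)} A(W)\ ,\ \max_{h\neq j}\ \max_{W\in\Pa^n(i,h)} \big(A(W) - \mu\big)\ \Big\}
= \max\Big\{ A_{i,j}^{\otimes n}\ ,\ \max_{h\neq j} A_{i,h}^{\otimes n} - \mu \Big\}.
\]
The crux is to show the first term dominates, i.e. $A_{i,j}^{\otimes n} \geqslant A_{i,h}^{\otimes n} - \mu$ for all $h$: this is exactly the definition of $\mu$, provided $n\geqslant\Bunu$ and $A_{i,j}^{\otimes n}\neq-\infty$, which we have arranged. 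Hence $\big(A^{\otimes n}\otimes e^j_\mu\big)_i = A_{i,j}^{\otimes n}$, and the identical argument at length $n+\gamma$ — using that $A_{i,j}^{\otimes(n+\gamma)}\neq-\infty$ and $n+\gamma\geqslant\Bunu$ — gives $\big(A^{\otimes(n+\gamma)}\otimes e^j_\mu\big)_i = A_{i,j}^{\otimes(n+\gamma)}$. The hypothesis applied to $v=e^j_\mu$ then yields $A_{i,j}^{\otimes(n+\gamma)} = A_{i,j}^{\otimes n}$. Since $i$ and $j$ were arbitrary, $A^{\otimes(n+\gamma)} = A^{\otimes n}$.

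The main obstacle is the bookkeeping that makes the dominance step legitimate: one must be sure that $\mu$ is defined using exponents $n\geqslant\Bunu$ and that the walk counting lemmas (Lemmas~\ref{lem:walkpenality} and~\ref{lem:ep:g}) apply at \emph{both} $n$ and $n+\gamma$, which forces the threshold $\Bunu$ to absorb not just $\EP(G)+\gamma(G)+N-2$ but also the term needed to keep $n+\gamma$ above that threshold and to have the reduced/explored walks fit — this is precisely why $\Bunu = 2(N-1)+\hat{\EP}+(\EP(G)+\hat\gamma-1)$ is chosen as it is. Everything else is the routine max-plus/walk-weight translation already set up in the Preliminaries.
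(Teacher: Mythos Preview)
Your proposal is correct and follows essentially the same approach as the paper: fix $i,j$, use the $\mu$-truncated unit vector $e^j_\mu$, invoke Lemma~\ref{lem:ep:g} to synchronize the $-\infty$ cases, and then use the definition of~$\mu$ to show that $(A^{\otimes n}\otimes e^j_\mu)_i = A_{i,j}^{\otimes n}$ (and likewise at $n+\gamma$). One small detail you gloss over that the paper makes explicit: Lemma~\ref{lem:ep:g} is stated for the increment~$\gamma(G)$, not~$\gamma=\gamma(G_\cc)$, so one needs the observation that $\gamma$ is a multiple of~$\gamma(G)$ to apply it here; also, the walk-weight correspondence you invoke is the basic Proposition preceding Proposition~\ref{prop:final:step}, not Proposition~\ref{prop:final:step} itself.
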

\begin{proof}
Let~$i$ and~$j$ be nodes in $G$, and let~$n$ be an integer such that $n \geqslant \Bunu$.
Further let~$v$ be the $\mu$-truncated unit vector with~$v_j=0$ and~$v_h=-\mu$
for~$h\neq j$.
Since~$\Bunu \geqslant \EP(G) + \gamma(G) + N-2$ and~$\gamma=\gamma(G_\cc)$ is a
multiple of~$\gamma(G)$,
	we derive from Lemma~\ref{lem:ep:g} that $A_{i,j}^{\otimes n+\gamma}=-\infty$ if 
	and only if $A_{i,j}^{\otimes n}=-\infty$.
There are two cases to consider:
\begin{enumerate}
\item $A_{i,j}^{\otimes n}= -\infty$ and $A_{i,j}^{\otimes n+\gamma}= -\infty$.
In this case, 	$A_{i,j}^{\otimes n+\gamma} = A_{i,j}^{\otimes n}$ trivially holds.
\item  $A_{i,j}^{\otimes n}\neq-\infty$ and $A_{i,j}^{\otimes n+\gamma}\neq -\infty$.
Recall that 
	$$\left( A^{\otimes n} \otimes v \right)_i = \max
\big\{A_{i,h}^{\otimes n} + v_h \ | \ h\in \{1,\cdots,N\}\big\} \enspace.$$
By definition of $\mu$ and $v$,  for any node $h\neq j$,
	\[ A_{i,h}^{\otimes n} - A_{i,j}^{\otimes n} \leqslant \mu =
v_j - v_h \enspace. \]
It follows that 
     \[ \left( A^{\otimes n} \otimes v \right)_i = A_{i,j}^{\otimes n} + v_j \enspace.\]
As  $n +\gamma \geqslant n$, we similarly have
\[ A_{i,j}^{\otimes n +\gamma} = \big( A^{\otimes n+\gamma} \otimes v \big)_i - v_j
= \left( A^{\otimes n} \otimes v \right)_i - v_j = A_{i,j}^{\otimes n} \enspace. \]
Thus  $A_{i,j}^{\otimes n+\gamma} =A_{i,j}^{\otimes n}$  holds also in this case.
\end{enumerate}
\end{proof}

The key point for establishing our bound on matrix transients is the following
upper bound on~$\mu$, which is quadratic in~$N$. The proof uses the pumping
technique developed for the explorative bound twice.

\begin{proposition}\label{prop:mu:upper:bound}
$\displaystyle \mu \leqslant \lVert A\rVert \cdot \Bunu$
\end{proposition}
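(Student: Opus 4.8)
The plan is to normalize $A$ so that $\lambda(A)=0$ and then to bound the two entries $A_{i,h}^{\otimes n}$ and $A_{i,j}^{\otimes n}$ separately. Write $\delta$ and $\Delta$ for the minimum and maximum finite entries of~$A$, so that $\lVert A\rVert=\Delta-\delta$. Replacing $A$ by $A-\lambda(A)$ changes neither $\lVert A\rVert$, nor $\Bunu$ (which depends only on~$N$, $G$, and~$G_\cc$, all unchanged by $A\mapsto A-\mu$), nor any difference $A_{i,h}^{\otimes n}-A_{i,j}^{\otimes n}$ (both summands shift by $n\lambda(A)$), so I may assume $\lambda(A)=0$. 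A first useful observation is then that a critical cycle has total $A$-weight~$0$, hence $\delta\le 0\le\Delta$. I will establish (i) $A_{i,h}^{\otimes n}\le(N-1)\,\Delta$ for every~$n$, and (ii) $A_{i,j}^{\otimes n}\ge\Bunu\cdot\delta$ whenever $n\ge\Bunu$ and $A_{i,j}^{\otimes n}\ne-\infty$. Granting these, and using $\Bunu\ge N-1$ together with $\Delta\ge0$ and $\delta\le0$, one gets $A_{i,h}^{\otimes n}-A_{i,j}^{\otimes n}\le\Bunu\Delta-\Bunu\delta=\Bunu\cdot\lVert A\rVert$; taking the supremum over all admissible $h,i,j,n$ (terms with $A_{i,h}^{\otimes n}=-\infty$ only decrease the difference) then yields the proposition.

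For (i), take a walk $W$ from~$i$ to~$h$ of length~$n$ with $A(W)=A_{i,h}^{\otimes n}$ (if $A_{i,h}^{\otimes n}=-\infty$ there is nothing to prove) and remove nonempty subcycles from it until only its acyclic part remains, a walk of length at most $N-1$. Since $\lambda(A)=0$, every nonempty cycle has nonpositive $A$-weight, so $A(W)$ is at most the weight of this acyclic part, which is at most $(N-1)\Delta$.

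Claim (ii) is the heart of the argument and is where the explorative pumping technique is used, twice — once at the level of~$G$ and once inside a critical component. Fix a strongly connected component~$H$ of~$G_\cc$ and a node~$k$ of~$H$; note $\EP(H)\le\hat{\EP}$, $\gamma(H)\le\hat{\gamma}$, and $\gamma(G)\mid\gamma(H)$ (cycles of~$H$ are cycles of~$G$). Pick a shortest walk $P_1$ from~$i$ to~$k$ and a shortest walk $P_2$ from~$k$ to~$j$, both of length at most $N-1$. Since $A_{i,j}^{\otimes n}\ne-\infty$ we have $n\in\mathbf{N}_{i,j}$, and $\ell(P_1)+\ell(P_2)\in\mathbf{N}_{i,j}$ as well, so by Lemma~\ref{lem:Ni,j} $\gamma(G)$ divides $n-\ell(P_1)-\ell(P_2)$; because $\gamma(G)\mid\gamma(H)$, I can then choose a multiple $m_1$ of $\gamma(G)$ with $\EP(G)\le m_1\le\EP(G)+\hat{\gamma}-1$ such that $c:=n-\ell(P_1)-m_1-\ell(P_2)$ is a multiple of $\gamma(H)$. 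By the definition of $\EP(G)$ there is a closed walk~$D$ at~$k$ in~$G$ of length $m_1$, and from $n\ge\Bunu$ one computes $c\ge\Bunu-2(N-1)-\EP(G)-\hat{\gamma}+1=\hat{\EP}\ge\EP(H)$, so by the definition of $\EP(H)$ there is a closed walk~$E$ in~$H$ at~$k$ of length~$c$; being a nonempty closed walk in~$G_\cc$, $E$ is critical by Proposition~\ref{prop:paths:in:crit:comps:are:critical}, hence $A(E)=0$. Then $W^\ast=P_1\cdot D\cdot E\cdot P_2$ is a walk from~$i$ to~$j$ of length exactly~$n$, and since its edges outside~$E$ number at most $2(N-1)+m_1\le\Bunu$ and each has $A$-weight at least $\delta\le 0$, we get $A(W^\ast)\ge\Bunu\cdot\delta$, proving (ii).

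The main obstacle is exactly this length bookkeeping in the lower bound: one needs a high-weight walk of length \emph{precisely}~$n$ with only a bounded number of non-critical edges, but the residue of~$n$ modulo the relevant cyclicity $\gamma(H)$ is not directly controllable — Lemma~\ref{lem:Ni,j} only pins down residues modulo $\gamma(G)$, which may be a proper divisor of $\gamma(H)$. Inserting the closed walk~$D$, whose length is bounded in terms of $\EP(G)$, is precisely what absorbs this discrepancy, and this is why the summand $\EP(G)+\hat{\gamma}-1$ appears in the definition of~$\Bunu$; the remaining steps are routine arithmetic with the inequalities $\Bunu\ge N-1$, $\Delta\ge0$, and $\delta\le0$.
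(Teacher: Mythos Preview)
Your argument is correct and follows essentially the same route as the paper: normalize to $\lambda=0$, bound $A_{i,h}^{\otimes n}$ above via the acyclic part of a maximum-weight walk, and bound $A_{i,j}^{\otimes n}$ below by constructing an $i$--$j$ walk of length exactly~$n$ consisting of two short paths, a short closed walk in~$G$ to correct the residue modulo~$\gamma(H)$, and a critical closed walk in~$H$ of weight~$0$. The only cosmetic difference is that you insert the correcting closed walk~$D$ at the critical node~$k$, whereas the paper inserts it at~$j$; both placements work for the same reason, and the resulting bound on the number of non-critical edges is identical.
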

\begin{proof}
First, we observe that each term in the inequality to show is invariant under
	substituting $A$ by $\overline{A}$.
Hence we assume that $\lambda=0$. 
It follows that
\begin{equation}\label{eq:mu:upper:bound:upper}
A_{i,h}^{\otimes n} \leqslant \Delta\cdot (N-1) \leqslant \Delta \cdot \Bunu\enspace.
\end{equation}

We now give a lower bound on~$A_{i,j}^{\otimes n}$ in the case
     that it is finite, i.e., if~$\Pa^n(i,j)\neq\emptyset$.
Let~$k$ be a critical node in the strongly connected component~$H$ of~$G_\cc$ with minimal
     distance from~$i$ and let~$W_1$ be a shortest path from~$i$
     to~$k$.
Further, let~$W_2$ be a shortest path from~$k$ to~$j$.
Let $r$ denote the residue of $n - \ell(W_1\cdot W_2) - \EP(G)$ modulo $\gamma(H)$,
	and let $t= n - \ell(W_1\cdot W_2) - \EP(G) -r$.
Since~$t\equiv0\pmod{\gamma(H)}$, and 
	$$t \geqslant \Bunu -  2(N-1) - \EP(G) - \big(\gamma(H) - 1\big) \geqslant
\hat{\EP} \geqslant  \EP(H) \enspace,$$
	there exists a closed walk~$C_\cc$ of length~$t$ in
	component~$H$ starting at node~$k$.
Let $s= \EP(G) + r$; then, $s \geqslant \EP(G)$.
Moreover, $s= n - \ell(W_1\cdot C_\cc\cdot W_2)$, and 
	$W_1\cdot C_\cc\cdot W_2 \in \Pa(i,j)$.
By Lemma~\ref{lem:Ni,j}, it follows that $\gamma(G)$ divides $s$,
because~$\Pa^n(i,j)\neq\emptyset$.
Hence there exists a closed walk~$C_{\nc}$ of length~$s$ starting
     at node~$j$.

Now define~$W=W_1\cdot C_\cc\cdot W_2\cdot C_{\nc}$.
Clearly,~$\ell(W) =n$ and
	$$ n(W) \geqslant \delta \cdot( n - t
     ) \geqslant \delta\cdot \big( 2 (N-1) + \EP(G) + \gamma(H) -1
     \big) \enspace,$$
and so
	\begin{equation}\label{eq:mu:upper:bound:lower}
		A_{i,j}^{\otimes n} \geqslant  \delta\cdot \big( 2 (N-1) + \EP(G) + 
		\hat{\gamma} -1\big) \geqslant \delta\cdot \Bunu \enspace.
	\end{equation}
From~\eqref{eq:mu:upper:bound:upper} and~\eqref{eq:mu:upper:bound:lower} 
	follows $\mu \leqslant (\Delta - \delta)\cdot \Bunu = \lVert A\rVert \cdot \Bunu$. 
\end{proof}

\begin{figure}[tb]
\centering
\begin{tikzpicture}[>=latex',scale=0.8]
	\node[shape=circle,draw] (i) at (-4,-1) {$\scriptstyle i$};
	\node[shape=circle,draw] (k) at (0,0) {$\scriptstyle k$};
	\node[shape=circle,draw] (j) at (4,-1.2) {$\scriptstyle j$};
	\draw[thick,->] (i) .. controls (-3,1) and (-2,-2)  .. node[midway,above]{${W}_1$}  (k);
	\draw[thick,->] (k) .. controls (2,0.2) and (3,-2)  .. node[midway,above]{${W}_2$}  (j);
	\draw[thick] (k) .. controls (2,1.2) and (-1,2)  .. node[near
start,left]{$C_{\cc}$}  (1,2);
	\draw[thick,->] (1,2) .. controls (3,2.2) and (-1,1)  ..  (k);
	\node[cloud, cloud puffs=16, draw,minimum width=3.1cm, minimum height=3.5cm] at (0.6,1) {};
	\node at (1.8,1.5) {$H$};
	\draw[thick] (j) .. controls (4,1.2) and (3,1)  .. node[near start,right]{$C_{\nc}$}  (4,1);
	\draw[thick,->] (4,1) .. controls (5,1.2) and (3,1)  ..  (j);
\end{tikzpicture}
\caption{Walk~$W$ in proof of Proposition~\ref{prop:mu:upper:bound}}
\label{fig:lem:mu:upper:bound}
\end{figure}
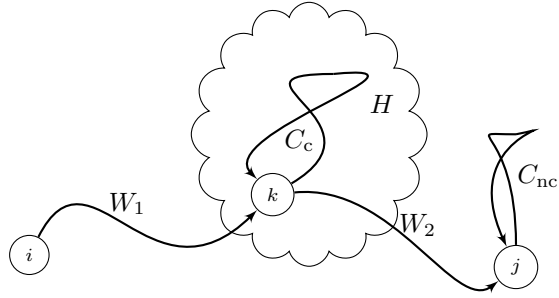

Combined with our upper bounds on the system transient established in 
	Theorems~\ref{thm:nonexplorative} and~\ref{thm:explorative:nonprimitive},
	Propositions~\ref{prop:mu} and~\ref{prop:mu:upper:bound} give a repetitive 
	upper bound and an explorative upper bound on the matrix transient.

\begin{theorem}\label{thm:matrix:system}
The transient of an irreducible matrix is at most equal to the minimum
	of the repetitive bound
	\[ \max \left\{\Bunu  \ ,\  \frac{\lVert A\rVert\cdot \Bunu + \big(\Delta_\nc - \delta\big)
	\cdot (N-1) }{ \lambda - \lambda_\nc}  \ ,\  (\hat{g}-1) + 2\,\hat{g}\cdot(N-1) 
	                                         \right\} \enspace,\]
	 and the explorative bound 
	\[ \max \left\{ \Bunu  \ ,\  \frac{\lVert A\rVert\cdot \Bunu + \big(\Delta_\nc - \delta\big)
	\cdot (N-1) }{ \lambda - \lambda_\nc}  \ ,\ (\hat{\gamma}-1) +
						2\,\hat{\gamma}\cdot(N-1) + \hat{\EP}\right\} \enspace,\]
	where $\Bunu = 2(N - 1) + \hat{\EP} + (\EP(G) + \hat{\gamma}-1)$.				
\end{theorem}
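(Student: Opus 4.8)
The plan is to combine three results already in hand: Proposition~\ref{prop:mu}, which reduces the eventual periodicity of the matrix powers to that of the systems $\langle A,v\rangle$ with $\mu$-truncated unit vectors~$v$; Proposition~\ref{prop:mu:upper:bound}, which bounds the truncation parameter by $\mu\leqslant\lVert A\rVert\cdot\Bunu$; and the system transience bounds of Theorems~\ref{thm:nonexplorative} and~\ref{thm:explorative:nonprimitive}. First I would normalize the matrix: replacing $A$ by $\overline{A}=A-\lambda$ leaves the matrix transient, all the system transients, $\lVert A\rVert$, the graph parameters $\EP(G),\hat{g},\hat{\gamma},\hat{\EP}$ (hence $\Bunu$), the difference $\lambda-\lambda_\nc$, and the value of $\mu$ unchanged, so I may assume $\lambda=0$; in particular the ratio of every sequence $A^{\otimes n}\otimes v$ is then~$0$, which is exactly what the statement of Proposition~\ref{prop:mu} presupposes.

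Next I would fix an arbitrary $\mu$-truncated unit vector~$v$, i.e.\ one coordinate equal to~$0$ and the rest equal to~$-\mu$. Taking $h=j$ in the supremum defining~$\mu$ gives $\mu\geqslant 0$, so $\lVert v\rVert=\mu$ and $v\in\IR^N$. Theorem~\ref{thm:nonexplorative} then bounds the transient of $\langle A,v\rangle$ by $\max\bigl\{(\mu+(\Delta_\nc-\delta)(N-1))/(\lambda-\lambda_\nc)\,,\,(\hat{g}-1)+2\hat{g}(N-1)\bigr\}$. The map $t\mapsto(t+(\Delta_\nc-\delta)(N-1))/(\lambda-\lambda_\nc)$ is nondecreasing --- under the usual convention this term is~$0$ when $\lambda_\nc=-\infty$, and otherwise $\lambda-\lambda_\nc>0$ since a closed walk avoiding all critical nodes has average weight strictly below~$\lambda$ --- so substituting $\mu\leqslant\lVert A\rVert\cdot\Bunu$ from Proposition~\ref{prop:mu:upper:bound} yields that the transient of $\langle A,v\rangle$ is at most $\max\bigl\{(\lVert A\rVert\cdot\Bunu+(\Delta_\nc-\delta)(N-1))/(\lambda-\lambda_\nc)\,,\,(\hat{g}-1)+2\hat{g}(N-1)\bigr\}$, a quantity independent of the chosen~$v$.

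I would then set $B$ equal to the repetitive bound of the statement, namely the maximum of $\Bunu$ and the last displayed quantity; by construction $B\geqslant\Bunu$, and $B$ exceeds the transient of $\langle A,v\rangle$ for every $\mu$-truncated unit vector~$v$, so $A^{\otimes(n+\gamma)}\otimes v=A^{\otimes n}\otimes v$ for all $n\geqslant B$ and all such~$v$ (here $\gamma=\gamma(G_\cc)$ is a period by the Cyclicity Theorem and the ratio is~$0$). Proposition~\ref{prop:mu}, applied for each $n\geqslant B\geqslant\Bunu$, gives $A^{\otimes(n+\gamma)}=A^{\otimes n}$, i.e.\ $B$ is a transient of the matrix power sequence with respect to period~$\gamma$; by Proposition~\ref{prop:period} this minimal-transient value equals the matrix transient, so the matrix transient is at most~$B$. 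Re-running the same argument with Theorem~\ref{thm:explorative:nonprimitive} in place of Theorem~\ref{thm:nonexplorative} shows the matrix transient is also at most the explorative bound, and taking the minimum of the two establishes the claim.

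The argument is essentially a chaining of the cited results, so I do not expect a deep obstacle; the delicate points are bookkeeping: (i) checking $\mu\geqslant 0$ so that $\lVert v\rVert=\mu$ exactly for the truncated unit vectors; (ii) the monotonicity of the critical term in $\lVert v\rVert$ together with the degenerate case $\lambda_\nc=-\infty$; and (iii) carrying out the normalization to $\lambda=0$ before invoking Proposition~\ref{prop:mu}, whose hypothesis is phrased for ratio~$0$.
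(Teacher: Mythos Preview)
Your proposal is correct and follows exactly the route the paper indicates: combine Propositions~\ref{prop:mu} and~\ref{prop:mu:upper:bound} with the system bounds of Theorems~\ref{thm:nonexplorative} and~\ref{thm:explorative:nonprimitive}, observing that $\lVert v\rVert=\mu\leqslant\lVert A\rVert\cdot\Bunu$ for every $\mu$-truncated unit vector. The bookkeeping points you flag (normalizing to $\lambda=0$, nonnegativity of~$\mu$, monotonicity of the critical term, the degenerate case $\lambda_\nc=-\infty$) are exactly the details that need to be checked and you handle them correctly.
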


Note that by Theorem~\ref{thm:EP}, the term~$\Bunu$ in the above
     bounds is at most quadratic in~$N$.
Moreover it can be removed from the maximum when~$\lambda_\nc$ is
     finite, since in this case the critical bound dominates the
     term~$\Bunu$ as $\lambda-\lambda_\nc \leq \Delta - \delta = \lVert A\rVert$.

Further, from Theorem~\ref{thm:EP} we immediately obtain that the
     transient of an irreducible matrix is in 
	 $O\big(\lVert A\rVert \cdot N^2/(\lambda-\lambda_\nc)\big)$
	 if $\lambda_{\nc}$ is finite, and in~$O(N^2)$, otherwise.
In particular, for integer matrices the matrix transient is in
     $O(\lVert A\rVert \cdot N^4)$ for integer matrices.

\section{Applications}\label{sec:discussion}%

In this section we demonstrate how our transience bounds enable the
     performance analysis of various distributed systems, thereby
     obtaining simple proofs both of known and new results.

In Section~\ref{subsec:cyclic}, we discuss properties of optimal
     cyclic schedules of a set of tasks subject to a set of
     restrictions.
This problem arises, e.g., in manufacturing, time-sharing of
     processors in embedded systems, and design of compilers for
     scheduling loop operations for parallel and pipelined
     architectures.
By applying our transience bounds to a naturally arising special case
     of restrictions (with binary heights), we are able to state
     explicit upper bounds, and thereby asymptotic upper bounds, on
     the number of task executions from where on the schedule becomes
     periodic.

In Section~\ref{subsec:sync}, we discuss the transient behavior of the
$\alpha$ network synchronizer~\cite{Aw85}.
The $\alpha$-synchronizer constructs virtually synchronous rounds in a
strongly connected network of processes that communicate by message passing
      with constant transmission delays.
Its time behavior can be described by a max-plus linear system.
It has hence a periodic behavior and by
applying our results, we obtain upper bounds
     on the time from which on the system is periodic.
We show that our bounds are strictly better than those by Even and
     Rajsbaum~\cite{even:rajs}.
In the case of integer matrices considered by Even and Rajsbaum, our bounds are in $O(\lVert A\rVert \cdot N^3)$ 
which we show to be asymptotically tight.

In Section~\ref{subsec:fr}, we further exemplify the applicability of
     our results to distributed algorithms by deriving upper bounds on
     the termination time of the Full Reversal algorithm when used for
     routing~\cite{GB87}, and the time from which on it is periodic
     when used for scheduling~\cite{BG89}.

\subsection{Cyclic scheduling}\label{subsec:cyclic}

Cohen et al.~\cite{CMQV89} have observed that, in cyclic scheduling, the class of
{\em earliest schedules\/}
can be described as max-plus linear systems.
In this section, we show how to use this fact and our general bounds to derive explicit upper
bounds on transients of earliest schedules.

If a finite set~$\mathcal{T}$ of tasks (each of which
     calculates a certain function) is to be scheduled
repeatedly on different
	processes, precedence restrictions are implied by the data flow.
These restrictions are of the form that task~$i$  may start its
     number~$n$ execution only after task~$j$ has finished its
     number~$n-h$ execution.
A {\em schedule\/}~$t$ maps a pair $(i,n)\in\mathcal{T}\times\IN$ to a
nonnegative integer $t(i,n)$, the time the
     number~$n$ execution of task~$i$ is started.
Formally, if~$P_i$ denotes the processing time of task~$i$, then a {\em
restriction}~$R$
between two tasks~$i$ and~$j$ is an inequality of the form
\begin{equation} 
\label{eq:depend}
\forall n\geqslant h_{R}:\,\, t(i,n) \ge t(j,n-h_{R}) + P_j
\end{equation}%
where $h_{R}$ is called the {\em height\/}
of restriction~$R$ and $P_j$ is its {\em weight}.

A {\em uniform graph\/}~\cite{HM95} describes a set of tasks and restrictions.
Formally, it is a quadruple
     $G^u=(\mathcal{T},E,p,h)$ such that $(\mathcal{T},E)$ is a
     directed (multi-)graph, and $p:E\to \IN^*$ and $h:E\to\IN$ are two
     functions, the {\em weight\/} and {\em height\/} function,
     respectively.
For a walk~$W$ in~$G^u$, let $p(W)$ be the sum of the weights of its
     edges and $h(W)$ the sum of the heights of its edges.
An edge from~$i$ to~$j$ corresponds to a restriction~$R$ between~$i$
and~$j$ of the form~\eqref{eq:depend}.
All incoming edges of a node~$j$ in $\mathcal{T}$ have the
     same weight, namely~$P_j$.
An example of a uniform graph is Figure~\ref{fig:dep_graph}.
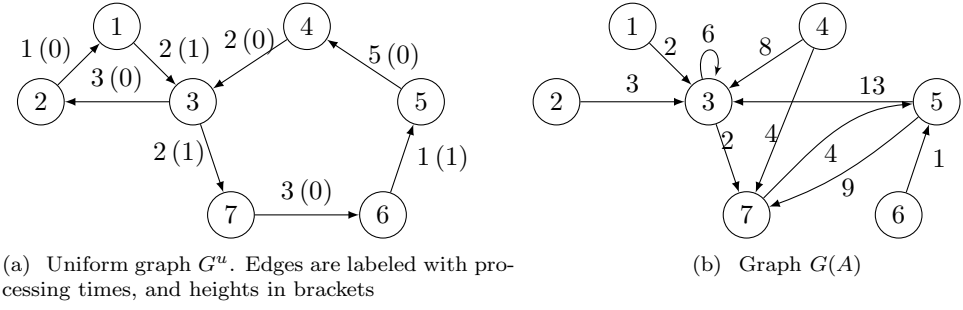
\begin{figure}[hbt]
\centering
\subfigure[{~Uniform graph $G^u$. Edges are labeled with processing times,
   and heights in brackets}]{
\begin{tikzpicture}[>=latex']

\node[draw, circle] at (0,0) (2) {2};
\node[draw, circle] at (1,1) (1) {1};
\node[draw, circle] at (2,0) (3) {3};

\node[draw, circle] at (3.5,1) (4) {4};
\node[draw, circle] at (5,0)   (5) {5};
\node[draw, circle] at (2.5,-1.5) (7) {7};
\node[draw, circle] at (4.5,-1.5) (6) {6};

\draw[->, >=latex] (2) to node[above=5pt,left=-1pt]{$1\,(0)$} (1);
\draw[->, >=latex] (1) to node[above=5pt,right=-2pt]{$2\,(1)$} (3);
\draw[->, >=latex] (3) to node[above]{$3\,(0)$} (2);
\draw[->, >=latex] (3) to node[above=2pt,left=-1pt]{$2\,(1)$} (7);
\draw[->, >=latex] (7) to node[above]{$3\,(0)$} (6);
\draw[->, >=latex] (6) to node[below,right=2pt]{$1\,(1)$} (5);
\draw[->, >=latex] (5) to node[above=3pt,right=-2pt]{$5\,(0)$} (4);
\draw[->, >=latex] (4) to node[above]{$2\,(0)$} (3);

\node at (6,0) {};

\end{tikzpicture}
\label{fig:dep_graph}
}
\subfigure[~Graph $G(A)$]{
\begin{tikzpicture}[>=latex']

\node[draw, circle] at (0,0) (2) {2};
\node[draw, circle] at (1,1) (1) {1};
\node[draw, circle] at (2,0) (3) {3};

\node[draw, circle] at (3.5,1) (4) {4};
\node[draw, circle] at (5,0)   (5) {5};
\node[draw, circle] at (2.5,-1.5) (7) {7};
\node[draw, circle] at (4.5,-1.5) (6) {6};

\draw[->, >=latex] (1) to node[above]{$2$} (3);
\draw[->, >=latex] (2) to node[above]{$3$} (3);
\path (3) edge[loop above] node {$6$} (3);

\draw[->, >=latex] (3) to node[above]{$2$} (7);

\draw[->, >=latex,in=185,out=45] (7) to node[below]{$4$} (5);
\draw[->, >=latex,in=20,out=220] (5) to node[below]{$9$} (7);

\draw[->, >=latex] (5) to node[above=6pt,right=10pt]{$13$} (3);
\draw[->, >=latex] (6) to node[below,right=2pt]{$1$} (5);
\draw[->, >=latex] (4) to node[above]{$8$} (3);
\draw[->, >=latex] (4) to node[above=-5pt,left=-1pt]{$4$} (7);

\node at (6,0) {};

\end{tikzpicture}
\label{fig:associated_dep_graph}
}
\caption{Example of a set of tasks with restrictions}
\end{figure}

Call~$G^u$ {\em well-formed\/} if
it is strongly connected and does not contain a nonempty
     closed walk of height~$0$.
Call a schedule~$t$ an {\em earliest schedule\/} 
if it
     satisfies all restrictions specified by~$G^u$ and
     it is minimal with respect to the point-wise partial order on
     schedules.
Denote the maximum height in~$G^u$ by $\hat{h}$.
Cohen et al.~\cite{CMQV89} showed that the earliest schedule~$t$ for
well-formed~$G^u$ is
unique and fulfills%
\begin{equation}
t(i,n) = (A^{\otimes n} \otimes v)_i\label{eq:cohen}
\end{equation}%
for all $i\in \mathcal{T}$ and $n\ge 0$,  where~$v$ is a suitably
     chosen $(\hat{h} \cdot \lvert\mathcal{T}\rvert)$-dimensional
     max-plus vector and~$A$ a suitably chosen $(\hat{h} \cdot
     \lvert\mathcal{T}\rvert) \times (\hat{h} \cdot
     \lvert\mathcal{T}\rvert)$ max-plus matrix.
In case heights in~$G^u$ are binary, i.e., either~$0$ or~$1$, as in
     our example in Figure~\ref{fig:associated_dep_graph}, $A$ and $v$ are obtained as follows: For all $i,j
     \in \mathcal{T}$, $A_{i,j}$ is the maximum weight of nonempty
     walks~$W$ from~$i$ to~$j$ in~$G^u$, where all of~$W$'s edges have
     height~$0$, except for the last edge, which has height~$1$.
In case no such walk exists, $A_{i,j} = -\infty$.
For all $i \in \mathcal{T}$, $v_i$ is the maximum weight of walks~$W$
     from~$i$ in~$G^u$, where all of~$W$'s edges have height~$0$.
As an example the graph~$G(A)$ for the uniform graph in
     Figure~\ref{fig:dep_graph} is depicted in
     Figure~\ref{fig:associated_dep_graph}.
For this example we obtain the initial vector $v = (0,1,4,6,11,0,3)$.
We can, however, not directly apply our transience bounds on the
     graph~$G(A)$ obtained from $G^u$, since~$G(A)$ is not necessarily
     strongly connected, as it is the case for the example
in Figure~\ref{fig:associated_dep_graph}.

However, we present a transformation of~$G^u$ that yields a strongly
     connected graph~$G(A)$ in case of binary heights, and has the same
     earliest schedule as the original graph~$G^u$: For every
     restriction between tasks~$i$ and~$j$ in~$G^u$ one can add the
     {\em redundant restriction\/} $t(i,n)\geqslant t(j,n-1)+P_j$
     without changing the earliest schedule, since $t(j,n)\geqslant
     t(j,n-1)$ for all tasks~$j$ and $n\ge 1$.
With this transformation we obtain:

\begin{proposition}\label{prop:redundant}
If~$G^u$ is well-formed, has binary heights, and contains all redundant
restrictions, then~$A$ is irreducible.
\end{proposition}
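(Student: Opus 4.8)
The plan is to prove directly that $G(A)$ is strongly connected, which is exactly the definition of $A$ being irreducible. Since $G^u$ has binary heights we have $\hat{h}=1$, so $A$ is an $\lvert\mathcal{T}\rvert\times\lvert\mathcal{T}\rvert$ matrix and $G(A)$ has node set $\mathcal{T}$ --- the very same node set as $G^u$ --- which is what makes transferring connectivity from $G^u$ to $G(A)$ possible. I would keep in mind throughout that, by the construction recalled just before the proposition, $(i,j)$ is an edge of $G(A)$ (equivalently $A_{i,j}\neq-\infty$) if and only if $G^u$ contains a nonempty walk from $i$ to $j$ all of whose edges have height $0$ except the last, which has height $1$.

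The key step is the observation that a single edge of height $1$ is already such a walk: if $G^u$ contains an edge $(i,j)$ of height $1$, then the one-edge walk consisting of it has no edges other than its last edge, and that edge has height $1$, hence $A_{i,j}\neq-\infty$, i.e.\ $(i,j)$ is an edge of $G(A)$. I would then invoke the hypothesis that $G^u$ contains all redundant restrictions to argue that \emph{every} edge of $G^u$ gives rise to an edge of $G(A)$: an edge $(i,j)$ of $G^u$ has height $0$ or $1$; if it has height $1$, it is itself a height-$1$ edge; if it has height $0$, the redundant restriction $t(i,n)\geqslant t(j,n-1)+P_j$ provides a parallel edge $(i,j)$ of height $1$. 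In both cases $G^u$ has a height-$1$ edge from $i$ to $j$, so $(i,j)$ is an edge of $G(A)$.

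It follows that the underlying directed graph of $G^u$ is a spanning subgraph of $G(A)$, so every walk in $G^u$ is also a walk in $G(A)$. Because $G^u$ is well-formed it is strongly connected, hence it has a walk between every ordered pair of nodes; the same walks work in $G(A)$, so $G(A)$ is strongly connected and $A$ is irreducible. I do not expect a real obstacle here: the argument is a matter of unwinding the definition of $G(A)$, and the only points requiring care are that a one-edge walk of height $1$ is admissible in that definition and that ``contains all redundant restrictions'' is exactly what promotes every original height-$0$ edge to a height-$1$ edge. It is worth noting that the no-zero-height-cycle clause of well-formedness and the actual values of the weights $P_j$ are not used --- only strong connectivity of $G^u$ and the binary-height assumption enter the proof.
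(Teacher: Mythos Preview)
Your proposal is correct and follows essentially the same approach as the paper: show that every edge of $G^u$ yields an edge of $G(A)$ by using the redundant restrictions to obtain a height-$1$ edge, then transfer strong connectivity from $G^u$ to $G(A)$. Your write-up is more explicit (e.g., noting that a one-edge walk of height $1$ fits the definition of $A_{i,j}$, and that the zero-height-cycle clause is not used), but the argument is the same.
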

\begin{proof}
It suffices to show that whenever there is an edge from~$i$ to~$j$ in~$G^u$, then it
also exists in~$G(A)$.
Because~$G^u$ contains all redundant restrictions, if there exists an edge
from~$i$ to~$j$, then there also exists an edge of height~$1$ from~$i$
to~$j$.
Hence there exists a walk of length~$1$ from~$i$ to~$j$ in~$G^u$ whose last
(and only) edge has height~$1$.
Hence, by definition of~$A$, the entry~$A_{i,j}$ is finite.
This concludes the proof.
\end{proof}

Figures~\ref{fig:dep_graph2} and~\ref{fig:associated_dep_graph2}
     depict the transformed graph~$G^u$ of the above example with
     redundant restrictions and its corresponding weighted
     graph~$G(A)$.
Observe that, in contrast to Figure~\ref{fig:associated_dep_graph},
     $G(A)$ is strongly connected in
     Figure~\ref{fig:associated_dep_graph2}.

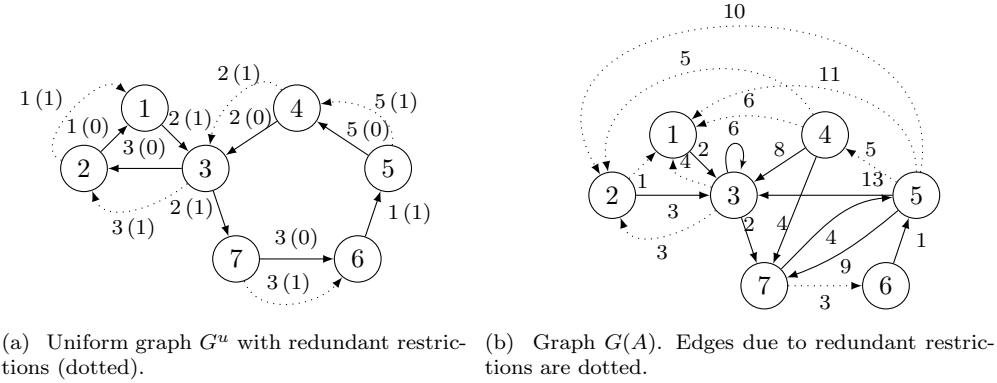
\begin{figure}[hbt]
\centering
\subfigure[{~Uniform graph $G^u$ with redundant restrictions (dotted).}]{
\begin{tikzpicture}[>=latex',scale=.8]

\node[draw, circle] at (0,0) (2) {2};
\node[draw, circle] at (1,1) (1) {1};
\node[draw, circle] at (2,0) (3) {3};

\node[draw, circle] at (3.5,1) (4) {4};
\node[draw, circle] at (5,0)   (5) {5};
\node[draw, circle] at (2.5,-1.5) (7) {7};
\node[draw, circle] at (4.5,-1.5) (6) {6};

\draw[->, >=latex] (2) to node[above=4pt,left=-2pt]{\footnotesize $1\,(0)$} (1);
\draw[dotted,->, >=latex,out=160,in=140] (2) to node[above=5pt,left=-1pt]{\footnotesize $1\,(1)$} (1);

\draw[->, >=latex] (1) to node[above=7pt,right=-6pt]{\footnotesize
  $2\,(1)$} (3);

\draw[->, >=latex] (3) to node[above]{\footnotesize $3\,(0)$} (2);
\draw[dotted,->, >=latex,out=220,in=290] (3) to node[below]{\footnotesize $3\,(1)$} (2);

\draw[->, >=latex] (3) to node[above=2pt,left=-1pt]{\footnotesize
  $2\,(1)$} (7);

\draw[->, >=latex] (7) to node[above]{\footnotesize $3\,(0)$} (6);
\draw[dotted,->, >=latex,out=290,in=230] (7) to node[above]{\footnotesize $3\,(1)$} (6);

\draw[->, >=latex] (6) to node[below,right=2pt]{\footnotesize
  $1\,(1)$} (5);

\draw[->, >=latex] (5) to node[above=3pt,right=-2pt]{\footnotesize
  $5\,(0)$} (4);
\draw[dotted,->, >=latex,out=80,in=10] (5) to node[above=3pt,right=-2pt]{\footnotesize
  $5\,(1)$} (4);

\draw[->, >=latex] (4) to node[above]{\footnotesize $2\,(0)$} (3);
\draw[dotted,->, >=latex,out=130,in=80] (4) to node[above]{\footnotesize $2\,(1)$} (3);

\node at (6,0) {};

\end{tikzpicture}
\label{fig:dep_graph2}
}~~%
\subfigure[~Graph $G(A)$. Edges due to redundant restrictions are dotted.]{
\begin{tikzpicture}[>=latex',scale=.8]

\node[draw, circle] at (0,0) (2) {2};
\node[draw, circle] at (1,1) (1) {1};
\node[draw, circle] at (2,0) (3) {3};

\node[draw, circle] at (3.5,1) (4) {4};
\node[draw, circle] at (5,0)   (5) {5};
\node[draw, circle] at (2.5,-1.5) (7) {7};
\node[draw, circle] at (4.5,-1.5) (6) {6};

\draw[dotted,->, >=latex] (2) to node[below]{\footnotesize $1$} (1);

\draw[->, >=latex] (1) to node[above]{\footnotesize $2$} (3);

\draw[->, >=latex] (2) to node[below]{\footnotesize $3$} (3);
\draw[dotted,->, >=latex,out=220,in=290] (3) to
node[below]{\footnotesize $3$} (2);

\path (3) edge[loop above] node {\footnotesize $6$} (3);

\draw[->, >=latex] (3) to node[above]{\footnotesize $2$} (7);

\draw[->, >=latex,in=185,out=45] (7) to node[below]{\footnotesize $4$}
(5);

\draw[dotted,->, >=latex] (7) to node[below]{\footnotesize $3$} (6);

\draw[->, >=latex,in=20,out=220] (5) to node[below]{\footnotesize $9$}
(7);

\draw[dotted,->, >=latex] (5) to node[above]{\footnotesize $5$} (4);

\draw[dotted,->, >=latex,out=160,in=270] (3) to
node[above]{\footnotesize $4$} (1);

\draw[dotted,->, >=latex,out=160,in=20] (4) to
node[above]{\footnotesize $6$} (1);

\draw[dotted,->, >=latex,out=90,in=40] (5) to
node[above]{\footnotesize $11$} (1);

\draw[dotted,->, >=latex,out=80,in=120,looseness=1.7] (5) to
node[above]{\footnotesize $10$} (2);

\draw[dotted,->, >=latex,out=120,in=100,looseness=1.2] (4) to
node[above]{\footnotesize $5$} (2);

\draw[->, >=latex] (5) to node[above=6pt,right=10pt]{\footnotesize $13$} (3);

\draw[->, >=latex] (6) to node[below,right=2pt]{\footnotesize $1$} (5);

\draw[->, >=latex] (4) to node[above]{\footnotesize $8$} (3);

\draw[->, >=latex] (4) to node[above=-5pt,left=-1pt]{\footnotesize $4$} (7);

\node at (6,0) {};

\end{tikzpicture}
\label{fig:associated_dep_graph2}
}
\caption{Transformation of $G^u$ in case of binary heights.}
\end{figure}

Because of \eqref{eq:cohen} and Proposition~\ref{prop:redundant} we
     may now directly apply Theorems~\ref{thm:nonexplorative}
     and~\ref{thm:explorative:nonprimitive} to (the strongly
     connected) graph~$G(A)$, obtaining upper bounds on the transients
     of the earliest schedule for~$G^u$.

For the given example, $\lVert v \rVert = 11$, the critical circuit is
     from node~$7$ to~$5$ and back, $\lambda=6.5$, $\lambda_{\nc}=6$,
     $\Delta_{\nc} = 8$, $\delta=1$, $\hat{g}=2$, $\hat{\gamma}=2$,
     $\hat{ep} = 0$, and we obtain a critical bound of~$106$.
Since the critical bound dominates both the repetitive and explorative
     bound of Theorems~\ref{thm:nonexplorative}
     and~\ref{thm:explorative:nonprimitive} respectively, $106$ is an
     upper bound on the transient of the earliest schedule.
The discrepancy to the transient of the earliest schedule, which
     is~$1$, stems from the fact that the critical bound is overly
     conservative for this example.

Bounds in terms of the parameters of the original uniform graph~$G^u$ can be
     derived as well by relating graph parameters of~$G^u$ to
     parameters of $G=G(A)$.
For that purpose, we denote by $\delta(G^u)$ and $\Delta(G^u)$ the
     minimum and maximum weight of an edge in~$G^u$, respectively.
From the definition of max-plus matrix~$A$ and initial vector~$v$, it
     immediately follows that in case of binary heights, $N= \lvert
     \mathcal{T}\rvert$, $\lVert v \rVert \le (\lvert
     \mathcal{T}\rvert-1)\cdot\Delta(G^u)$, $\Delta(G) \le
     \lvert\mathcal{T}\rvert\cdot\Delta(G^u)$, $\delta(G) \ge
     \delta(G^u)$, 
\begin{equation*}
\lambda(G) = \max \{ p(C)/h(C) \mid C \text{ is a closed walk in }
     G^u\}\enspace,%
\end{equation*}%
$\lambda_{\nc}(G)$ is at most the second largest $p(C)/h(C)$ of closed
     walks~$C$ in $G^u$, and $\hat{g}(G)$ is at most the number of
     links with height~$1$ in closed walks~$C$ in $G^u$ with maximum
     $p(C)/h(C)$.
As a consequence of the above bounds and the bound stated
     in~\eqref{eq:int_bound} for integer matrices, the transient is
     in~$O((\Delta(G)-\delta(G))\cdot \lvert \mathcal{T}\rvert^3) =
     O(\lvert \mathcal{T}\rvert^4)$, assuming constantly bounded~$\delta(G^u)$
     and~$\Delta(G^u)$.
To the best of our knowledge, this is the first asymptotic bound on
     the transient of an earliest schedule with tasks~$\mathcal{T}$
     and binary heights.
It is an open problem whether this bound in~$\lvert \mathcal{T}\rvert$ is asymptotically tight.

\subsection{Synchronizers}\label{subsec:sync}

Even and Rajsbaum~\cite{even:rajs} presented a transience bound for a
     network synchronizer in a system with constant integer communication
     delays.
They considered a variant of the $\alpha$-synchronizer~\cite{Aw85} in
     a centrally clocked distributed system of $N$~processes that
     communicate by message passing over a strongly connected network
     graph~$G$.
Each link has constant transmission delay, specified in terms of
     central clock ticks.
Processes execute the $\alpha$-synchronizer after an initial boot-up phase: After
     receiving round~$n$ messages from all neighbors, a process
     proceeds to round $n+1$ and broadcasts its round $n+1$ message.
Denote by $t(n)$ the vector such that $t_i(n)$ is the clock tick at
     which process~$i$ broadcasts its round~$n$ message.
Even and Rajsbaum showed that the synchronizer becomes periodic by time
	$B_\mathrm{ER} = l_0+ 2N^2 + N$, where~$l_0$ is an
     upper bound on the length of maximum weight walks with only
     non-critical nodes.
It is easily checked that~$l_0$ is always greater or equal to our critical
bound~$\Bcnc$.

One can show that~$t(n)$ is in fact a max-plus linear system.
More precisely, $t(n) = A^{\otimes n} \otimes t(0)$, where~$A$ is the
     adjacency matrix of the network graph~$G$.
Our bounds hence directly apply, and we obtain a repetitive bound on
     the transient of~$(t(n))_{n\ge 0}$ that is strictly less than
     $\max\{l_0, 2N^2-N\}$, and thus strictly less than Even and
     Rajsbaum's bound~$B_\mathrm{ER}$.

As an example, let us consider the ``$\ell$-sized cherry'' graph
     family~$H_{\ell,c}$, with $\ell\ge 2$ and $c\ge 1$, introduced by
     Even and Rajsbaum~\cite{even:rajs}.
Each weighted graph~$H_{\ell,c}$ contains $N=4\ell$ nodes and is
     constructed as follows: Let~$\hat{C}$ and~$C$ be two cycles of
     length~$\ell$ and~$\ell+1$ respectively, with edge weights~$3c$,
     except for one link per cycle with weight~$3c+1$.
There exists for both~$\hat{C}$ and~$C$ a path of length~$\ell$ to a
     distinct node~$s$, and an antiparallel path back.
Hereby the edges in the path from~$s$ to~$C$ and from~$s$ to~$\hat{C}$
     have weight~$c$, the edges in the path from~$\hat{C}$ to~$s$ have
     weight~$3c$, and from~$C$ to~$s$, $4c$.

\begin{figure}[hbt]
\centering
\begin{tikzpicture}[>=latex',scale=1.2]
        \node (x) at (-1.5,0.5) {$\hat{C}$};
        \node (x) at (9.1,0.5) {$C$};
	\node[shape=circle,draw] (u) at (0,0) {};
        \node[shape=circle,draw] (u1) at (-1,0.5) {};
        \node[shape=circle,draw] (u2) at (-1,-0.5) {};
        
        \node[shape=circle,draw] (v) at  (7.2,0) {};
        \node[shape=circle,draw] (v1) at (8,-0.5) {};
        \node[shape=circle,draw] (v2) at (8,0.5) {};
        \node[shape=circle,draw] (vb) at (8.7,0) {};

        \node[shape=circle,draw] (1) at (1.2,0.2) {};
        \node[shape=circle,draw] (2) at (2.4,0.4) {};
        \node[shape=circle,draw] (3) at (3.6,0.6) {$\scriptstyle s$};

        \node[shape=circle,draw] (4) at (4.8,0.4) {};
        \node[shape=circle,draw] (5) at (6,0.2) {};

        \draw[->] (u1) to node[below=0.1mm] {$6$} (u);
        \draw[->] (u2) to node[left=0.1mm] {$6$} (u1);
        \draw[->] (u) to node[below=0.1mm] {$7$} (u2);

        \draw[->] (v1) to node[below=0.1mm] {$6$} (v);
        \draw[->] (vb) to node[below=0.1mm] {$6$} (v1);
        \draw[->] (v2) to node[below=0.1mm] {$6$} (vb);
        \draw[->] (v) to node[below=0.1mm] {$7$} (v2);

        \draw[->, bend left] (u) to node[above=1mm] {$6$} (1);
        \draw[->, bend left] (1) to node[above=1mm] {$6$} (2);
        \draw[->, bend left] (2) to node[above=1mm] {$6$} (3);
        \draw[->, bend left] (3) to node[above=1mm] {$2$} (4);
        \draw[->, bend left] (4) to node[above=1mm] {$2$} (5);
        \draw[->, bend left] (5) to node[above=1mm] {$2$} (v);

        \draw[<-, bend right] (u) to node[below=0.5mm] {$2$} (1);
        \draw[<-, bend right] (1) to node[below=0.5mm] {$2$} (2);
        \draw[<-, bend right] (2) to node[below=0.5mm] {$2$} (3);
        \draw[<-, bend right] (3) to node[below=0.5mm] {$8$} (4);
        \draw[<-, bend right] (4) to node[below=0.5mm] {$8$} (5);
        \draw[<-, bend right] (5) to node[below=0.5mm] {$8$} (v);
\end{tikzpicture}
\caption{Graph $H_{3,2}$ }
\label{fig:H32}
\end{figure}
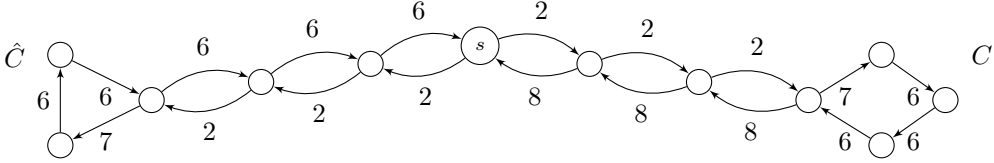

Observing that the nodes of $\hat{C}$ are the critical nodes,
     $\Delta=4c$, $\delta=c$, $N=4\ell$, $\lambda=3c+1/\ell$, and
     $l_0=112c\ell^3-16\ell^3-12c\ell^2+4\ell-1$, Even and Rajsbaum's
     bound is  
\begin{equation}
(112c-16)\ell^3+(32-12c)\ell^2+8\ell-1\notag\enspace,
\end{equation}
resulting in an upper bound of~$5711$ on the transient in case of~$H_{3,2}$.
Since $\Delta_\nc=\Delta$ and $\lambda_\nc=3c+1/(\ell+1)$, we obtain for
     the critical bound $B_\cc = 3c\ell(\ell+1)(N-1) = 12c\ell^3+9c\ell^2-3c\ell$.
Moreover for the critical subgraph~$G_\cc$, the maximum girth of strongly
     connected components of $G_\cc$ is $\hat{g}=\ell$.
Thereby we may bound the transient of $(t(n))_{n\ge 0}$ with
     Theorem~\ref{thm:nonexplorative} by  
\begin{equation}
\max\{B_\cc, 2\ell N-\ell-1 \} = \max\{B_\cc, 8\ell^2-\ell-1 \} =  12c\ell^3+9c\ell^2-3c\ell\notag\enspace,
\end{equation}%
resulting in an upper bound of~$792$ on the transient in case of~$H_{3,2}$.

Since Even and Rajsbaum express transmission delays with respect to 
     a discrete global clock, all weights are integers.
Both our transience bounds are in $O(\lVert A\rVert \cdot N^3)$.
The example graph family shows that this is asymptotically tight since Even and
	Rajsbaum proved that the transient for graph $H_{c,\ell}$ is in
	$\Omega(c \cdot \ell^3)=\Omega(\lVert A\rVert \cdot N^3)$.
An adapted example graph family shows asymptotic tightness of our bounds in the general case.

\subsection{Full Reversal routing and scheduling}\label{subsec:fr}

Link reversal is a versatile algorithm design paradigm, which was, in
     particular,
     successfully applied to routing~\cite{GB87} and
     scheduling~\cite{BG89}.
Charron-Bost et al.~\cite{pr:sirocco} showed that the analysis of a
     general class of link reversal algorithms can be reduced to the
     analysis of Full Reversal, a particularly simple algorithm on
     directed graphs.

The Full Reversal algorithm comprises only a single rule: Each sink reverses all its
     (incoming) edges.
Given a weakly connected initial graph~$G_0$ without antiparallel edges, we consider a {\em
     greedy\/} execution of Full Reversal as a sequence $(G_t)_{t\ge
     0}$ of graphs, where~$G_{t+1}$ is obtained from $G_t$ by
     reversing the edges of {\em all\/} sinks in~$G_t$.
As no two sinks in~$G_t$ can be adjacent, $G_{t+1}$ is well-defined.
For each $t\ge 0$ we define the {\em work vector}~$W(t)$ by setting
     $W_i(t)$ to the  number of reversals of node~$i$ until
     iteration~$t$, i.e., the number of times node~$i$ is a sink in
     the execution prefix $G_0,\dots,G_{t-1}$.

Charron-Bost et al.\ \cite{fr:sirocco} have shown that the sequence of
     work vectors can be described as a {\em min-plus\/} linear
     dynamical system.
Min-plus algebra is a variant of max-plus algebra, using $\min$
     instead of $\max$.
Denoting by $\otimes'$ the matrix multiplication in min-plus algebra,
 Charron-Bost et
     al.\ established that $W(0) = 0$ and
     $W(t+1) = A \otimes' W(t)$, where $A_{i,j} = 1$ and $A_{j,i} = 0$ if
     $(i,j)$ is an edge of the initial graph $G_0$; otherwise $A_{i,j} = +\infty$.
Observe that the latter min-plus recurrence is equivalent to $-W(t+1) = (-A) \otimes
     (-W(t))$ where $-A$ is an integer max-plus matrix with
     $\Delta_\nc\in\{ 0,-1\}$ and $\delta = -1$.

\subsubsection{Full Reversal routing}
In the routing case, the initial graph~$G_0$ contains a nonempty set of {\em
     destination nodes}, which are characterized by having a self-loop.
The initial graph without these self-loops is required to be weakly
     connected and acyclic~\cite{fr:sirocco,GB87}.
It was shown that for such initial graphs, the execution terminates
     (eventually all~$G_t$ are equal), and after termination, the
     graph is destination-oriented, i.e., every node has a walk to
     some destination node.
We now show how the previously known results that the termination time of Full
Reversal routing is quadratic in general~\cite{BT05} and linear in
trees~\cite{fr:sirocco} directly follows from both
Theorem~\ref{thm:nonexplorative} and Theorem~\ref{thm:explorative:nonprimitive}.

The set of critical nodes is equal to the set of destination nodes and
     each strongly connected component of~$G_\cc$ consists of a single
     node.
Hence $\lambda=0$ and $\lambda_\nc \le -{1}/{N_\nc} \le -{1}/(N-1)$,
     i.e.,~$(N-1)^2$ is an upper bound on the critical bound.
Since $\hat{g} = 1$, we obtain from Theorem~\ref{thm:nonexplorative},
     for $N\geqslant3$, that the termination time is at most
     $(N-1)^2$, which improves on the asymptotic quadratic bound given by Busch
and Tirthapura~\cite{BT05}.

If the undirected support of initial graph~$G_0$ without the self-loop
     at the destination nodes is a {\em tree}, we can use our bounds
     to give a new proof that the termination time of Full Reversal
     routing is linear in~$N$ \cite[Corollary~5]{fr:sirocco}.
In that particular case either $\lambda_\nc=-{1}/{2}$ or
     $\lambda_\nc=-\infty$.
In both cases the critical bound is at most $2(N-1)$.
Both Theorem~\ref{thm:nonexplorative} and
Theorem~\ref{thm:explorative:nonprimitive} yield the linear bound
     $2(N-1)$, whereas Hartmann and Arguelles arrive at $2N^2$.

\subsubsection{Full Reversal scheduling}

When using the Full Reversal algorithm for scheduling, the undirected
     support of the weakly connected initial graph~$G_0$ is
     interpreted as a conflict graph: nodes model processes and an
     edge between two processes signifies the existence of a shared
     resource whose access is mutually exclusive.
The direction of an edge signifies which process is allowed to use the
     resource next.
A process waits until it is allowed to use all its resources---that
     is, it waits until it is a sink---and then performs a step, that
     is, reverses all edges to release its resources.
To guarantee liveness, the initial graph~$G_0$ is required to be
     acyclic.

Contrary to the routing case, strongly connected components of the critical
subgraph 
     have at least two nodes, because there are no self-loops.
By using~\eqref{eq:int_bound}, we get ${N^2(N-1)}/{4}$ as an
     upper bound on our critical bound, which
     shows that the transient for Full
     Reversal scheduling is at most cubic in the number~$N$ of
     processes.
Malka and Rajsbaum \cite[Theorem~6.4]{malka:rajs} proved by reduction
     to Timed Marked Graphs that the transient is at most in the order
     of~$O(N^4)$.
Thus, our bounds allow to improve this asymptotic result by an
     order of $N$.

In the case of Full Reversal scheduling on {\em trees\/} we even
     obtain a bound linear in $N$: In this case it holds that
     $\lambda=-{1}/{2}$, and $\lambda_\nc=-\infty$.
Thus the critical bound is~$N$.
Further, $G_\cc = G$ and $\hat{g}=2$.
Both Theorem~\ref{thm:nonexplorative} and
Theorem~\ref{thm:explorative:nonprimitive} thus imply that $4N-3$ is an upper
     bound on the transient of Full Reversal scheduling on trees,
     which is linear in~$N$.
This was previously unknown.
By contrast Hartmann and Arguelles again obtain the quadratic bound
     of~$2N^2$.

\section*{Acknowledgments}

The authors would like to thank %
Fran\c{c}ois Baccelli, Anne Bouillard, Philippe Chr\'{e}tienne, and
     Serge{\u \i} Sergeev for helpful discussions.

\end{document}